\documentclass[a4paper,USenglish,cleveref, autoref, thm-restate]{lipics-v2021}
\nolinenumbers
\usepackage{graphicx}
\usepackage{multicol}

\newcommand{\para}[1]{\subparagraph*{#1}}
\newtheorem{fact}[theorem]{Fact}
\newtheorem{problem}[theorem]{Problem}
\crefname{lemma}{Lemma}{Lemmata}

\newcommand{\rev}{\overleftarrow}
\newcommand{\inv}{\overline}

\newcommand{\N}{\mathbb{N}}

\newcommand{\R}{\mathbb{R}}

\newcommand{\eps}{\varepsilon}

\newcommand{\I}{\mathcal{I}}
\newcommand{\J}{\mathcal{J}}
\newcommand{\X}{\mathcal{X}}

\newcommand{\floor}[1]{\left\lfloor{#1}\right\rfloor}

\newcommand{\xmid}{x_\mathsf{{mid}}}
\newcommand{\ymid}{y_\mathsf{{mid}}}

\newcommand{\last}{\mathsf{Last}}
\newcommand{\dist}{\mathsf{dist}}
\newcommand{\HDD}{\mathsf{HDD}}
\newcommand{\LCS}{\mathsf{LCS}}
\newcommand{\SETH}{\mathsf{SETH}}
\newcommand{\cost}{\mathsf{cost}}
\newcommand{\remove}{\mathsf{remove}}

\newcommand{\HC}{\mathsf{HCD}}

\newcommand{\IL}{I_\mathsf{L}}
\newcommand{\IR}{I_\mathsf{R}}
\newcommand{\PL}{P_\mathsf{L}}
\newcommand{\PR}{P_\mathsf{R}}
\newcommand{\SL}{\mathsf{Sync_L}}
\newcommand{\SR}{\mathsf{Sync_R}}

\newcommand{\LEFT}{\mathsf{left}}
\newcommand{\RIGHT}{\mathsf{right}}

\newcommand{\Pairs}{\mathsf{Pairs}}

\newcommand{\leftP}{\mathsf{left}^P}
\newcommand{\leftI}{\mathsf{left}^I}
\newcommand{\rightP}{\mathsf{right}^P}
\newcommand{\rightI}{\mathsf{right}^I}

\newcommand{\EL}{E_\mathsf{L}}
\newcommand{\ER}{E_\mathsf{R}}
\newcommand{\info}{\mathsf{info}}
\newcommand{\prot}{\mathsf{prot}}
\newcommand{\p}{\mathsf{p}}
\renewcommand{\i}{\mathsf{i}}
\newcommand{\sync}{\mathsf{sync}}

\renewcommand{\int}{\mathsf{int}}
\newcommand{\ext}{\mathsf{ext}}
\renewcommand{\middle}{\mathsf{mid}}

\newcommand{\per}{\mathsf{per}}

\newcommand{\len}{\mathsf{len}}

\newcommand{\Fib}{\mathsf{Fib}}

\newcommand{\match}{\mathsf{match}}
\newcommand{\mis}{\mathsf{mis}}

\crefname{enumi}{Step}{Steps}

\bibliographystyle{plainurl}

\title{Hairpin Completion Distance Lower Bound}
\author{Itai Boneh}{Reichman University and University of Haifa, Israel}{itai.bone@biu.ac.il}{https://orcid.org/0009-0007-8895-4069}{supported by Israel Science Foundation grant 810/21.}

 \author{Dvir Fried}{Bar Ilan University, Israel}{friedvir1@gmail.com}{https://orcid.org/0000-0003-1859-8082}{supported by ISF grant no. 1926/19, by a BSF grant 2018364, and by an ERC grant MPM under the EU's Horizon 2020 Research and Innovation Programme (grant no. 683064).}

  \author{Shay Golan}{Reichman University and University of Haifa, Israel}{golansh1@biu.ac.il}{https://orcid.org/0000-0001-8357-2802}{supported by Israel Science Foundation grant 810/21.}

    \author{Matan Kraus}{Bar Ilan Univesity, Israel}{matan3@gmail.com}{https://orcid.org/0000-0002-2989-1113}{supported by the ISF grant no. 1926/19, by the BSF grant 2018364, and by the ERC grant MPM under the EU's Horizon 2020 Research and Innovation Programme (grant no. 683064).}

\authorrunning{Boneh et al.}

\Copyright{Jane Open Access and Joan R. Public} 

\ccsdesc[500]{Theory of computation~Design and analysis of algorithms}

\keywords{Fine-grained complexity, Hairpin completion, LCS}

\category{} 
\relatedversion{}
\EventEditors{John Q. Open and Joan R. Access}
\EventNoEds{2}
\EventLongTitle{42nd Conference on Very Important Topics (CVIT 2016)}
\EventShortTitle{CVIT 2016}
\EventAcronym{CVIT}
\EventYear{2016}
\EventDate{December 24--27, 2016}
\EventLocation{Little Whinging, United Kingdom}
\EventLogo{}
\SeriesVolume{42}
\ArticleNo{23}

\hideLIPIcs  

\begin{document}

\maketitle

\begin{abstract}
Hairpin completion, derived from the hairpin formation observed in DNA biochemistry, is an operation applied to strings, particularly useful in DNA computing.
Conceptually, a right hairpin completion operation transforms a string $S$ into $S\cdot S'$ where $S'$ is the reverse complement of a prefix of $S$.
Similarly, a left hairpin completion operation transforms a string $S$ into $S'\cdot S$ where $S'$ is the reverse complement of a suffix of $S$.
The hairpin completion distance from $S$ to $T$ is the minimum number of hairpin completion operations needed to transform $S$ into $T$.
Recently Boneh et al.~\cite{BFMP23} showed an $O(n^2)$ time algorithm for finding the hairpin completion distance between two strings of length at most $n$.
In this paper we show that for any $\varepsilon>0$ there is no $O(n^{2-\varepsilon})$-time algorithm for the hairpin completion distance problem unless the Strong Exponential Time Hypothesis (SETH) is false.
Thus, under SETH, the time complexity of the hairpin completion distance problem is quadratic, up to sub-polynomial factors.
\end{abstract}
\clearpage
\setcounter{page}{1} 

\section{Introduction}\label{sec:intro}
Hairpin completion~\cite{CMM06}, derived from the hairpin formation observed in DNA biochemistry, is an operation applied to strings, particularly useful in DNA computing~\cite{KMT07,KKLST06,KKTS05,KLST06}.
Consider a sequences over an alphabet $\Sigma$ with involution $\mathsf{Inv}:\Sigma\to\Sigma$ assigning for every $\sigma\in\Sigma$ an inverse symbol $\inv\sigma$.
For a string $S\in\Sigma^*$, a left hairpin completion transforms $S$ into $\rev{S'}\cdot S$, where $S'$ is a suffix of $S$, and for any $X\in\Sigma^*$ we define $\rev{X}=\inv{X[|X|]}\cdot\inv{X[|X|-1]}\cdot\ldots\cdot\inv{X[1]}$.
This operation can only be applied under the restriction that the suffix $S'$ is preceded by the symbol $\inv{S[1]}$.
Similarly, a right hairpin completion transforms $S$ into $S\cdot\rev{S'}$ where $S'$ is a prefix of $S$ followed by $\inv{S[|S|]}$.

Several problems regarding hairpin completion were studied~\cite{MANEA10,MMM09,MMM10,MANEA09,BFMP23}.
In this paper, we consider the hairpin completion distance problem.
In this problem, we are given two strings $x$ and $y$ and our goal is to compute the minimum number of hairpin completion operations one has to apply on $y$ to transform $y$ into $x$, or to report that there is no sequence of hairpin completion operation can turn $y$ into $x$.
In 2009, Manea, Martín-Vide and Mitrana~\cite{MMM09} proposed the problem and introduced a cubic time $O(n^3)$ algorithm (where $n=|x|$).
Later Manea~\cite{MANEA10} introduced a faster algorithm that runs in $O(n^2\log n)$ time.
Recently, Boneh et al.~\cite{BFMP23} showed that the time complexity of the problem is $O(n^2)$.
Moreover, Boneh et al. posed the following open problem.
\begin{problem}\label{prob:lb}
    Can one prove a lower bound for hairpin completion distance computation that matches the $O(n^2)$ upper bound?
\end{problem}
In this paper, we show that for every $\eps > 0$, there is no $O(n^{2-\eps})$ time algorithm for computing the hairpin completion distance from $y$ to $x$, unless the Strong Exponential Time Hypothesis ($\SETH$)~\cite{CIP09} is false.
Thus, we provide a conditional lower bound matching the upper bound of \cite{BFMP23} up to sub-polynomial factors.
\begin{theorem}\label{thm:reduction}
    Let $\eps>0$.
    If there is an algorithm that computes the hairpin completion distance from $y$ to $x$ in $O(|x|^{2-\varepsilon})$ time, then $\SETH$ is false.
    This holds even if the input strings are over an alphabet of size $4$.
\end{theorem}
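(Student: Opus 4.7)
The plan is to reduce from the Orthogonal Vectors ($\mathsf{OV}$) problem: given two sets $A,B\subseteq\{0,1\}^d$ with $|A|=|B|=N$ and $d=\omega(\log N)$, decide whether there exist $a\in A$ and $b\in B$ with $\langle a,b\rangle=0$. Since $\mathsf{OV}$ requires $N^{2-o(1)}$ time under $\SETH$, it suffices to build, in near-linear time, strings $x$ and $y$ over a $4$-letter alphabet with $|x|=\tilde O(N)$, together with a threshold $T$, such that the hairpin completion distance from $y$ to $x$ is at most $T$ if and only if an orthogonal pair exists. An $O(|x|^{2-\eps})$ algorithm for hairpin completion distance would then yield an $O(N^{2-\eps})$ algorithm for $\mathsf{OV}$, contradicting $\SETH$. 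The $4$-letter alphabet is natural here, since it carries the fixed-point-free involution (DNA complementation) that hairpin completion is defined with respect to.

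The structural observation driving the construction is that after any sequence of $k$ hairpin operations starting from $y$, the final string factors as $L_k\cdots L_1\cdot y\cdot R_1\cdots R_m$ with $k+m$ blocks, where each block is the reverse complement of a prefix or suffix of the string present at the moment of its creation, subject to the complement-character boundary restriction defined in the introduction. Thus the hairpin completion distance equals the minimum number of reverse-complement blocks in a legal decomposition of $x$ around a central occurrence of $y$. I would design $x$ to contain vector gadgets for $A$ laid out on one side of $y$ and vector gadgets for $B$ on the other side, with separator and padding patterns, built using all four letters and their complements, that rigidly enforce that any single operation processes at most one vector gadget and that no operation can straddle gadget boundaries.

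The heart of the construction will be a coordinate gadget, analogous to those used in $\SETH$-based lower bounds for edit distance and $\LCS$, encoding a pair $(a_i,b_i)$ so that the $i$-th sub-block of a left-side vector gadget can be a legal reverse-complement extension of the $i$-th sub-block of a right-side vector gadget if and only if $a_i\cdot b_i=0$. A vector gadget concatenates $d$ coordinate gadgets with explicit markers, and $x$ is obtained by stringing $|A|$ vector gadgets on one side of $y$ and $|B|$ on the other. Completeness is witnessed by an explicit schedule of $T$ operations that uses the gadgets of an orthogonal pair as "one-shot" blocks and spends the remaining operations on the surrounding filler; soundness says that if no pair is orthogonal, then every candidate schedule must split at least one vector gadget into two or more blocks, pushing the distance above $T$.

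The main obstacle, as in most palindrome-flavored reductions, is controlling adversarial schedules. Because each operation appends a reverse complement of a prefix or suffix of the current intermediate string, the pool of available blocks grows dynamically with every operation; one must therefore rule out the possibility that a clever long suffix of $L_{j-1}\cdots L_1\cdot y\cdot R_1\cdots R_{m'}$ supplies a cheap reverse-complement block that bypasses the coordinate-level verification altogether. The remedy is to interleave the vector gadgets with frequent sentinel patterns that forbid any long reverse-complement factor from aligning outside the designated gadget boundaries, together with a parity- and length-based argument that pins each block to a fixed grid. Once these rigidity lemmas are established, a direct counting argument gives $|x|=\tilde O(N)$, and the $\SETH$-hardness of $\mathsf{OV}$ transfers to hairpin completion distance, completing the proof of \Cref{thm:reduction}.
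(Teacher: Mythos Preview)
Your proposal is a plan, not a proof: every piece of technical content is deferred. You correctly identify the central obstacle---that the pool of available reverse-complement blocks grows dynamically and an adversarial schedule might bypass the intended gadget boundaries---but ``sentinel patterns together with a parity- and length-based argument'' is a hope, not a construction. The paper's proof shows how delicate this control actually is: it needs explicit synchronizer gadgets with a dedicated lemma forcing every path to stop at their boundaries (\cref{lem:sync}), a nontrivial ``well-behaved path'' lemma (\cref{lem:maintech}) establishing that some shortest path respects mega-gadget boundaries, gadget repetition counts ($55,54,53,144$) chosen to be Fibonacci-regular so that the cost of a synchronized deletion is exactly computable (\cref{lem:FibDistance}), and a precise accounting showing that the savings from a synchronized deletion is a \emph{single} constant $B$ independent of the character. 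Your proposal contains no analogue of any of these, and in particular gives no gadget at all.

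There is also a structural mismatch. You reduce from $\mathsf{OV}$ directly, whereas the paper reduces from ternary $\LCS$ (already $\SETH$-hard by \cite{BK18}). The $\LCS$ route is substantially easier here because the atomic comparison is a single character equality $S[\ell]=T[r]$, which maps cleanly to one left mega-gadget against one right mega-gadget, and the total savings is simply $B\cdot \LCS(S,T)$. In your $\mathsf{OV}$ plan, a vector gadget must contain $d$ coordinate sub-gadgets, and the savings from pairing a left vector gadget with a right one must appear only when \emph{all} $d$ coordinate pairs are compatible; moreover you must forbid mixing coordinates from different vector pairs. Enforcing this conjunction inside the hairpin model---where each operation only checks a local reverse-complement match---essentially requires rebuilding an alignment-gadget layer on top of the coordinate gadgets, which you do not even sketch. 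Going through $\LCS$, as the paper does, lets you inherit that layer for free.
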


We note that due to the relationship between hairpin operations and DNA biochemistry, a typical output for a hairpin-related problem is over the alphabet $\{ A,C,G,T\}$ of size $4$.
Hence, our lower bound applies to a natural set of practical inputs.

\cref{thm:reduction} is proven by reducing Longest Common Subsequence ($\LCS$) problem to the hairpin completion distance problem.
Namely, for two ternary strings $S$ and $T$, we show a linear time construction of a pair of strings $x$ and $y$ such that $\LCS(S,T)$ can be computed in linear time from the hairpin completion distance from $y$ to $x$.
The hardness of hairpin completion computation follows from the conditional lower bound on the $\LCS$ problem~\cite{BK18,ABVW15}.
We refer the reader to \cref{sec:reduction} where we introduce the reduction and to \cref{sec:overview} where we provide a high-level discussion regarding the correctness of our construction.

\section{Preliminaries}\label{sec:prelim}

For $i,j\in\N$ let $[i..j] =\{k \in \N \mid i\le k \le j \}$.
We denote $[i]=[1..i]$.

A string $S$ over an alphabet $\Sigma$ is a sequence of characters $S=S[1]S[2]\ldots S[|S|]$.
For $i,j\in[|S|]$, we call $S[i..j]=S[i]S[i+1]\ldots S[j]$ a \emph{substring} of $S$. If $i=1$, $S[i..j]$ is a prefix of $S$, and if $j=|S|$, $S[i..j]$ is a suffix of $S$.
Let $x$ and $y$ be two strings over an alphabet $\Sigma$.
$x\cdot y$ is the concatenation of $x$ and $y$.
For strings $x_1,x_2, \ldots x_m$, we denote as $\bigodot_{i=1}^{m} = x_1 \cdot x_2 \cdot \ldots \cdot x_m$.
For a string $x$ and $k\in\N$ we write the concatenation of $x$ to itself $k$ times as $x^k$.
For a symbol $\sigma \in \Sigma$, we denote as $\#_\sigma(x) =| \{i \in [|x|] \mid x[i] = \sigma \}|$ the number of occurrences of $\sigma$ in $x$.
We say that a string $y$ occurs in $x$ (or that $x$ contains an occurrence of $y$) if there is an index $i\in [|x| - |y| + 1]$ such that $x[i..i+|y| -1] = y$.

For two sets of strings $\mathcal{S}$ and $\mathcal{T}$, we define the set of strings $\mathcal{S} * \mathcal{T} = \{ s\cdot x \cdot t \mid s\in \mathcal{S} , x\in \Sigma^*,t\in \mathcal{T}\}$.
We use the notations $\mathcal{S} * = \mathcal{S} * \Sigma^*$ and $* \mathcal{S} = \Sigma^* * \mathcal{S}$.
When using $*$ notation, we sometimes write $s \in \Sigma^*$ to denote the set $\{s\}$ (for example, $0*$ is the set of all strings starting with $0$).

\para{Hairpin Operations.}
Let $\mathsf{Inv}: \Sigma \to \Sigma$ be a permutation on $\Sigma$.
We say that $\mathsf{Inv}$ is an \textit{inverse function} on $\Sigma$ if $\mathsf{Inv} = \mathsf{Inv}^{-1}$ and $\mathsf{Inv}(\sigma) \neq \sigma$ for every $\sigma \in \Sigma$.
Throughout this paper (except for \cref{app:equivalence}), we discuss strings over alphabet $\Sigma = \{ 0,1\}$ with $\mathsf{Inv}(\sigma) = 1-\sigma$.
For every symbol in $\sigma$, we denote $\inv\sigma = \mathsf{Inv}(\sigma)$.
We further extend this notation to strings by denoting $\inv x = \inv{x[1]}\cdot \inv{x[2]} \cdot \ldots \cdot \inv{x[|x|]}$.
We denote $\rev{x} = \inv{x}[|x|] \cdot \inv{x}[|x|-1] \cdot \ldots \cdot \inv{x}[2] \cdot \inv{x}[1]$.

We define several types of hairpin operations that can be applied to a string over $\Sigma$ with an inverse function on $\Sigma$.
In \cite{CMM06}, hairpin operations are defined as follows.
\begin{definition}[Hairpin Operations]\label{def:hairpin}
    Let $S\in\Sigma^*$.
    A right hairpin completion of length $\ell \in [|S|]$ transforms $S$ into $S\cdot\rev{S[1..\ell]}$.
    A right hairpin completion operation of length $\ell$ can be applied on $S$ only if $S[\ell +1 ] = \inv{S[|S|]}$.
    Similarly, a left hairpin completion of length $\ell$ transforms $S$ into $\rev{S[|S|-\ell+1..|S|]}\cdot S$.
    A left hairpin completion of length $\ell$ can be applied to $S$ only if $S[|S|-\ell]  = \inv{S[1]}$.
    A right (resp. left) hairpin deletion operation of length $\ell \in [\floor{\frac{|S|}{2}}]$ transforms a string $S$ into a prefix (resp. suffix) $S'$  of $S$ such that $S$ can be obtained from $S'$ by a valid right (resp. left) hairpin completion of length $\ell$.
\end{definition}

Throughout this paper, we use the following modified definition of hairpin operation, which removes the constraints regarding $S[1]$ and $S[|S|]$.

\begin{definition}[Hairpin Operations, Modified definition]\label{def:mod_hairpin}
    Let $S\in\Sigma^*$.
    A right hairpin completion of length $\ell \in [|S|]$ transforms $S$ into $S\cdot\rev{S[1..\ell]}$.
    Similarly, a left hairpin completion of length $\ell$ transforms $S$ into $\rev{S[|S|-\ell+1..|S|]}\cdot S$.
    A right (resp. left) hairpin deletion of length $\ell \in [\floor{\frac{|S|}{2}}]$ operation transforms a string $S$ into a prefix (resp. suffix) $S'$  of $S$ such that $S$ can be obtained from $S'$ by a valid right (resp. left) hairpin completion of length $\ell$.
\end{definition}
We highlight that the modified definition is \textit{not} equivalent to the definition of \cite{CMM06}.
Even though the paper is phrased in terms of the modified definition, we emphasize that \cref{thm:reduction} is correct with respect to both definitions.
In \cref{sec:reduction}, we discuss the machinery required to make our hardness result applicable to \cref{def:hairpin}. The complete details for bridging this gap are developed in \cref{app:equivalence}.

Let $x$ and $y$ be two strings.
We denote by $\HDD(x,y)$ (resp. $\HC(x,y)$) the minimum number of hairpin deletion (resp. completion) operations required to transform $x$ into $y$, counting both left and right operations.
Note that $\HDD(x,y) = \HC(y,x)$.

For the sake of analysis, we define the following graph.
\begin{definition}[Hairpin Deletion Graph]\label{def:graph}
    For a string $x$ the \emph{Hairpin Deletion Graph} $G_x=(V,E)$ is defined as follows.
   $V$ is the set of all substrings of $x$, and $(u,v)\in E$ if $v$ can be obtained from $u$ in a single hairpin deletion operation. 
\end{definition}
We define the distance between two vertices $s$ and $t$ in a graph $G$ (denoted as $\dist_G(s,t)$) to be the minimal length (number of edges) of a path from $s$ to $t$ in $G$ (of $\infty$ if there is no such path).
Note that for a source string $x$ and a destination string $y$, it holds that $\HDD(x,y) = \dist_{G_x}(x,y)$.
We distinguish between two types of edges outgoing from $x[i..j]$.
An edge of the form $x[i..j]\to x[i+\ell..j]$ for some $\ell\in\N$ is called a \emph{left edge} and it corresponds to a left hairpin deletion operation of length $\ell$.
Similarly, an edge of the form $x[i..j]\to x[i..j-\ell]$ for some $\ell\in\N$ is called a \emph{right edge} and it corresponds to a right hairpin deletion operation of length $\ell$.
When a path $p$ in $G_x$ traverses a left (resp. right) edge outgoing from $v$, we say that $p$ applies a left (resp. right) hairpin deletion to $v$.
For a path $p$ we denote by $\cost(p)$ the length of $p$. 

\para{Hairpin Deletion.}
Since the paper makes intensive use of hairpin deletion notations, we introduce an alternative, more intuitive definition for hairpin deletion, equivalent to \cref{def:mod_hairpin}.
For a string $S$, if for some $\ell \in [\floor{\frac{|S|}{2}}]$ we have $S[1..\ell]=\rev{S[|S|-\ell+1..|S|]}$ then a left (resp. right) hairpin deletion operation transforms $S$ into $S[\ell+1..|S|]$  (resp. $S[1.. |S|-\ell]$).
In particular, if $S[1]\ne\rev{S[|S|]}$ then there is no valid hairpin deletion operation on $S$.

\para{Longest Common Subsequence.} 
A subsequence of a string $S$ of length $n$ is a string $X$ of length $\ell$ such that there is an increasing sequence $1\le i_1<i_2< \ldots i_\ell \le n$ satisfying $X[k] = S[i_k]$ for every $k\in [\ell]$.
For two strings $S$ and $T$, a string $X$ is a common subsequence of $S$ and $T$ if $X$ is a subsequence of both $S$ and $T$.
The $\LCS$ problem is, given two strings $S$ and $T$  of length at most $n$, compute the maximum \emph{length} of a common subsequence of $S$ and $T$, denoted as $\LCS(S,T)$.

Bringmann and K{\"{u}}nnemann~\cite{BK18} have shown the following.
\begin{fact}[Hardness of $\LCS$]\label{fact:LCSisHard}
    For every $\eps > 0$, there is no $O(n^{2-\eps})$-time algorithm that solves the $\LCS$ problem for ternary input strings unless $\SETH$ is false.
\end{fact}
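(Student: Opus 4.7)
The plan is to derive \cref{fact:LCSisHard} as a corollary of the well-known $\SETH$-hardness of the Orthogonal Vectors problem (OV) via a fine-grained reduction to $\LCS$ that preserves both near-linear time and a ternary alphabet. The pipeline has three stages: (i) $\SETH \to$ OV, (ii) OV $\to$ $\LCS$ over a small constant alphabet, and (iii) an alphabet-reduction step bringing the alphabet down to three symbols.

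Stage (i) is classical. Given a $k$-SAT instance on $N$ variables and $M = O(N)$ clauses, I would split the variables into two halves, enumerate the $2^{N/2}$ assignments on each side, and, for each assignment $a$, produce a vector $v(a) \in \{0,1\}^M$ whose $j$th coordinate records whether $a$ satisfies the $j$th clause. Satisfiability is equivalent to the existence of a pair of orthogonal vectors drawn from the two resulting sets, giving an OV instance with $n = 2^{N/2}$ vectors in dimension $d = \mathrm{polylog}(n)$, and the standard $\SETH$-to-OV reasoning yields an $n^{2-o(1)}$ lower bound for OV.

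Stage (ii) is the heart of the matter and is where I would design a three-layer gadget hierarchy. At the bottom sits a \emph{coordinate gadget} mapping two bits $a,b$ to short strings whose $\LCS$ equals a fixed constant that drops by exactly one iff $a = b = 1$. A \emph{vector gadget} then concatenates the $d$ coordinate gadgets with guard blocks chosen so that the $\LCS$ of two vector gadgets equals a deterministic offset minus the inner product $\langle u, v \rangle$; in particular it is maximized precisely when $u \perp v$. Finally, the full strings $S$ and $T$ concatenate the vector gadgets of the two sides, separated by heavy anchors whose role is to force any near-optimal common subsequence to pair up the vector gadgets one-to-one and align inside each pair in the intended coordinate-by-coordinate manner. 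With the constants correctly balanced, there is a threshold $\tau$ such that $\LCS(S,T) \ge \tau$ iff some pair of vectors is orthogonal.

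The main obstacle, as in all such reductions, is proving the \emph{rigidity} of the composition: any common subsequence of $S$ and $T$ whose length is within one of the target $\tau$ must respect the anchors and therefore decomposes into a union of gadget-level alignments from which an orthogonal pair can be read off. I would handle this via an exchange argument that normalizes an arbitrary common subsequence, without loss of length, into one that respects the guards; the constants in the anchors must dominate any savings that could arise from crossing blocks, which is where the accounting becomes delicate. Stage (iii), reducing the alphabet to three, is mechanical once rigidity is in place: replace each of the few symbols used by short ternary codewords chosen so that no cross-codeword match contributes more to the $\LCS$ than the intended matches. A hypothetical $O(n^{2-\eps})$-time algorithm for ternary $\LCS$ would then yield an OV algorithm contradicting $\SETH$.
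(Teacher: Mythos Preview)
The paper does not prove \cref{fact:LCSisHard} at all: it is stated as a \emph{Fact} and attributed to Bringmann and K\"unnemann~\cite{BK18} (with~\cite{ABVW15} also cited in the introduction). So there is no ``paper's own proof'' to compare against; the authors simply import the result as a black box.

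Your outline is essentially the approach taken in those cited works, so as a sketch it is on target. Two small corrections are worth flagging. First, in Stage~(i) you have the encoding inverted: if $v(a)_j$ records whether the partial assignment $a$ \emph{satisfies} clause~$j$, then a satisfying full assignment corresponds to a pair with no coordinate where \emph{both} vectors are~$0$, which is not orthogonality. The standard encoding sets $v(a)_j = 1$ iff $a$ \emph{fails} to satisfy clause~$j$, so that orthogonality (no coordinate where both are~$1$) is exactly satisfiability. Second, $M = O(N)$ is not automatic for $k$-SAT; you either invoke the Sparsification Lemma or simply allow $M = \mathrm{poly}(N)$, which still gives $d = \mathrm{polylog}(n)$ since $n = 2^{N/2}$. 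Neither issue is fatal to the plan, but both would need to be stated correctly in an actual proof. The rigidity argument you describe for Stage~(ii) is indeed the delicate part, and getting the constants to work over a \emph{ternary} alphabet (rather than, say, size~$7$) is precisely the contribution of~\cite{BK18} over earlier work; your Stage~(iii) codeword-replacement idea is not how they do it, and making that step airtight is less mechanical than you suggest.
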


\para{Fibonacci sequence.} The Fibonacci sequence is defined as follows.
$\Fib(0)=1$, $\Fib(1)=1$ and for all integer $i>1$ we have $\Fib(i)=\Fib(i-1)+\Fib(i-2)$.
The inverse function $\Fib^{-1} : \R \to \N $ is defined as 
$\Fib^{-1}(x) = \min\{y\in\N \mid \Fib(y) \ge x\}$.

\section{The Reduction}\label{sec:reduction}
Here we introduce a reduction from the $\LCS$ problem on ternary strings. 
We also provide in \cref{sec:overview} a high-level discussion of why the reduction should work.

We present a linear time algorithm such that given two strings $S, T\in\{0,1,2\}^*$,  constructs two (binary) strings $x$ and $y$ with $|x|=O(|S|+|T|)$ and $|y|=O(1)$.
The strings $x$ and $y$ have the property that $\HDD(x, y)$ can be used to infer $\LCS(S,T)$ in linear time.
Thus, by \cref{fact:LCSisHard}, we deduce that any algorithm computing $\HDD(x,y)$ cannot have running time $O(|x|^{2-\eps})$ for any $\eps>0$ (assuming $\SETH$).

We use several types of gadgets.
Let:
\begin{multicols}{2}

\begin{itemize}
    \item $\IL(0)=(010^3)^{\i_0}$
    \item $\IL(1)=(010^5)^{\i_1}$
    \item $\IL(2)=(010^7)^{\i_2}$
    \item $\PL=(010^9)^{\p}$
    \item $\SL=01$
    \item $\IR(0)=(\inv{0^310})^{\i_0}=\rev{\IL(0)}$
    \item $\IR(1)=(\inv{0^510})^{\i_1}=\rev{\IL(1)}$
    \item $\IR(2)=(\inv{0^710})^{\i_2}=\rev{\IL(2)}$
    \item $\PR=(\inv{0^910})^{\p}=\rev{\PL}$
    \item $\SR=\inv{010}$
\end{itemize}
\end{multicols}

with $\i_0=55$, $\i_1=54$, $i_2=53$ and $\p=144$.
We call $\IL(0)$, $\IL(1)$ and $\IL(2)$ left \emph{information} gadgets and $\IR(0)$, $\IR(1)$ and $\IR(2)$ right information gadgets. 
We say that $\IL(\alpha)$ and $\IR(\beta)$ \emph{match} if $\alpha=\beta$ or \emph{mismatch} otherwise.
$\PL$ and $\PR$ are called left and right \emph{protector} gadgets, respectively.
$\SL$ and $\SR$ are called  left and right \emph{synchronizer} gadgets, respectively.
We say that two gadgets $g_1,g_2$ are \emph{symmetric} if $g_2=\rev{g_1}$. 
Specifically, $(\IL(0),\IR(0))$, $(\IL(1),\IR(1))$, $(\IL(2), \IR(2))$ and $(\PL,\PR)$ are the pairs of symmetric gadgets.
We emphasize that $\SL$ and $\SR$ are \textit{not} symmetric.

Using the gadgets above, we define $6$ mega gadgets encoding characters from $S$ and $T$.
For $\alpha\in\{0,1,2\}$ we define 
\[
\EL(\alpha)=\PL\cdot\SL\cdot\IL(\alpha)\cdot\SL
\text{ \hspace{10px} and  \hspace{10px} }
\ER(\alpha)=\SR\cdot \IR(\alpha)\cdot \SR\cdot \PR.
\]
Finally, we define $y=\boxed {\PL\cdot\SL\cdot 01\cdot11\inv{11}\cdot\inv{10}\cdot\SR\cdot\PR}$ and $x=\left(\bigodot_{i=1}^{|S|}\EL(S[i])\right) \cdot y\cdot \left(\bigodot_{i=|T|}^{1}\ER(T[i])\right)$. 
Note that on the suffix of $x$ we concatenate the elements of $T$ in \emph{reverse} order.

In the remainder of this paper, we only use '$x$' and '$y$' to refer to the strings defined above. 
We define notations for indices in $x$ which are endpoints of protector and information gadgets as follows.
For $\ell\in[|S|+1]$, let $\leftP_\ell=1+\sum_{j=1}^{\ell-1}|\EL(S[j])|$ be the leftmost index of the $\ell$th $\PL$ gadget (from the left) in $x$.
Notice that $\leftP_{|S|+1}$ corresponds to the left $\PL$ gadget contained in $y$.
For $r\in[|T|+1]$, let $\rightP_r=|x|-\sum_{j=1}^{r-1}|\ER(T[j])|$ be the rightmost index of the $r$th $\PR$ gadget (from the right) in $x$.
Notice that $\rightP_{|T|+1}$ corresponds to the right $\PR$ gadget contained in $y$.
For $\ell\in[|S|]$, let $\leftI_\ell=\leftP_\ell+|\PL|+|\SL|$ be the leftmost index of the $\ell$th information gadget (from the left) in $x$.
For $r\in[|T|]$, let $\rightI_r=\rightP_r-|\PR|-|\SR|$ be the rightmost index of the $r$th information gadget (from the right) in $x$.

The rest of the paper is dedicated for proving the following property of $x$ and $y$.
\begin{restatable}[Reduction Correctness]{lemma}{lemreduction}

\label{lem:reduction}
For some constants $D(0),D(1),D(2)$ and $B$ we have:
    $\HDD(x,y) =\sum_{\alpha\in\{0,1,2\}} D(\alpha)(\#_\alpha(S)+\#_\alpha(T)) - \LCS(S,T) \cdot B$.
\end{restatable}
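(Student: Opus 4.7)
The plan is to establish matching upper and lower bounds on $\HDD(x,y)$ through a combinatorial interpretation of the quantity. The intuitive reading of the target formula is that every character of $S$ and $T$ contributes a ``base cost'' $D(\alpha)$ to the distance, and each letter in a common subsequence of $S$ and $T$ buys back a uniform ``savings'' of $B$ operations by allowing a single hairpin deletion to simultaneously consume a left gadget $\EL(\alpha)$ and a matching right gadget $\ER(\alpha)$. Thus proving \cref{lem:reduction} reduces to proving, for some suitable choice of $D(0),D(1),D(2),B$, that (i) any optimal LCS-alignment of $S$ and $T$ yields an explicit deletion sequence of the claimed length, and (ii) no deletion sequence can beat this, because any ``savings-producing'' deletion induces a valid common-subsequence alignment.

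For the upper bound, I would fix an optimal common subsequence of $S$ and $T$ witnessed by monotone index sequences $i_1<\cdots<i_k$ in $S$ and $j_1<\cdots<j_k$ in $T$ with $S[i_t]=T[j_t]$. Since $\EL(\alpha)$ and $\ER(\alpha)$ have reverse-complement information gadgets (and matching protector gadgets), I would build the deletion sequence from the outside in: starting from the extremal unmatched mega gadgets, perform one-sided left (resp.\ right) hairpin deletions to peel them off, each costing $D(S[i])$ (resp.\ $D(T[j])$) operations; at each matched pair $(i_t,j_t)$ in the LCS, use a single ``combined'' hairpin deletion that exploits the reverse-complement symmetry of the $\IL(\alpha),\IR(\alpha)$ pair to cancel both sides at once, saving exactly $B$ operations relative to deleting them independently. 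Summing over all positions and accounting for $y$ as a terminal state yields the claimed total.

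For the lower bound I would translate the problem into finding a shortest path in the Hairpin Deletion Graph $G_x$ from $x$ to $y$. The main tool is a set of structural invariants about what substrings can appear along any path:
\begin{enumerate}
\item A left hairpin deletion must match a prefix with the reverse complement of a suffix, and the lengths of the protector gadgets $\PL,\PR$ (with $\p=144$ repetitions of $010^9$) are designed to be strictly larger than any ``fake'' internal match between unrelated gadgets, so no single deletion can cross more than one mega-gadget boundary on each side.
\item The synchronizer gadgets $\SL,\SR$ force alignment of any deletion with the information-gadget boundaries; combined with the fact that $\IL(\alpha),\IR(\beta)$ match only if $\alpha=\beta$, this prevents cross-character ``matchings'' from yielding savings.
\item Consequently, the only operations that gain a $B$ savings are those pairing an $\IL(\alpha)$ and an $\IR(\alpha)$ with $\alpha$ equal, and the order in which these deletions can be applied respects the outside-in sequential access order of $S$ (left-to-right) and $T$ (right-to-left reversed back to left-to-right), giving a monotone matching between positions in $S$ and $T$ with equal characters --- that is, a common subsequence. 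So the total savings is bounded by $B\cdot\LCS(S,T)$.
\end{enumerate}
Combined with a base cost of $\sum_\alpha D(\alpha)(\#_\alpha(S)+\#_\alpha(T))$ for completely removing all mega gadgets, this yields the matching lower bound.

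The main obstacle is invariant (1): proving that the protector/information gadget periodicities rule out ``creative'' deletions that cross several gadgets at once or exploit coincidental overlaps between the different gadget types. The Fibonacci-style parameter choices ($\i_0=55,\i_1=54,\i_2=53,\p=144$) and the distinct run-lengths $0^3,0^5,0^7,0^9$ strongly suggest the proof proceeds by periodicity arguments: each gadget's primitive period is sufficiently distinct that a prefix/suffix match of the length required by a hairpin deletion can only occur when the prefix and suffix lie in symmetric gadgets of the same type. Making this rigorous --- and in particular controlling the hybrid case where a deletion's endpoints lie inside a gadget rather than at its boundary --- is where the bulk of the technical work will go; the rest of the paper (and the overview in \cref{sec:overview}) is where I would expect those periodicity lemmas and a careful case analysis of paths in $G_x$ to be carried out.
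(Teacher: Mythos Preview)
Your two-sided strategy is right in outline, but the mechanism you propose for the savings is wrong, and this error undermines both directions.

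A hairpin deletion removes characters from only \emph{one} end of the string (a prefix or a suffix), so there is no ``single combined hairpin deletion'' that can ``cancel both sides at once.'' The actual source of the constant $B$ is subtler. When $S[\ell]=T[r]=\alpha$, the two matching mega-gadgets are removed by an \emph{alternating sequence} of left and right deletions whose lengths grow like Fibonacci numbers: each deletion on one side can match at most the complementary material currently exposed on the other side, so both protectors ($\p=144$ periods each) disappear together in $\Fib^{-1}(\p)=11$ steps, and then both information gadgets in about $\Fib^{-1}(\i_\alpha)=9$ more (plus constant overhead), giving $D_{\sync}(\alpha)=31-2\alpha$. In contrast, a non-synchronized removal of a \emph{single} mega-gadget already costs $D(\alpha)=\i_\alpha+2\approx 55$: the left protector goes in one step (mirroring the intact right protector), but then the left information gadget faces a right protector of a different period, so each further deletion removes at most one `$1$'. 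Thus $B=2D(\alpha)-D_{\sync}(\alpha)=83$; crucially, the parameters are tuned so that this difference is \emph{independent of} $\alpha$, which is not automatic and is the reason for the particular choices $\i_0=55,\i_1=54,\i_2=53$.

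Consequently your upper-bound path does not have the cost you claim, and your lower-bound invariants are aimed at the wrong target. The real technical work is (a) proving that some optimal path is \emph{well-behaved}, i.e.\ visits the ``checkpoint'' vertices $x[\leftP_\ell..\rightP_r]$ and moves between them by incrementing $\ell$, $r$, or both (\cref{lem:maintech}), so that the path decomposes into independent phases; and (b) computing the exact optimal cost of each phase, which for the synchronized case requires the Fibonacci lower-bound analysis of \cref{lem:FibDistance} and the notion of a Fibonacci-regular number. Your invariant (1), that no single deletion crosses multiple gadgets, is closer to a corollary of the synchronizer lemma than to the heart of the argument, and your invariant (3) presumes the savings are attributable to individual operations, whereas in fact they arise from the \emph{rate} at which an entire phase proceeds.
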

Note that $\#_\alpha(S)$ and $\#_\alpha(T)$ can be easily computed for all values of $\alpha$ in linear time.
Therefore, if $\HDD(x,y)$ can be computed in $O(n^{2-\eps})$ time for some $\eps > 0$, $\LCS$ can be computed in $O(n+ n^{2-\eps})$ (the values of the constants are fixed in the proof).
Since $\HDD(x,y) = \HC(y,x)$, hairpin deletion and hairpin completion distance are computationally equivalent.
Recall that $\HDD(x,y)$ refers to the \textit{modified} hairpin deletion distance (\cref{def:mod_hairpin}).
In order to bridge the gap to the original definition of hairpin deletion distance, we provide a linear time construction of strings $x'$ and $y'$ such that $\HDD'(x',y')=\HDD(x,y)$ in \cref{app:equivalence}. 
Here, $\HDD'(x',y')$ denotes the hairpin deletion distance from $x'$ to $y'$ as defined in \cref{def:hairpin}.
It clearly follows from this construction and the above discussion that $\HDD'(x',y')$ can not be computed in $O(n^{2-\eps})$, unless SETH is false.
Thus, proving \cref{thm:reduction}.

\subsection{Intuition for the Reduction Correctness}\label{sec:overview}
We provide some high-level discussion regarding the correctness of the construction.
First, notice that $y$ has a single occurrence in $x$.
Therefore, a sequence of hairpin deletion operations transforming $x$ into $y$ has to delete all mega gadgets.
Consider an intermediate step in a deletion sequence in which the substring $x[i..j]$ is obtained such that $i$ is the leftmost index of some left gadget $g_i$ and $j$ is the rightmost index of some right gadget $g_j$.

If $g_i$ and $g_j$ are not symmetric, the next hairpin deletion would not be able to make much progress.
This is due to the $1$ symbols in $g_i$ and the $\overline{1}$ symbols in $g_j$ being separated by a different number of $0$'s and $\overline{0}$'s.
For the goal of minimizing the number of deletions for removing all mega gadgets, this is a significant set-back, as either $g_i$ or $g_j$ would have to be removed using roughly $\#_1(g_i)$ (or $\#_{\overline{1}}(g_j)$) deletions.

Now consider the case in which $g_i$ and $g_j$ are symmetric to each other.
In this case, either one of them can be deleted using a single hairpin deletion. 
However, note that greedily removing $g_i$ will put us in the asymmetric scenario.
Notice that there is another possible approach for deleting symmetric gadgets - a synchronized deletion.
In this process, $g_i$ and $g_j$ are both deleted gradually.
One can easily figure out a way to apply such synchronized deletion using roughly $\log (\#_1(g_i))$ steps.

Think of a scenario in which $i=\leftP_\ell$ and $j=\rightP_r$ for some $\ell$ and $r$, i.e. $i$ and $j$ are a leftmost index and a rightmost index of left and right mega gadgets $m_\ell$ and $m_r$, respectively.
Initially, both $i$ and $j$ are in the beginning of protector gadgets $p_\ell$ and $p_r$.
If $m_\ell$ and $m_r$ are mega gadgets corresponding to the same symbol $\alpha$, the protector gadgets and the information gadgets of $m_\ell$ and $m_r$ are symmetric to each other. 
It is therefore very beneficial to remove the mega gadgets in a synchronized manner.
The event in which the $\ell$'th left mega gadget and the $r$'th right mega gadgets are deleted in a synchronized manner corresponds to $S[\ell]$ and $T[r]$ being matched by the longest common subsequence.

Now, consider the case in which $m_\ell$ and $m_r$ do not match i.e. $S[\ell] \neq T[r]$.
Deleting the protectors in a synchronized manner would not yield much benefit in this scenario, since the information gadgets are not symmetric, and therefore would have to be removed slowly.
In this case, since an inefficient deletion of an information gadget is inevitable, it is more efficient to delete one of the protectors gadgets using a single deletion operation, and proceed to delete the following information gadget inefficiently.
This would result in either the left mega gadget being deleted, or the right one.
The event in which the $\ell$'th left (resp. $r$'th right) mega gadget is deleted in a non synchronized manner corresponds to $S[\ell]$ (resp. $T[r]$) not being in the longest common subsequence.
The gadgets are designed in a way such that deleting mega gadgets in a synchronized way is faster than deleting each mega gadget in a non-synchronized way.
Furthermore, the cost reduction of a synchronized deletion over a non-synchronized deletion is a constant number $B$.
Therefore, by selecting $D(\alpha)$ as the cost of deleting a mega gadget corresponding to the symbol $\alpha$ in a non-synchronized way, one obtains \cref{lem:reduction}.

The above discussion makes an implicit assumption that the sequence of deletion is applied in \textit{phases}.
Each phase starts with $x[i..j]$ such that $i$ and $j$ are edge endpoints of mega gadgets and proceeds to either delete both in a synchronized manner or one in a non-synchronized manner.
In order to show that $\HDD(x,y)$ is at most the term in \cref{lem:reduction}, this is sufficient since we can choose a sequence of deletion with this structure as a witness.
In order to show that $\HDD(x,y)$ is at least the expression in \cref{lem:reduction}, one has to show that there is an optimal sequence of deletions with this structure.
This is one of the main technical challenges in obtaining \cref{thm:reduction}.

In a high level, the $\sync$ gadgets function as `anchors' that force any sequence of deletions to stop in their proximity.
Another key property of our construction that enforces the `phases' structure is the large size of a protector relatively to the information.
Intuitively, an optimal sequence would always avoid deleting a protector gadget inefficiently, so if a left protector is deleted using a right protector, left deletions would continue to occur until the next left protector is reached. 

In \cref{sec:well}, we provide the formal definition for a well-structured sequence and prove that there is an optimal sequence of deletions with this structure.
In \cref{sec:costSteps} we provide a precise analysis of every phase in a well-structured sequence.
In \cref{sec:correctness} we put everything together and prove \cref{lem:reduction}.

\section{Well-Behaved Paths}\label{sec:well}
We start by defining a well-behaved path.
\begin{definition}[Well-Behaved Path]\label{def:wellbehavedpath}
A path $p$ from $x$ to $y$ in $G_x$ is well-behaved if for every $\ell \in [|S|+1]$ and $r \in [|T|+1]$, if $p$ visits $x[\leftP_\ell..\rightP_r]$, one of the following vertices is also visited by
     $p$:
      $x[\leftP_{\ell+1} .. \rightP_{r}]$, $x[\leftP_\ell .. \rightP_{r+1}]$, or $x[\leftP_{\ell+1} .. \rightP_{r+1}]$.
      If one of $\ell+1$ and $r+1$ is undefined, the condition is on the subset of defined vertices. 
      If both are undefined, the condition is considered satisfied.
\end{definition}

This section is dedicated to proving the following lemma.
\begin{restatable}[Optimal Well-Behaved Path]{lemma}{lemmaintech}
\label{lem:maintech}
There is a shortest path from $x$ to $y$ in $G_x$ which is well-behaved.
\end{restatable}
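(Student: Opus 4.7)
The plan is to take any shortest path $p$ from $x$ to $y$ in $G_x$ and modify it via local exchanges into a well-behaved shortest path. The key structural properties I would exploit are monotonicity (every hairpin deletion weakly increases the left endpoint and weakly decreases the right endpoint, so neither endpoint ever retreats), the large size $10\cdot\p$ of the protector gadgets compared to the synchronizers and information gadgets, and the highly periodic structure of $\PL$ and $\PR$ which severely restricts which deletion lengths are valid near protector boundaries.

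Assume for contradiction that no shortest path is well-behaved, and among all shortest paths choose one $p$ minimizing the number of bad checkpoint pairs, where a pair $(\ell,r)$ is bad if $p$ visits $x[\leftP_\ell..\rightP_r]$ but avoids all three canonical successors from \cref{def:wellbehavedpath}. Fix such a bad pair, reached by $p$ at step $t$, and let $t'$ be the first step after $t$ at which the left endpoint exceeds $\leftP_{\ell+1}$ or the right endpoint drops below $\rightP_{r+1}$. By monotonicity, during the interval $(t,t')$ the current substring is $x[i..j]$ with $i\in[\leftP_\ell,\leftP_{\ell+1}]$ and $j\in[\rightP_{r+1},\rightP_r]$, and avoids all canonical vertices. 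Without loss of generality the move at step $t'$ is a left deletion of some length $d$ carrying the left endpoint from $i\le\leftP_{\ell+1}$ to $i+d>\leftP_{\ell+1}$; I would then prove the core surgical claim that $d$ can always be written as $d_1+d_2$ where $d_1=\leftP_{\ell+1}-i$, and both the length-$d_1$ left deletion from $x[i..j]$ and the subsequent length-$d_2$ left deletion from $x[\leftP_{\ell+1}..j]$ are valid. Once this is established, replacing the single step with the two-step sequence produces a new path of length at most that of $p$ which visits $x[\leftP_{\ell+1}..j]$; this yields either a strictly shorter path (contradicting shortness) or an equally short path with one fewer bad pair (contradicting the choice of $p$), unless $x[\leftP_{\ell+1}..j]$ is itself a canonical successor, in which case badness is contradicted directly.

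The main obstacle is proving the surgical claim via a periodicity analysis at the boundary between the $\ell$-th and $(\ell{+}1)$-th left mega gadgets. Validity of the length-$d$ deletion requires $x[i..i+d-1]=\rev{x[j-d+1..j]}$; since the removed prefix crosses the $\PL\cdot\SL\cdot\IL(\cdot)\cdot\SL\cdot\PL$ region, this equality is extremely restrictive. A case split on where each occurrence of $\SL$ falls within the removed prefix, combined with the mutually incompatible periods of the building blocks (the constants $\i_0{=}55,\i_1{=}54,\i_2{=}53,\p{=}144$ being chosen precisely to prevent spurious alignments), would show that the equality decomposes cleanly at the $\leftP_{\ell+1}$ boundary: the length-$d_1$ prefix $x[i..\leftP_{\ell+1}-1]$ already equals the reverse complement of the matching-length suffix of $x[i..j]$, and truncating validates the length-$(d-d_1)$ deletion from the resulting string. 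Symmetric reasoning handles the case in which the deciding move at step $t'$ is a right deletion, and iterating the surgery eventually eliminates every bad checkpoint.
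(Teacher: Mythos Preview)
Your proposed surgery has a basic arithmetic problem: replacing a single length-$d$ deletion by two deletions of lengths $d_1$ and $d_2$ turns one edge into two, so the new path has length exactly $\cost(p)+1$, not $\le\cost(p)$. There is nowhere in your argument where a compensating step is saved, so you never get back to a shortest path and the minimal-counterexample contradiction does not go through.

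There is a second, independent gap. Even if the split were free, the vertex you land on is $x[\leftP_{\ell+1}..j]$ where $j$ is simply the right endpoint at the moment the left endpoint crossed $\leftP_{\ell+1}$. Nothing forces $j\in\{\rightP_r,\rightP_{r+1}\}$, so this vertex need not be one of the three canonical successors, and the pair $(\ell,r)$ can remain bad. You acknowledge this (``unless $x[\leftP_{\ell+1}..j]$ is itself a canonical successor'') but offer no mechanism for the non-canonical case. In fact, by the synchronizer lemma (\cref{lem:sync}/\cref{cor:sync}) the left endpoint already hits $\leftP_{\ell+1}$ exactly, so no splitting is needed at all; the entire difficulty is precisely that the right endpoint at that moment may sit strictly between $\rightP_{r+1}$ and $\rightP_r$.

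The paper handles this by \emph{reordering} rather than splitting. Starting from $u=x[\leftP_\ell..\rightP_r]$, it lets $v_L$ be the first vertex with left endpoint $\leftP_{\ell+1}$, takes all the left edges on the $u\to v_L$ subpath, and uses \cref{lem:protDelay} to show each of them is still valid when the right endpoint is pinned at $\rightP_r$ (or, in the other case, at $\rightP_{r+1}$ using \cref{lem:idontseer} to control where $v_R$ lands). All the right edges on that subpath are then replaced by a \emph{single} right deletion, which is where length is saved rather than lost. Your periodicity analysis of the $\PL\cdot\SL\cdot\IL\cdot\SL\cdot\PL$ boundary is pointed at the wrong target: the needed structural fact is \cref{lem:protDelay} (edges stay valid when one endpoint is shifted back to the gadget boundary), not a decomposition of a single crossing edge.
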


We start by proving properties regarding paths and shortest paths from $x$ to $y$ in $G_x$.
Due to space constraints, the proofs for the lemmata in this section appear in \cref{app:missingwell}.

\subsection{Properties of Paths in \texorpdfstring{$G_x$}{Gₓ}}\label{sec:local}
We start by observing that an $x$ to $y$ path in $G_x$ never deletes symbols from $y$.
\begin{restatable}[Never Delete $y$]{observation}{nodelyObs}
\label{obs:nodely}
    The substring $x[\leftP_{|S|+1} .. \rightP_{|T|+1}]=y$ is the unique occurrence of $y$ in $x$.
    Let $p$ be a path from $x$ to $y$ in $G_x$. 
    For every vertex $x[i..j]$ in $p$, $[\leftP_{|S|+1} .. \rightP_{|T|+1}] \subseteq [i..j]$. 
\end{restatable}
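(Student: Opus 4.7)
The observation splits naturally into two parts: (i) $y$ occurs in $x$ only at the displayed position, and (ii) every vertex of $p$, viewed as $x[i..j]$, satisfies the claimed inclusion. I would establish (i) first and then deduce (ii) by a short backward induction along $p$.

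For (i), the plan is to isolate the length-four signature $11\inv{11}$ sitting at the center of $y$ and show that it occurs in $x$ exactly once. The key structural property is that within $\PL$, $\SL$, and every $\IL(\alpha)$, each $1$ is immediately followed by a $0$; concatenating these building blocks into $\EL(\alpha)=\PL\cdot\SL\cdot\IL(\alpha)\cdot\SL$ and then into the left portion $\bigodot_{i=1}^{|S|}\EL(S[i])$ preserves this property, so the left portion contains no factor $11$. Dually, $\SR$, $\PR$, and every $\IR(\alpha)$ have each $\inv{1}$ immediately followed by a $\inv{0}$, so the right portion $\bigodot_{i=|T|}^{1}\ER(T[i])$ contains no factor $\inv{11}$. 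In particular, neither portion contains $11\inv{11}$, nor does either seam with $y$ (the left seam sits in a $01\cdot 01\cdots$ neighborhood and the right seam in a $\cdots 01\cdot 101$ neighborhood, both free of $11\inv{11}$). The remaining work is a short inspection of $y$ itself, showing $11\inv{11}$ appears in $y$ exactly once --- namely inside the explicit central block $\SL\cdot 01\cdot 11\inv{11}\cdot\inv{10}\cdot\SR$ --- since $\PL$ contains no $11$ and $\PR$ contains no $\inv{11}$. Any occurrence of $y$ as a substring of $x$ must therefore align its unique central $11\inv{11}$ with that unique copy in $x$, pinning the occurrence to $[\leftP_{|S|+1}..\rightP_{|T|+1}]$.

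For (ii), I would argue by backward induction along $p = v_0, v_1, \ldots, v_k = y$ that every $v_t$ contains $y$ as a substring. The base case is immediate. For the inductive step, each edge $v_t\to v_{t+1}$ of $G_x$ is a hairpin deletion, so by \cref{def:mod_hairpin} the string $v_{t+1}$ is either a prefix or a suffix of $v_t$, and therefore any substring of $v_{t+1}$ is also a substring of $v_t$; in particular $y$ is a substring of $v_t$. Now writing $v_t = x[i..j]$, the embedded copy of $y$ inside $v_t$ corresponds to an occurrence of $y$ in $x$ that lies inside $[i..j]$; by (i) this occurrence is exactly $[\leftP_{|S|+1}..\rightP_{|T|+1}]$, yielding the desired inclusion.

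The main obstacle I anticipate is the finite case analysis in (i): one must verify carefully that the two seams between $\PL$ and the central block $\SL\cdot 01\cdot 11\inv{11}\cdot\inv{10}\cdot\SR$, and between that block and $\PR$, do not create any additional copy of the signature $11\inv{11}$. Everything else propagates mechanically from the block structure of the gadgets and from the basic fact that hairpin deletions only trim strings at their endpoints.
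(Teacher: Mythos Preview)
Your proposal is correct and follows essentially the same approach as the paper: both pin uniqueness on the single occurrence of the signature $11\inv{11}$ (no $11$ to the left of $y$, no $\inv{11}$ to the right), and both deduce the inclusion from the fact that hairpin deletions only trim at the ends so once part of $y$ is removed the target $y$ is unreachable. Your write-up is simply more explicit about the seam and in-$y$ checks and phrases part (ii) as a clean backward induction, whereas the paper compresses the whole thing into two sentences.
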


Since each hairpin deletion operation deletes a prefix (or a suffix) of a substring of $x$, we have the following observation and immediate corollary.
\begin{observation}\label{obs:diffperiod}
    Let $x[i..j]\in 010^a1 * \inv{10^b10}$ for $a\ne b$. 
    A single left hairpin deletion operation removes at most a single $1$ character.
    Symmetrically, a right hairpin deletion operation removes at most a single $\inv{1}$ character.
\end{observation}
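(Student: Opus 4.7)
The plan is to bound the maximum applicable hairpin-deletion length on $S=x[i..j]$, and then to observe that when $a\ne b$ this bound is too small for the deleted prefix (resp.\ suffix) to contain more than one $1$ (resp.\ $\inv{1}$).

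First I would invoke the intuitive characterization stated just after \cref{def:mod_hairpin}: a hairpin deletion of length $\ell$ is applicable to $S$ if and only if $S[1..\ell]=\rev{S[|S|-\ell+1..|S|]}$, equivalently $S[k]=\inv{S[|S|-k+1]}$ for every $k\in[\ell]$. The set of valid lengths is therefore downward-closed, so it suffices to upper-bound the maximum valid length $\ell_{\max}$. I would identify $\ell_{\max}$ by reading the two sequences $(S[k])_{k\ge 1}$ and $(\inv{S[|S|-k+1]})_{k\ge 1}$ in parallel. By $S\in 010^a1*\inv{10^b10}$ and $\inv{10^b10}=01^b01$, the first sequence starts with $0,1,\underbrace{0,\dots,0}_{a},1$ and the second with $0,1,\underbrace{0,\dots,0}_{b},1$. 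These sequences coincide on positions $1,2$ and throughout the interior run of zeros for as long as both runs are still alive, and they first differ at position $\min(a,b)+3$: one side has just closed its shorter zero-run with a $1$, while the other is still producing a $0$ inside its longer zero-run. Hence $\ell_{\max}=\min(a,b)+2$.

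To conclude, for every valid $\ell$ we have $\ell\le\min(a,b)+2\le a+2$, so $S[1..\ell]$ is a prefix of $010^a$ and contains at most one $1$ character (none when $\ell=1$, exactly one when $\ell\ge 2$); this proves the left-deletion claim. Symmetrically, $S[|S|-\ell+1..|S|]$ is a suffix of $\inv{10^b10}$ of length at most $\min(a,b)+2\le b+2$, hence a suffix of $1^b01$, and the only $\inv{1}$ that can appear in such a suffix is the character at position $|S|-1$; so at most one $\inv{1}$ is removed by a single right hairpin deletion.

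The main obstacle I expect is book-keeping in the degenerate cases $a=0$ or $b=0$, where the prefix $010^a1$ (resp.\ the suffix $\inv{10^b10}$) loses its interior run of zeros and the "two runs of zeros" narrative becomes vacuous. There one must separately verify that the mismatch argument still yields $\ell_{\max}=2$ and that a prefix/suffix of length at most $2$ contains at most one $1$ (resp.\ $\inv{1}$). These verifications are immediate but should be stated explicitly to cover all parameter ranges.
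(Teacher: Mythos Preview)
Your argument is correct. The paper states this as an observation without proof, and your approach---computing the maximal valid deletion length $\ell_{\max}=\min(a,b)+2$ by comparing the prefix $010^a1$ with $\rev{\inv{10^b10}}=010^b1$ character by character, then noting that a prefix of length at most $a+2$ (resp.\ suffix of length at most $b+2$) contains at most one $1$ (resp.\ $\inv{1}$)---is exactly the natural justification the reader is expected to supply. Your treatment of the degenerate cases $a=0$ or $b=0$ is more careful than necessary for the paper's purposes (in the construction all relevant exponents lie in $\{3,5,7,9\}$), but it does no harm.
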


\begin{corollary}\label{clm:single1char}
    A single left (resp. right) hairpin deletion operation on $x[i..j]$ can remove more than a single $1$ (resp. $\inv{1}$) character only if $i$ and $j$ are in symmetric gadgets.
\end{corollary}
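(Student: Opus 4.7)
The plan is to prove the contrapositive: assuming $i$ lies in a left gadget $g_L$ and $j$ lies in a right gadget $g_R$ such that $(g_L, g_R)$ is not one of the symmetric pairs $(\IL(\alpha), \IR(\alpha))$ or $(\PL, \PR)$, I will show that every valid left hairpin deletion on $x[i..j]$ removes at most one $1$ character. The right-hairpin statement then follows by a symmetric argument exchanging the roles of $i$ and $j$ via the involution $\mathsf{Inv}$.

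The main step is to exhibit $x[i..j]$ in the form $010^a 1 * \inv{10^b 10}$ with $a \neq b$ and then invoke Observation \ref{obs:diffperiod}. Each non-synchronizer left gadget has a characteristic spacing between consecutive $1$ characters: $5$ for $\IL(0)$, $7$ for $\IL(1)$, $9$ for $\IL(2)$, and $11$ for $\PL$. These spacings determine the effective $a$ (essentially the spacing minus $2$). The right gadgets $\IR(\beta)$ and $\PR$ contribute the analogous $b$ by symmetry. Since the four possible spacings are pairwise distinct and symmetric gadget pairs share their spacing, non-symmetry of $(g_L, g_R)$ forces $a \neq b$, and the observation yields the bound.

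The main obstacle is the boundary case analysis, in two flavors. First, when $i$ lies strictly inside a period block of $g_L$ rather than at a block-aligned position, the prefix of $x[i..j]$ does not literally begin with $010^a 1$ but with a shifted version of it; one verifies that the same argument underlying Observation \ref{obs:diffperiod} still forces the match between prefix and reversed--inverted suffix to fail before the second $1$ character of the prefix is reached. Second, when $g_L = \SL$ (or $g_R = \SR$) the synchronizer itself has no intrinsic period; however, by construction of $\EL(\alpha), \ER(\alpha)$, and $y$, each occurrence of $\SL$ is immediately flanked by a protector or information gadget, so the second $1$ in $x[i..]$ lies inside a non-synchronizer gadget whose spacing governs the effective value of $a$. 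A symmetric statement holds for $\SR$. In every sub-case the effective spacings on the two sides differ whenever $(g_L, g_R)$ is non-symmetric, and the observation (or its straightforward adaptation to shifted initial positions) completes the proof.
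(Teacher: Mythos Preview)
The paper presents this as an immediate corollary of Observation~\ref{obs:diffperiod} without further argument, and your plan—reading off the characteristic zero-spacing of each gadget and invoking the observation—is exactly the intended reasoning.

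One boundary case deserves more care than you give it. When $i$ lies in the trailing zero-run of $g_L$ (past its last $1$), the first two $1$'s to the right of $i$ are the $1$ of the following $\SL$ and the first $1$ of the gadget after that, so the effective gap $a$ equals $1$ \emph{regardless} of which gadget $g_L$ is; this is not the ``spacing minus $2$'' you quote. Symmetrically, when $j$ lies in the leading $\inv{0}$-run of $g_R$, the effective $b$ equals $2$, coming from the preceding $\SR=\inv{010}$. Your assertion that ``the effective spacings on the two sides differ whenever $(g_L,g_R)$ is non-symmetric'' therefore rests on the additional fact that the left-side anomalous gap $1$ never arises on the right and the right-side anomalous gap $2$ never arises on the left—which holds precisely because $|\SL|=2$ while $|\SR|=3$. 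Concretely, the possible effective-$a$ values are $\{1,2\alpha+4\}$ for $i$ in $\IL(\alpha)$, $\{1,10\}$ for $i$ in $\PL$, and $\{1\}$ for $i$ in $\SL$; the possible effective-$b$ values are $\{2,2\beta+4\}$, $\{2,10\}$, and $\{2\}$ for $j$ in $\IR(\beta)$, $\PR$, $\SR$ respectively. Every non-symmetric pair $(g_L,g_R)$ then yields disjoint sets, and your contrapositive argument closes.
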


The next lemma assures a restriction over the vertices along a path from $x$ to $y$ in $G_x$.
\begin{restatable}[Always $01 *$ or $*\inv{10}$]{lemma}{lemzerooneandonezero}
\label{lem:01and10}
     Let $p$ be a path from $s$ to $t$ in $G_x$ such that $s,t \in 01 * \inv{10}$. 
     For every vertex $x[i..j]$ visited by $p$, we have $x[i]=0$ and $x[j] = \inv0$.
     Furthermore, $x[i+1] = 1$ or $x[j-1] = \inv1$.
\end{restatable}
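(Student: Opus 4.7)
The plan is to prove both parts together by forward induction along the path $p = v_0, v_1, \ldots, v_m$, where $v_0 = s$ and $v_m = t$. The key tool is the alternative characterization of hairpin deletion stated just before the $\LCS$ paragraph: a deletion of length $\ell$ on a string $S$ requires $S[1..\ell] = \rev{S[|S|-\ell+1..|S|]}$; in particular, $S[1] = \inv{S[|S|]}$ is a necessary condition for any hairpin deletion to be applicable at all.

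For the first part (``$x[i]=0$ and $x[j]=\inv 0$''), the base case is immediate because $s \in 01*\inv{10}$ begins with $0$ and ends with $\inv 0$. For the inductive step, $v_{k+1}$ is either $t$, for which the claim holds by assumption, or it has an outgoing hairpin deletion along $p$; in the latter case, validity of that outgoing deletion forces $v_{k+1}[1] = \inv{v_{k+1}[|v_{k+1}|]}$. Since exactly one endpoint of $v_k$ is preserved by the deletion $v_k \to v_{k+1}$ (the right endpoint under a left deletion and the left endpoint under a right deletion), the inductive hypothesis on $v_k$ fixes the value of the preserved endpoint of $v_{k+1}$, and the invertibility relation then determines the other endpoint. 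Either way, the conclusion is $v_{k+1}[1]=0$ and $v_{k+1}[|v_{k+1}|]=\inv 0$.

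For the second part, the base case follows from $s \in 01*\inv{10}$, which gives $s[2]=1$. In the inductive step, consider a left deletion of length $\ell$ from $v_k$ to $v_{k+1}$. The right end is untouched, so $v_{k+1}[|v_{k+1}|-1] = v_k[|v_k|-1]$; if this already equals $\inv 1 = 0$, the invariant passes trivially. Otherwise, the inductive hypothesis forces $v_k[2]=1$, and I derive a contradiction in each remaining sub-case: if $\ell=1$, then $v_{k+1}[1] = v_k[2] = 1$, violating the first part just established; if $\ell \ge 2$, the deletion requirement $v_k[2] = \inv{v_k[|v_k|-1]}$ combined with the assumption $v_k[|v_k|-1]=1$ forces $v_k[2]=\inv 1 = 0$, contradicting $v_k[2]=1$. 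The right-deletion case is fully symmetric with the two ends swapped, using the preserved value $v_{k+1}[2]=v_k[2]$ in place of $v_{k+1}[|v_{k+1}|-1]=v_k[|v_k|-1]$.

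The main obstacle is simply keeping this case analysis tidy; once one observes that every ``bad'' alternative for $v_{k+1}$ either flips an endpoint symbol (breaking the first part) or demands a symmetry in $v_k$ that clashes with the inductive hypothesis, the argument is essentially forced. A minor bookkeeping issue is very short vertices, where $v_k[2]$ and $v_k[|v_k|-1]$ could coincide; these are ruled out by $|t|\ge 4$ (since $t \in 01*\inv{10}$) together with the fact that $|v_k|$ is non-increasing along $p$.
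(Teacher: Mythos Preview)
Your proof is correct and follows essentially the same argument as the paper's. The paper phrases both parts as ``consider the first bad vertex and derive a contradiction'' rather than as an explicit forward induction, but the content is identical: for the first part, a vertex with mismatched first and last symbols is a dead end (no outgoing edge) yet cannot be $t$; for the second part, if the predecessor $u$ lies in $01*\inv{00}$ then only a length-$1$ left deletion is possible (your $\ell\ge 2$ contradiction), and that deletion produces a string starting with $1$ (your $\ell=1$ contradiction with the first part).
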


Due to the equivalence between a path in $G_x$ and a sequence of hairpin deletions and due to the symmetry between hairpin deletion and hairpin completion, we obtain the following.
\begin{corollary}\label{cor:0110deletion}
    Let $s,t \in 01 * \inv{10}$ and
    let $\mathcal{H}$ be a sequence of $h$ hairpin deletion operations (or a sequence of hairpin completion operations) that transforms $s$ into $t$.
    For $i\in [h]$, let $S_i$ be the string obtained by applying the first $i$ operations of $\mathcal{H}$ on $s$.
     For every $i \in [h]$, we have $S_i[1] = 0 $ and $S_i[|S_i|] = \inv{0}$.
     Furthermore, either $S_i[2] = 1$ or $S_i[|S_i| - 1] = \inv{1}$.
\end{corollary}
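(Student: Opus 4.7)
The plan is to deduce \cref{cor:0110deletion} from \cref{lem:01and10} in two steps, one for the deletion direction and one (by reversal) for the completion direction. The statement of \cref{lem:01and10} is phrased for paths in $G_x$, but the argument it rests on is entirely local: it only concerns what a single hairpin deletion applied to a string of the form $01 * \inv{10}$ can yield. Consequently, the same statement holds verbatim when $G_x$ is replaced by the analogous hairpin deletion graph $G_s$ built over any source string $s$. This observation is the only non-bookkeeping ingredient, and I would first record it explicitly.

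Given that observation, the deletion case is immediate: the sequence $\mathcal{H}$ of $h$ hairpin deletions transforming $s$ into $t$ corresponds to a path $s = S_0, S_1, \ldots, S_h = t$ in $G_s$. Since $s, t \in 01 * \inv{10}$, the generalized form of \cref{lem:01and10} applied to this path gives that every intermediate vertex $S_i$ satisfies $S_i[1] = 0$, $S_i[|S_i|] = \inv{0}$, and $S_i[2] = 1$ or $S_i[|S_i|-1] = \inv{1}$.

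For the completion case I would exploit the symmetry built into \cref{def:mod_hairpin}: a right (resp.\ left) hairpin completion of length $\ell$ applied to a string $W$ is exactly the inverse of a right (resp.\ left) hairpin deletion of length $\ell$ applied to the resulting string. Thus if $\mathcal{H} = (h_1, \ldots, h_h)$ is a sequence of hairpin completions producing $S_0 = s, S_1, \ldots, S_h = t$, then reversing the sequence yields $h$ valid hairpin deletions that transform $t$ back into $s$, with intermediate strings $S_{h-1}, S_{h-2}, \ldots, S_0$. Applying the deletion case (with source $t$ and target $s$, both of which lie in $01 * \inv{10}$) yields the claimed properties for $S_0, \ldots, S_{h-1}$; the remaining string $S_h = t$ already satisfies them by hypothesis.

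I expect no real obstacle beyond verifying the generalization of \cref{lem:01and10} to an arbitrary source string, which should follow by inspecting its proof and confirming that no structural property of $x$ is used. Everything else is either a direct appeal to that lemma or a straightforward reversal argument using the involutive relationship between hairpin completions and deletions of matching length in \cref{def:mod_hairpin}.
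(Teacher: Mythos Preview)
Your proposal is correct and matches the paper's own justification, which is simply the sentence ``Due to the equivalence between a path in $G_x$ and a sequence of hairpin deletions and due to the symmetry between hairpin deletion and hairpin completion, we obtain the following.'' You have correctly unpacked both halves of that sentence, and your remark that the proof of \cref{lem:01and10} is purely local (so that $G_x$ may be replaced by $G_s$) is exactly the small observation needed to make the corollary apply to arbitrary $s,t\in 01*\inv{10}$ rather than only to substrings of $x$; the paper glosses over this point.
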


The following lemma discusses the situation in which $p$ visits a vertex not in $01 * \overline{10}$. 
Essentially, the lemma claims that when $p$ visits a substring $x[i..j]$ with a prefix $00$, the next step would be $x[i+1..j]$, i.e., deleting a single zero from the left. 
\begin{restatable}[Return to $01*10$]{lemma}{lemretzerooneandonezero}
\label{lem:ret0110}
     Let $p$ be a path from $x$ to $y$ in $G_x$. 
     If $p$ visits a vertex $x[i..j]$ such that $x[i..j] = 01 * \inv{10^k}$ for some integer $k \ge 1$, then for every $k'\in[k-1]$ it must be that $p$ visits $x[i.. j - k']$ as well.
     Symmetrically, if $p$ visits a vertex $x[i..j]$ such that $x[i..j] = 0^k1 * \inv{10}$ for some integer $k \ge 1$, then
     for every $k'\in[k-1]$ it must be that $p$ visits $x[i+k'.. j]$ as well.
\end{restatable}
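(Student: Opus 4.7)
The plan is to prove the claim by induction on $k$, showing that the path is forced to peel off one $\inv{0}$ at a time from the right. The base case $k=1$ is vacuous since $[k-1]=\emptyset$. For $k\ge 2$, I will focus on a vertex $v=x[i..j]$ with $v\in 01*\inv{10^k}$, and show that the next vertex on $p$ after $v$ must be exactly $x[i..j-1]$, which then lies in $01*\inv{10^{k-1}}$, allowing the induction to proceed. First I need to note that $v\ne y$, since $y\in 01*\inv{10}$ (as both $\PL$ starts with $01$ and $\PR$ ends with $\inv{10}$) and the suffix structure $\inv{10^k}$ for $k\ge 2$ cannot coincide with $\inv{10}$; hence $p$ continues from $v$ via some hairpin deletion of length $\ell$.

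The central step is to pin down $\ell$. Validity of the deletion demands $x[i..i+\ell-1]=\rev{x[j-\ell+1..j]}$. I will read off the first two characters of each side: on the left, $x[i]x[i+1]=01$; on the right, $\rev{x[j-\ell+1..j]}$ begins with $\inv{x[j]}\inv{x[j-1]}$. Since $k\ge 2$, the form $\inv{10^k}$ of the suffix gives $x[j-1]=\inv{0}$, so $\inv{x[j-1]}=0\ne 1=x[i+1]$. This rules out $\ell\ge 2$, forcing $\ell=1$ (which is consistent, as $x[i]=0=\inv{x[j]}$).

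Having $\ell=1$, the next vertex is either $x[i+1..j]$ (left deletion) or $x[i..j-1]$ (right deletion). The left alternative starts with $x[i+1]=1$, which contradicts \cref{lem:01and10} once I observe that $x,y\in 01*\inv{10}$ so that lemma applies to the entire path $p$. Therefore the deletion must be to the right, producing $x[i..j-1]$, which retains the prefix $01$ and whose suffix is now $\inv{1}\inv{0}^{k-1}$, i.e., it lies in $01*\inv{10^{k-1}}$. The induction hypothesis applied to this vertex then delivers $x[i..j-1-k'']$ for every $k''\in[k-2]$, which together with $x[i..j-1]$ itself yields the visits to $x[i..j-k']$ for all $k'\in[k-1]$. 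The symmetric statement for vertices in $0^k1*\inv{10}$ follows by the mirror-image argument, where now the second character of the prefix is $0$ while the reverse-complement's second character would have to be $1$, again forcing $\ell=1$, and \cref{lem:01and10} rules out the right deletion (which would end in $\inv{1}$), leaving the left deletion that yields $x[i+1..j]$.

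The main obstacle is the very first reduction to $\ell=1$: one must confirm that \cref{lem:01and10} is genuinely applicable to $p$ (which requires checking the endpoint forms of $x$ and $y$) and carefully use both pieces of information it provides about any visited vertex ($x[i]=0$ together with either $x[i+1]=1$ or $x[j-1]=\inv{1}$). Once this is in place, the alternation between "which side of the prefix/suffix comparison fails" and "which of the two length-$1$ deletions is forbidden by \cref{lem:01and10}" mechanically forces the stepwise peeling and closes the induction.
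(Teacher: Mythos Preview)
Your proposal is correct and follows essentially the same approach as the paper's proof: both handle $k=1$ trivially, then for $k\ge 2$ argue that only length-$1$ deletions are valid, invoke \cref{lem:01and10} to rule out the left deletion (which would leave a string starting with $1$), and finish by induction. Your version is simply more explicit in justifying why $\ell\ge 2$ is impossible (comparing second characters) and in verifying that \cref{lem:01and10} applies to the endpoints $x,y$, whereas the paper states these points more tersely.
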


The following is a direct implication of \cref{lem:01and10,lem:ret0110}.
\begin{observation}\label{obs:stableside01}
Let $p$ be a path from $x$ to $y$ in $G_x$. 
If $p$ applies a right hairpin deletion operation on $v$ then $v\in01*$.
Symmetrically, if $p$ applies a left hairpin deletion operation on $v$ then $v\in*\inv{10}$.
\end{observation}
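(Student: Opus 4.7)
The plan is to derive both halves of \cref{obs:stableside01} from \cref{lem:01and10,lem:ret0110} via a length/range contradiction. I would first note that $x$ and $y$ both lie in $01*\inv{10}$: the leftmost symbols of $x$ are those of $\PL = (010^9)^\p$, which start with $01$, and the rightmost symbols of $x$ are those of $\PR$, which end with $\inv{10}$; the same holds for $y$ directly from its explicit definition. Hence \cref{lem:01and10} applies to every vertex $x[i..j]$ visited by $p$: one has $x[i]=0$, $x[j]=\inv 0$, and at least one of $x[i+1]=1$ or $x[j-1]=\inv 1$.

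Suppose $p$ applies a right hairpin deletion at $v = x[i..j]$, so the successor of $v$ along $p$ is $x[i..j-\ell]$ for some $\ell \ge 1$. Assume toward contradiction that $v \notin 01*$, i.e.\ $x[i+1]\ne 1$. Since the alphabet is binary and $x[i]=0$, this forces $x[i+1]=0$, and so $v$ has a prefix of the form $0^k 1$ with $k \ge 2$. The ``furthermore'' clause of \cref{lem:01and10} then forces $x[j-1]=\inv 1$, placing $v \in 0^k 1 * \inv{10}$. The symmetric direction of \cref{lem:ret0110} now guarantees that $p$ must also visit $x[i+1..j]$.

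The contradiction comes from tracking, for each vertex $x[i'..j'']$ on $p$, how its range $[i'..j'']$ relates to $[i..j]$. Every vertex visited strictly before $v$ is reached from $x$ by hairpin deletions that have not yet cut into $[i..j]$, so its range strictly contains $[i..j]$; in particular its left endpoint is at most $i$, which rules out $x[i+1..j]$ whose left endpoint is $i+1$. Every vertex visited strictly after $v$ is obtained from $x[i..j-\ell]$ by further hairpin deletions, so its right endpoint is at most $j-\ell<j$; this rules out $x[i+1..j]$, whose right endpoint is $j$. Since $|x[i+1..j]| = j-i \ne j-i+1 = |v|$, the two strings are distinct, and we obtain the required contradiction, forcing $v \in 01*$.

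The symmetric assertion — that a left hairpin deletion at $v$ forces $v \in *\inv{10}$ — follows by the mirror-image argument, using the non-symmetric direction of \cref{lem:ret0110} and swapping the roles of the left and right endpoints throughout the range argument. I do not anticipate any real obstacle beyond the bookkeeping of orientation; the essential reason the proof works is that string lengths along $p$ are strictly decreasing, which cleanly segregates ``before $v$'' from ``after $v$'' and makes the revisit of $x[i+1..j]$ guaranteed by \cref{lem:ret0110} impossible to fulfill.
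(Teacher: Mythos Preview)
Your argument is correct. The difference from the paper's intended derivation is one of packaging rather than substance. The paper calls the observation a ``direct implication'' of \cref{lem:01and10,lem:ret0110}, and what it has in mind is the stronger fact established \emph{inside} the proof of \cref{lem:ret0110}: when $v\in 0^k1*\inv{10}$ with $k\ge 2$, the only legal next step along $p$ is the left deletion $x[i..j]\to x[i+1..j]$ (a right or longer deletion would either be invalid or violate \cref{lem:01and10}). From that, a right deletion at such $v$ is immediately impossible. You instead use only the \emph{statement} of \cref{lem:ret0110}, which guarantees that $x[i+1..j]$ is visited somewhere on $p$, and then close the argument with a clean monotonicity observation on the nested ranges along $p$ to show that this visit can occur neither before $v$ (left endpoints are $\le i$), nor after the right deletion (right endpoints are $<j$), nor at $v$ itself (length mismatch). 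This buys you a fully black-box proof relying only on stated lemmas, at the cost of a few extra lines; the paper's route is shorter but implicitly reaches into the proof of \cref{lem:ret0110}.
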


The following lemma establishes the importance of the synchronizer gadgets.
\begin{restatable}[Synchronizer Lemma]{lemma}{lemsync}
\label{lem:sync}
    Let $p$ be a path from $x$ to $y$ in $G_x$ and let $s = x[i_s..j_s] = \SL$ be a left synchronizer which is not contained in $y$, $p$ must visit a vertex $x[j_s+1..k]$ for some integer $k$.
    Symmetrically, if $x[i_s..j_s] = \SR$ is a right synchronizer which is not contained in $y$, $p$ must visit a vertex $x[k .. i_s-1]$.
\end{restatable}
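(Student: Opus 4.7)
I plan to prove the left synchronizer statement by contradiction; the right statement follows by a mirror argument using the symmetric versions of \cref{lem:01and10} and \cref{obs:nodely}. Suppose $p$ never visits a vertex with left endpoint $j_s+1$. By \cref{lem:01and10} each vertex on $p$ has first character $0$, so the left endpoint is never $j_s$ (as $x[j_s]=1$). Since $\SL$ is not in $y$, the left endpoint starts at $1\le i_s$ and ends at $\leftP_{|S|+1}>j_s+1$, so some single left deletion on $p$ moves it from some $i_0\le i_s$ to some $i_1\ge j_s+2$. Let $v=x[i_0..j_0]$ be the vertex just before this deletion and $\ell:=i_1-i_0$; by \cref{obs:nodely} $j_0\ge \rightP_{|T|+1}$, and the validity of the deletion gives the matching equations $x[i_0+k-1]=\overline{x[j_0-k+1]}$ for every $k\in[\ell]$. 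The case $i_1=j_s+2$ is immediate: $x[i_1]=x[j_s+2]=1$ (the second character of whichever gadget, $\IL$ or $\PL$, follows $\SL$), contradicting \cref{lem:01and10}.

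For $i_1\ge j_s+3$, set $Q:=j_0-j_s+i_0$. The equations at $k\in\{j_s-i_0,j_s-i_0+1,j_s-i_0+2,j_s-i_0+3\}$, using the known values $x[i_s]=0$, $x[j_s]=1$, $x[j_s+1]=0$, $x[j_s+2]=1$, jointly force $x[Q-2..Q+1]=0101$. The key combinatorial observation I plan to establish by a short case analysis is that the pattern $0101$ occurs in $x$ only at the positions $[i_s',i_s'+3]$ for each $\SL$ synchronizer $[i_s',j_s']$: inside every gadget $\PL,\IL,\PR,\IR$ consecutive $1$'s are at least four positions apart (ruling out $0101$ internally or across block seams), and direct inspection of $y$'s middle $0101110001101$ and of all inter-gadget boundaries confirms no other occurrence. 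Hence $Q-2=i_s'$ for some $\SL$ at $[i_s',j_s']$. If moreover $i_1\ge j_s+4$, the equation at $k=j_s-i_0+4$ (using $x[j_s+3]=0$, which holds whenever $\SL$ is not in $y$) forces $x[Q-3]=1$, so $x[Q-3..Q+1]=10101$. An analogous check rules this out: any occurrence of $10101$ would require $x[i_s'-1]=1$, yet the character immediately preceding every $\SL$ is the final character of $\PL$ or $\IL$, namely $0$---contradiction.

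The main obstacle is the case $i_1=j_s+3$. First, if $i_0=i_s$ (so $\ell=4$) then $j_0=Q+1=i_s'+3$, and since $i_s'\le i_s^y:=\leftP_{|S|+1}+|\PL|$ we get $j_0\le i_s^y+3<\rightP_{|T|+1}$, contradicting $j_0\ge\rightP_{|T|+1}$. Otherwise $i_0\le i_s-1$, and the equation at $k=i_s-i_0$ (using $x[i_s-1]=0$) forces $x[Q+2]=1$, so $x[Q-2..Q+2]=01011$. A dedicated check shows $01011$ occurs in $x$ only at $[i_s^y..i_s^y+4]$: at any $\SL$ other than the one in $y$, the character $x[i_s'+4]$ is the third character of the subsequent $\IL$ or $\PL$, namely $0$, whereas in $y$ the substring $\SL\cdot 01\cdot 11$ supplies a $1$ at that offset. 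Thus $Q=i_s^y+2$, so $j_0=i_s^y+2+(j_s-i_0)$, and the requirement $j_0\ge\rightP_{|T|+1}$ forces $j_s-i_0\ge|\PR|+10$, in particular $i_0\le i_s-1593$. The equation at $k=i_s-i_0-2$ is then valid and, using $x[i_s-3]=0$ (which holds because $i_s-3$ still sits inside the long $0$-run ending $\PL$ or $\IL$), it yields $x[Q+4]=1$. But $Q+4=i_s^y+6$ corresponds to position $7$ of $y$'s middle $0101110001101$, whose value is $0$---the contradiction we seek. This rules out every $i_1\ge j_s+2$, contradicting the standing assumption, and completes the left case.
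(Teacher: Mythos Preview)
Your argument is correct in outline and reaches the right conclusion, but it takes a substantially more laborious route than the paper's proof, and one of your intermediate justifications is misstated.

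\textbf{Comparison of approaches.} The paper's proof exploits a single clean non-occurrence fact: the deleted prefix contains $x[i_s+1..j_s+2]=101$, so the matching suffix on the right must contain $\rev{101}=\overline{101}$; but $\overline{101}$ never appears to the right of the central $11\overline{11}$, forcing the match to straddle the unique occurrence of $11\overline{11}$ and yielding an immediate contradiction. This handles every $i\ge j_s+3$ in one stroke, and the remaining cases $i\in\{j_s+1,j_s+2\}$ are dispatched by \cref{lem:01and10} and \cref{lem:ret0110}. Your approach instead transports the matching equations to the right endpoint via the index $Q$, classifies all occurrences of $0101$ in $x$, and then splits into the sub-cases $i_1=j_s+2$, $i_1=j_s+3$ (further split by $i_0=i_s$ versus $i_0\le i_s-1$), and $i_1\ge j_s+4$, each requiring its own extended pattern ($10101$, $01011$) and a separate non-occurrence check. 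Both arguments ultimately rest on the same rigidity of $x$, but the paper isolates the single pattern whose reverse-complement is globally absent on the right, avoiding all your case distinctions.

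\textbf{A correction.} Your stated reason that $0101$ does not occur inside $\PR$ or $\IR$---``consecutive $1$'s are at least four positions apart''---is false: $\PR=(\overline{0^910})^{\p}=(1^901)^{\p}$ and $\IR(\alpha)$ contain long runs of $1$'s. The correct reason is that consecutive $0$'s in these gadgets are at least four positions apart, so the pattern $0 1 0$ (and hence $0101$) cannot occur. This does not break your proof, only the sentence justifying that step; the boundary and middle-of-$y$ checks you describe are accurate.
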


Notice that the leftmost index of every left information and protector gadget is $j_s+1$ for some left synchronizer $x[i_s..j_s]$ (excluding the leftmost protector gadget).
A similar structure occurs with right gadgets.
The following directly follows from \cref{lem:sync}.
\begin{corollary}\label{cor:sync}
    Let $p$ be a path from $x$ to $y$ in $G_x$.
    Then, for each $\ell\in [|S|]$ the path $p$ visits vertices $u=x[i.. j]$ with $i=\leftP_\ell$ and $v=x[i.. j]$ with $i=\leftI_\ell$.
    Symmetrically, for each $r\in [|T|]$ the path $p$ visits vertices $u=x[i.. j]$ with $j=\rightP_r$ and $v=x[i.. j]$ with $j=\rightI_r$.
\end{corollary}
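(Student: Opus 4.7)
The plan is to derive \cref{cor:sync} directly from the Synchronizer Lemma by locating, for each target index among $\leftP_\ell, \leftI_\ell, \rightP_r, \rightI_r$, a concrete occurrence of $\SL$ or $\SR$ in $x$ whose boundary lies exactly at the required position, and checking that this occurrence is not contained in $y$. With those pointers in hand, \cref{lem:sync} produces the required vertex on $p$ for free.

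First I would handle the left side. For $\leftP_\ell$ with $\ell=1$, the starting vertex $x=x[1..|x|]$ itself has $i=1=\leftP_1$, so nothing needs proving. For $\ell\ge 2$, the trailing $\SL$ of the mega-gadget $\EL(S[\ell-1])=\PL\cdot\SL\cdot\IL(S[\ell-1])\cdot\SL$ sits immediately to the left of $\EL(S[\ell])$ and therefore ends at position $\leftP_\ell-1$; applying \cref{lem:sync} to it produces a vertex $x[\leftP_\ell..k]$ on $p$. For $\leftI_\ell$ with $\ell\in[|S|]$, I would use the leading $\SL$ of $\EL(S[\ell])$, which lies between $\PL$ and $\IL(S[\ell])$ and whose right endpoint is $\leftP_\ell+|\PL|+|\SL|-1=\leftI_\ell-1$; \cref{lem:sync} then yields a vertex $x[\leftI_\ell..k]$ on $p$.

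The right side is handled symmetrically using the mirrored statement of \cref{lem:sync}. The case $r=1$ for $\rightP_r$ is trivial because the starting vertex has $j=|x|=\rightP_1$. For $r\ge 2$, I would use the leading $\SR$ of $\ER(T[r-1])=\SR\cdot\IR(T[r-1])\cdot\SR\cdot\PR$, which starts at position $\rightP_r+1$; \cref{lem:sync} then gives a vertex $x[k..\rightP_r]$ on $p$. For $\rightI_r$, the trailing $\SR$ of $\ER(T[r])$ (the one immediately preceding $\PR$) starts at position $\rightP_r-|\PR|-|\SR|+1=\rightI_r+1$, and \cref{lem:sync} delivers a vertex $x[k..\rightI_r]$ on $p$. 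Each $\SL$ or $\SR$ occurrence invoked above lies inside $\bigodot_{i=1}^{|S|}\EL(S[i])$ or $\bigodot_{i=|T|}^{1}\ER(T[i])$, both of which are disjoint from $y=x[\leftP_{|S|+1}..\rightP_{|T|+1}]$, so the non-containment hypothesis of \cref{lem:sync} is satisfied throughout. There is no real obstacle here: once \cref{lem:sync} is in hand, the entire proof is indexing bookkeeping, with only the boundary cases $\ell=1$ and $r=1$ needing separate (but trivial) treatment because for those indices the synchronizer with the right boundary position would lie outside $x$ entirely.
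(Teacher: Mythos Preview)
Your proposal is correct and follows essentially the same approach as the paper: the paper simply notes that each $\leftP_\ell$, $\leftI_\ell$, $\rightP_r$, $\rightI_r$ (except the outermost $\leftP_1$ and $\rightP_1$) equals $j_s+1$ or $i_s-1$ for some synchronizer $x[i_s..j_s]$ outside $y$, and then says the corollary follows directly from \cref{lem:sync}. You have spelled out precisely which synchronizer witnesses each index and handled the boundary cases $\ell=1$, $r=1$ explicitly, which the paper leaves implicit.
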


\subsection{Transitions Between Gadegets}\label{sec:trans}
In this section, we address the way shortest paths apply to vertices that transit from a gadget to the gadget afterward.

\begin{restatable}{lemma}{lemidnotseer}
\label{lem:idontseer}
    Let $p$ be a shortest path from $x$ to $y$ in $G_x$.
    For some $\ell \in |S|$, let $v=x[i_1..j_1]$ be the first vertex visited by $p$ with $i_1=\leftI_\ell$.
    Let $u=x[i_2..j_2]$ be the first vertex visited by $p$ with $i_2 = \leftP_{\ell+1}$.
    Then, there is no occurrence of $\PR$ in $x[j_2..j_1]$.

    Symmetrically, for some $r \in |T|$ let $v=x[i_1..j_1]$ be the first vertex visited by $p$ with $j_1=\rightI_r$.
    Let $u=x[i_2..j_2]$ be the first vertex visited by $p$ with $j_2 = \rightP_{r+1}$.
    Then, there is no occurrence of $\PL$ in $x[i_1..i_2]$.
\end{restatable}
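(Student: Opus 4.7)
We prove only the first statement; the symmetric one is analogous. Suppose for contradiction that $p$ is a shortest path yet $x[j_2..j_1]$ contains an occurrence of $\PR$, located at positions $[a,b]$ with $j_2 \le a \le b \le j_1$. The plan is an exchange argument producing a strictly shorter path $p'$.

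The first step is to lower-bound the cost $p$ pays inside its window from $v$ to $u$. Throughout this window, the left endpoint lies in $[\leftI_\ell,\leftP_{\ell+1}]$, and hence sits inside $\IL(S[\ell])$, inside the intermediate $\SL$, or at the very start of $\PL_{\ell+1}$. The repetition blocks of $\IL(0),\IL(1),\IL(2)$ have lengths $5,7,9$ respectively, $\SL$ has length $2$, while $\PR$ has repetition block of length $11$; the only left-side gadget symmetric to $\PR$ is $\PL$, which the left endpoint only touches at its very start. Hence, while the right endpoint crosses the $\PR$ at $[a,b]$, \cref{clm:single1char} forces every right hairpin deletion of $p$ to remove at most a single $\inv{1}$. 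Erasing the $\p=144$ copies of $\inv{1}$ inside $\PR$ therefore requires at least $\p$ right operations in this window.

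Next I construct $p'$ by postponing the $\PR$ crossing until its symmetric partner $\PL_{\ell+1}$ has been exposed on the left. Using the synchronizer-anchored structure of $x$ (\cref{cor:sync}) and the invariants of \cref{obs:stableside01,lem:01and10,lem:ret0110}, I replay $p$'s window in three phases: (i) advance the left endpoint from $\leftI_\ell$ to $\leftP_{\ell+1}$ using only the left hairpin deletions that $p$ performs in this window, whose validity I will argue is witnessed entirely inside the current left mega gadget; (ii) reduce the right endpoint from $j_1$ down to $b$ using the right deletions of $p$ that act outside $[a,b]$; and (iii) apply the synchronized deletion schedule sketched in \cref{sec:overview}, consuming the matched pair $(\PL_{\ell+1},\PR)$ in $O(\log\p)$ operations. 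Finally $p'$ joins $p$'s tail at a vertex whose right endpoint is at most $j_2$ and whose left endpoint is $\leftP_{\ell+1}+|\PL|$, thereby also discharging the work that $p$ would later have spent deleting $\PL_{\ell+1}$.

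The accounting closes the argument: outside the window $p'$ performs the same operations as $p$, and inside the window it replaces $p$'s $\Omega(\p)$-operation asymmetric erasure of $\PR$ (together with $p$'s later deletion of $\PL_{\ell+1}$) by a single $O(\log\p)$ synchronized step, saving a positive number of operations and contradicting the optimality of $p$. The main obstacle is phase (i): reordering $p$'s left deletions to the front of the window while keeping the right endpoint large enough to preserve $\PR$. I expect to handle this through a prefix--suffix argument using the fact that a left hairpin deletion of length $\ell$ only inspects the last $\ell$ characters of the current string, which remain unchanged when the right endpoint is kept further to the right, together with the synchronizer bookkeeping provided by \cref{cor:sync} to split the window at anchor points where left operations become independent of the right-side state.
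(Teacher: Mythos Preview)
Your lower-bound half is essentially the paper's: while the left endpoint sits in $[\leftI_\ell,\leftP_{\ell+1}-1]$, every right step removes at most one $\inv{1}$ from a $\PR$ by \cref{clm:single1char}, so crossing a $\PR$ costs at least $\p$. The problems are all on the alternative-path side.

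\textbf{The main gap is phase~(i).} A left hairpin deletion of length $k$ from $x[i..j]$ requires $x[i..i+k-1]=\rev{x[j-k+1..j]}$; it inspects the suffix $x[j-k+1..j]$, which \emph{does} change when the right endpoint is moved. Your claim that these characters ``remain unchanged when the right endpoint is kept further to the right'' is simply false, so replaying $p$'s left edges with right endpoint pinned at $j_1$ is not justified. (This is not a \cref{lem:protDelay} situation either: that lemma shifts the \emph{fixed} endpoint to the start of a single gadget, whereas here the right endpoints of $p$'s left edges range over a window containing an entire $\PR$.) The paper sidesteps this entirely: it never reorders $p$'s edges. It uses \cref{cor:sync} to pick a specific vertex $z=x[i'..\rightP_r]$ on $p$ with $i'\in[\leftI_\ell,\leftP_{\ell+1}]$, and from $z$ it clears the remaining $\IL(S[\ell])\cdot\SL$ explicitly, one `$1$' at a time, which is valid because the suffix at $\rightP_r$ is $\inv{0^710}$.

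\textbf{Phase~(iii) is the wrong tool.} Once the left endpoint reaches $\leftP_{\ell+1}$, you do not need a synchronized $O(\log\p)$ deletion that also eats $\PL_{\ell+1}$. Because $\PL\cdot\SL\cdot 0=\rev{\SR\cdot\PR}$, each $\PR$ (together with its preceding $\SR$) on the right can be removed by a \emph{single} right hairpin deletion while the left endpoint stays at $\leftP_{\ell+1}$. The intervening $\IR$ gadgets are then removed one $\inv{1}$ at a time. This gives an explicit replacement subpath of cost at most $(\i_0+2)k_{\prot}+\i_0+2$ versus $p$'s cost of at least $\p\cdot k_{\prot}$ on the same stretch, and the numerical inequality $\p>2\i_0+5$ closes the argument. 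Because $\PL_{\ell+1}$ is never consumed, the replacement ends exactly at $u$, so there is no need to chase $p$ past $u$ and account for its later deletion of $\PL_{\ell+1}$---an accounting you left unspecified and which could easily fail (e.g., $p$ might delete $\PL_{\ell+1}$ in one step).
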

\begin{proof}[Proof Sketch, Complete Proof in \cref{app:missingtrans}]
The proof is by contradiction. 
If there is only a \emph{single} $\PR$ gadget that is contained in $x[j_2..j_1]$, by \cref{clm:single1char} the number of hairpin deletions that must happen just to remove this gadget is at least $\p$.
We introduce an alternative, shorter path:
At the moment $p$ reaches this $\PR$ gadget, it first removes all the $\IL(S[\ell])$ gadget, and then removes all the remaining characters on the right side greedily, until reaching $x[i_2..j_2]$.
The reason why this alternative path is indeed shorter is since in this way the removal of the $\PR$ gadget takes $1$ operation, instead of $\p$, and we may pay at most $\i_{\alpha}+\i_{\beta}\le 2\i_0$ for some $\alpha,\beta\in\{0,1,2\}$, for removing one information gadget in the left side and the information gadget following the right protector gadget.
Since $\p$ is much larger than $\i_0$, the alternative path is shorter, contradicting the assumption that $p$ is a shortest path.
Notice that if there are more than one $\PR$ gadgets in $x[j_2..j_1]$, the benefit from deleting $\IL(S[\ell])$ first is even larger.
\end{proof}

The following lemma states that every right deletion on $x[i..j]$ with $i$ being within a non $\SL$ gadget can also be applied if $i$ is the leftmost index of the gadget. 
A symmetric argument is stated as well.

\begin{restatable}{lemma}{lemprotDelay}
\label{lem:protDelay}
    Let $p$ be a path from $x$ to $y$ in $G_x$.
    Let $v=x[i..j]$ be a vertex visited by $p$ such that $i\in[\leftP_\ell..\leftI_\ell-1]$ for some $\ell\in[|S|]$.
    Let $v'=x[\leftP_\ell..j]$, let $u=x[i..j-k]$ and $u'=x[\leftP_\ell..j-k]$ for some $k$.
    If $(v,u)$ is an edge in $p$, then $(v',u')$ is an edge in $G_x$.

    The above statement considers the case in which $v$ interacts with a $\PL$ gadget, the following similar statements, regarding different gadgets hold as well:
    \begin{itemize}
    \item \textbf{$\PR$:} Let $v=x[i..j]$ be a vertex visited by $p$ such that $j\in[\rightI_r+1..\rightP_r]$ for some $r\in[|T|]$.
    Let $v'=x[i..\rightP_r]$, let  $u=x[i+k..j]$ for some $k$, and let $u'=x[i+k..\rightP_r]$
    If $(v,u)$ is an edge in $p$, then $(v',u')$ is an edge in $G_x$.

    \item \textbf{$\IL(\alpha)$ for some $\alpha\in\{0,1,2\}$:} Let $v=x[i..j]$ be a vertex visited by $p$ such that $i\in[\leftI_\ell..\leftP_{\ell+1}-1]$ for some $\ell\in[|S|]$.
    Let $v'=x[\leftI_\ell..j]$, let $u=x[i..j-k]$ and $u'=x[\leftI_\ell..j-k]$ for some $k$.
    If $(v,u)$ is an edge in $p$, then $(v',u')$ is an edge in $G_x$.

    \item \textbf{$\IR(\alpha)$ for some $\alpha\in\{0,1,2\}$:} 
    Let $v=x[i..j]$ be a vertex visited by $p$ such that $j\in[\rightP_{r+1}+1..\rightI_r]$ for some $r\in[|T|]$.
    Let $v'=x[ i..\rightI_r]$, let  $u=x[i+k..j]$ for some $k$, and let $u'=x[i+k..\rightI_r]$
    If $(v,u)$ is an edge in $p$, then $(v',u')$ is an edge in $G_x$.
\end{itemize}

\end{restatable}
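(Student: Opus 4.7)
The plan is to reduce each of the four cases to a periodicity claim about the corresponding gadget; I outline the $\PL$ case, and the $\PR$, $\IL(\alpha)$, and $\IR(\alpha)$ cases follow by the same strategy, using the period $11$ for $\PR$ and the period $5+2\alpha$ for $\IL(\alpha)$ and $\IR(\alpha)$. First, observe that $(v,u)$ being a right hairpin deletion edge of length $k$ is equivalent to $x[i..i+k-1] = \rev{x[j-k+1..j]}$, and likewise $(v',u')$ is an edge iff $x[\leftP_\ell..\leftP_\ell + k - 1] = \rev{x[j-k+1..j]}$. Since the right-hand sides coincide, it suffices to establish $x[\leftP_\ell..\leftP_\ell + k - 1] = x[i..i+k-1]$.

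By \cref{obs:stableside01}, $v \in 01*$, so $x[i..i+1] = 01$; inside the range $[\leftP_\ell..\leftI_\ell - 1]$ the factor $01$ only starts at positions of the form $\leftP_\ell + 11m$, so $\delta := i - \leftP_\ell$ must be a multiple of $11$. A direct check shows that the string $(010^9)^{144} \cdot 010$, of length $1587$, has period $11$: the character at position $q \le 1587$ is $1$ iff $q \equiv 2 \pmod{11}$. Consequently, whenever $\delta + k \le 1587$, both substrings of interest lie inside this period-$11$ region, and the shift $\delta$ being a multiple of the period yields the desired equality.

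The main obstacle is the complementary case $\delta + k > 1587$, in which $v[1..k]$ extends past $\PL \cdot \SL \cdot \IL(S[\ell])[1]$. I plan to show that this regime is incompatible with the hypothesis that $(v,u)$ is an edge in $p$. Setting $\eta := \delta + k - 1586 \ge 2$, one computes $\rev{v[1..k]}$ explicitly: for $\eta$ within the first period of $\IL(S[\ell])$ it equals $1^{\eta - 2} \cdot 0101 \cdot (1^9 01)^{(1584-\delta)/11}$, and the analysis splits into three sub-cases. If $\eta \ge 3$, this string contains $10101$ as a factor, but a direct scan of all gadgets and gadget boundaries shows that $10101$ never occurs in $x$ (the left part has isolated $1$s with $0$-runs of length at least $3$, and the right part has $1$-runs of length $3$, $5$, $7$, or $9$ separated by single $0$s, and none of the gadget boundaries produce $10101$), contradicting $\rev{v[1..k]} = x[j-k+1..j]$. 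If $\eta = 2$ with $\delta < 1584$, the pattern demands $0101$ immediately before a period-aligned occurrence of $(1^9 01)^q$ with $q \ge 1$; every such occurrence lies inside a $\PR$ gadget and is preceded by $1101$ (the last four characters of the preceding period of $\PR$, or the last character of $\IR$ followed by $\SR = 101$), ruling this out. For the boundary case $\delta = 1584$ (giving $\eta = 2$ and $q = 0$), the pattern reduces to $0101$, which appears in $x$ only at positions strictly to the left of $\rightP_{|T|+1}$; but by \cref{obs:nodely}, $v$ contains $y$, so $j \ge \rightP_{|T|+1}$, a contradiction. Extensions of $\eta$ spanning later periods of $\IL(S[\ell])$, the second $\SL$, or the next $\PL$ gadget reduce to the same pattern-avoidance principle, since in every such case $\rev{v[1..k]}$ again contains a $10101$ factor arising from an $\IR$ block adjacent to $\rev{\SL}$. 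I expect that organizing the case analysis cleanly across all regimes of $\eta$, and in particular verifying the absence of $10101$ through every gadget junction, will be the most delicate part of the proof.
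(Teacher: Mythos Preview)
Your overall strategy---reduce to the periodicity of the gadget, note that $\delta=i-\leftP_\ell$ is a multiple of the period, and then argue separately that the window $[i..i+k-1]$ cannot spill past the periodic region---is exactly the paper's. The difference is entirely in how the overflow case is ruled out, and here the paper's argument is dramatically simpler than your $\eta$-based case split. The paper observes that once $x[i..i+k-1]$ extends even two characters into $\IL$, it contains the three-character factor $101$ (from the $1$ of $\SL$, the first $0$ of $\IL$, and the $1$ at position $2$ of $\IL$); hence $x[j-k+1..j]$ must contain $\rev{101}=010$ at a position at most $\leftI_\ell-i\le 1586$ from $j$. Since $j\ge\rightP_{|T|+1}$ by \cref{obs:nodely}, that position lies to the right of the central $11\inv{11}$, and a one-line scan shows $010$ never occurs there (every $0$ in the right half is isolated between $1$-runs). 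This single forbidden-factor argument replaces your entire analysis of $10101$, the $0101\cdot(1^901)^q$ alignment check, the boundary sub-case $\delta=1584$, and the handwaved ``extensions of $\eta$'' paragraph. Your invocation of \cref{obs:stableside01} to get $x[i..i+1]=01$ immediately is a clean shortcut the paper does not take (it treats the $00$-prefix case separately, though that case is in fact vacuous by the same observation).
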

\begin{proof}[Proof Sketch, Complete Proof in \cref{app:missingtrans}]
We distinguish between two cases. 
If $x[i..i+1]\ne01$, by \cref{lem:ret0110} the hairpin deletion removes exactly one $\inv{0}$ character, and by $x[\leftP_\ell]=0$ the edge $(u',v')$ is in $G_x$.
If $x[i..i+1]=01$, we first prove (using \cref{cor:sync}) that $k\le \leftI_\ell-i$.
Moreover, since $i\in [\leftP_\ell..\leftI_\ell-1]$ it must be the case that $i=\leftP_\ell+q\cdot 11$ for some $q$.
Since $x[\leftP_\ell..\leftI_\ell-1]$ is periodic with period $11$.
Thus, $x[\leftP_\ell..\leftP_\ell+k]=x[\leftP_\ell+q\cdot11..\leftP_\ell+q\cdot11+k]=x[i..i+k]$ and the claim follows.  
\end{proof}

We are now ready to prove \cref{lem:maintech}.
\lemmaintech*
\begin{proof}

We describe a method that converts a shortest path $p$ from $x$ to $y$ that visits $u=x[\leftP_\ell..\rightP_r]$ into a shortest path $p'$ from $x$ to $y$ that visits one of the following vertices: $x[\leftP_{\ell+1} .. \rightP_{r}]$, $x[\leftP_\ell .. \rightP_{r+1}]$, or $x[\leftP_{\ell+1} .. \rightP_{r+1}]$.
Moreover, the prefixes of $p$ and $p'$ from $x$ to $u$ are identical.
Using this technique, it is straightforward to convert a shortest path from $x$ to $y$ in $G_x$ into a well-behaved path of the same length.

Let $v_L=x[i_L..j_L]$ be the first vertex in $p$ with $i_L=\leftP_{\ell+1}$ and let $v_R=x[i_R..j_R]$ be the first vertex in $p$ with $j_R=\rightP_{r+1}$.
By \cref{cor:sync}, $v_L$ and $v_R$ are well defined (unless $\ell=|S|+1$ or $r=|T|+1$, in such a case just one of the vertices is well defined and the claim follows trivially from \cref{obs:nodely}).
We consider the case where $v_L$ appears before $v_R$ in $p$ and show how to convert $p$. 
The other case is symmetric.

We distinguish between two cases:
\para{Case 1 $j_L\in[\rightI_{r}+1..\rightP_{r}]$:}
Let $q$ be the sub-path of $p$ from $u$ to $v_L$.
We present a path $q^*$ from $u$ to $v_L$ that is not longer than $q$ and visits $x[\leftP_{\ell+1}..\rightP_{r}]$.
Recall that an edge of the form $x[i..j]\to x[i+k..j]$ is called a \emph{left} edge, and an edge of the form $x[i..j]\to x[i..j-k]$ is called a \emph{right} edge.
Let $\cost_L$ be the number of left edges in $q$ and $\cost_R$ be the number of right edges in $q$. 
We first show a path from $u=x[\leftP_\ell..\rightP_r]$ to $x[\leftP_{\ell+1}..\rightP_r]$ of length $\cost_L$.
Let $e=x[i_1..j]\to x[i_2..j]$ be a left edge in $q$.
It must be that $j\in[j_L..\rightP_r]\subseteq [\rightI_r+1..\rightP_r]$.
Hence, by \cref{lem:protDelay}, there exists an edge $e'=x[i_1..\rightP_r]\to x[i_2..\rightP_r]$.
Let $e_1,e_2,\ldots,e_{\cost_L}$ be the subsequence of all left edges in $q$.
The path $q^*_1=e'_1,e'_2,\ldots,e'_{\cost_L}$ is a valid path of length $\cost_L$ from $u=x[\leftP_\ell..\rightP_r]$ to $x[\leftP_{\ell+1}..\rightP_r]$.

If $j_L=\rightP_r$ then $q^*=q^*_1$ is a path that satisfies all the requirements. 
Otherwise, $\cost_R\ge 1$.
We claim that there is an edge $e_R$ from $x[\leftP_{\ell+1}..\rightP_r]$ to $v_L=x[\leftP_{\ell+1}..j_L]$.
This is true since  $x[\leftP_{\ell+1}..\leftI_{\ell+1}-1]=\PL\cdot\SL=\rev{\SR[2..|\SR|]\cdot\PR}=\rev{x[\rightI_{r}+2\,\,..\,\,\rightP_r]}$ and $j_L\in[\rightI_{r}+1..\rightP_{r}]$.
We conclude $q^*$ by appending $e_R$ to the end of $q^*_1$. 
Finally, $\cost(q^*)=\cost_L+1\le \cost_L+\cost_R=\cost(q)$, and $q^*$ visits $x[\leftP_{\ell+1}..\rightP_r]$.

\para{Case 2 $j_L\in[\rightP_{r+1}-1..\rightI_{r}]$:}
We first prove the following claim.
\begin{claim*}
$i_R\in [\leftP_{\ell+1}..\leftI_{\ell+1}-1]$.
\end{claim*}
\begin{claimproof}
Since $v_R$ appears after $v_L$, we have $i_R \ge i_L = \leftP_{\ell+1}$.
Assume to the contrary that $i_R \ge \leftI_{\ell+1}$.
Let $v_f=x[i_f..j_f]$ be the first vertex in $p$ with $j_f = \rightI_r$ ($v_f$ exists according to \cref{cor:sync}).
Note that $v_f$ does not appear after $v_L$ in $p$ since $j_L=\le \rightI_{r}= j_f$
Therefore, $i_f\le i_L=\leftP_{\ell+1}$ and $[\leftP_{\ell+1}..\leftI_{\ell+1}]\subseteq[i_f..i_R]$.
Therefore, the occurrence of $\PL$ starting in $\leftP_{\ell+1}$ is contained in $x[i_f.. i_R]$.
Since $p$ is a shortest path, this is a contradiction to \cref{lem:idontseer}.
\end{claimproof}

Let $q$ be the sub-path of $p$ from $v_L$ to $v_R$.
Let $\cost_L$ be the number of left edges in $q$ and $\cost_R$ be the number of right edges in $q$. 
We present a path $q^*$ from $v_L$ to $v_R$ that is not longer than $q$ and visits $x[\leftP_{\ell+1}..\rightP_{r+1}]$.
We first show a path $q^*_1$ from $v_L=x[\leftP_{\ell+1}..j_L]$  to $x[\leftP_{\ell+1}..\rightP_{r+1}]$  of length $\cost_R$.
Let $e=x[i..j_1]\to x[i..j_2]$ be a right edge in $q$.
It must be that $i\in[\leftP_{\ell+1}..i_R]\subseteq [\leftP_{\ell+1}..\leftI_{\ell+1} - 1]$ due to the claim.
Hence, by \cref{lem:protDelay}, there exists an edge $e'=x[\leftP_{\ell+1}..j_1]\to x[\leftP_{\ell+1}..j_2]$.
Let $e_1,e_2,\ldots,e_{\cost_L}$ be the subsequence of all right edges in $q$.
The path $q^*_1=e'_1,e'_2,\ldots,e'_{\cost_L}$ is a valid path of length $\cost_R$ from $v_L=x[\leftP_{\ell+1}..j_L]$ to $x[\leftP_{\ell+1}..\rightP_{r+1}]$.

If $i_R=\leftP_{\ell+1}$ then $q^*=q^*_1$ is a path that satisfies all the requirements. 
Otherwise, $\cost_L\ge 1$.
We claim that there is an edge $e_L$ from $x[\leftP_{\ell+1}..\rightP_{r+1}]$ to $v_R=x[i_R..\rightP_{r+1}]$.
This is true since  $x[\leftP_{\ell+1}..\leftI_{\ell+1}]=\PL\cdot\SL \cdot 0=\rev{\SR\cdot\PR}=\rev{x[\rightI_{r+1}+1\,\,..\,\,\rightP_{r+1}]}$ and $i_R\in[\leftP_{\ell+1}..\leftI_{\ell+1} - 1]$.
We conclude $q^*$ by appending $e_L$ to the end of $q^*_1$. 
Finally, $\cost(q^*)=\cost_L+1\le \cost_L+\cost_R=\cost(q)$, and $q^*$ visits $x[\leftP_{\ell+1}..\rightP_r]$.
\end{proof}

\section{Cost of Well-Behaved Steps}\label{sec:costSteps}
In this section, we analyze the cost of each of the possible phases of a well-behaved path (\cref{def:wellbehavedpath}).
We first consider the cost of deletion of a single mega-gadget.
\begin{lemma}[Non Synchronized Deletion]\label{lem:nonsyncdel}
Let $v = x[\leftP_{\ell} .. \rightP_{r}]$, $u_1 = [\leftP_{\ell+1} .. \rightP_{r}]$ and $u_2 = [\leftP_{\ell} .. \rightP_{r+1}]$ for $\ell \in [|S|]$ and $r \in [|T|]$.
Let $S[\ell] = \alpha$ and $T[r] = \beta$. 
It holds that $\dist_{G_x}(v,u_1) = \i_{\alpha} + 2$ and $\dist_{G_x}(v,u_2) = \i_{\beta} + 2$.    
\end{lemma}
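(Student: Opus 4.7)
The plan is to prove $\dist_{G_x}(v, u_1) = \i_\alpha + 2$; the statement for $u_2$ follows by a symmetric argument. Because $u_1$ shares its right endpoint with $v$, every path in $G_x$ from $v$ to $u_1$ consists solely of left hairpin deletions. The prefix that must be removed is exactly $\EL(\alpha) = \PL \cdot \SL \cdot \IL(\alpha) \cdot \SL$, which contains $\p + \i_\alpha + 2$ occurrences of the character $1$.

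For the upper bound, I would exhibit an explicit deletion sequence of length $\i_\alpha + 2$. The first deletion removes the prefix $\PL \cdot \SL$ of length $|\PL|+2$: since the last two characters of $\SR$ are $01$, the suffix of $v$ of length $|\PL|+2$ equals $01\cdot\PR$, which is precisely $\rev{\PL\cdot\SL}$. The next $\i_\alpha$ deletions each strip off one period $010^{2\alpha+3}$ of $\IL(\alpha)$; validity holds because $\rev{010^{2\alpha+3}}$ agrees with the last $2\alpha+5$ characters of $\PR$ (here $2\alpha+3 \leq 7$ fits within a single period of $\PR$). A final deletion of length $2$ removes the trailing $\SL$, using that $\PR$ ends in $01 = \rev{\SL}$.

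For the lower bound, I would track how the $1$'s in $\EL(\alpha)$ are eliminated. By \cref{clm:single1char}, a left deletion can remove more than one $1$ only when its left endpoint lies in a gadget symmetric to the one containing the right endpoint. Since the right endpoint remains at $\rightP_r$ throughout the path (inside $\PR$, whose only symmetric partner is $\PL$), every deletion whose left endpoint is at $\leftI_\ell$ or beyond removes at most one $1$. The main technical step, and in my view the principal obstacle, is to show that any deletion starting at an index $i \in [\leftP_\ell, \leftI_\ell - 1]$ advances the left endpoint to at most $\leftI_\ell$ while leaving a resulting substring that begins with $0$ (as demanded by \cref{cor:0110deletion}). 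This reduces to a character-by-character symmetry check: the first $|\PL|+3$ positions match (the $\PL$/$\PR$ periodic structure extends seamlessly through $\SL$ matched against the last two characters of $\SR$, plus one more $0$ from $\IL(\alpha)$), but position $|\PL|+4$ necessarily mismatches because $\IL(\alpha)[2] = 1$ while the corresponding position in the reversed-complemented suffix is the complement of the last character of $\IR(\beta)$, also equal to $1$. Moreover, extending to length $|\PL|+3$ would leave a substring starting with $1$, so the largest valid advance lands the left endpoint exactly at $\leftI_\ell$.

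Once this landing claim is established, the deletions preceding the crossing collectively remove exactly $\PL \cdot \SL$, accounting for $\p + 1$ of the $1$'s in $\EL(\alpha)$. The remaining $\i_\alpha + 1$ ones (in $\IL(\alpha) \cdot \SL$) must then be eliminated one at a time, costing at least $\i_\alpha + 1$ further deletions. Combined with the at least one deletion needed to reach $\leftI_\ell$ in the first place, the total is at least $\i_\alpha + 2$, matching the upper bound.
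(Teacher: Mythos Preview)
Your approach matches the paper's: the same explicit upper-bound path, and the same lower-bound decomposition into first showing any path visits $z = x[\leftI_\ell .. \rightP_r]$ and then counting the $\i_\alpha + 1$ remaining $1$'s via \cref{clm:single1char}. The only difference is that the paper obtains the visit to $z$ directly from \cref{cor:sync} rather than your character-by-character ``landing'' argument, which is cleaner and avoids the case analysis over all starting positions $i \in [\leftP_\ell, \leftI_\ell - 1]$ that you only carry out explicitly for $i = \leftP_\ell$.
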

\begin{proof}
We prove $\dist_{G_x}(v,u_1) = \i_{\alpha} + 2$. The proof for $\dist_{G_x}(v,u_2) = \i_{\beta} + 2$ is symmetrical.
We prove the lemma by showing $\dist_{G_x}(v,u_1) \ge \i_{\alpha} + 2$ and $\dist_{G_x}(v,u_1) \le \i_{\alpha} + 2$.
\para{$\dist_{G_x}(v,u_1) \ge \i_{\alpha} + 2$:}
Let $p$ be a $v$ to $u_1$ path in $G_x$.
Note that vertex $x[i..j]$ in $p$ has $j= \rightP_r$.
According to \cref{cor:sync}, $p$ must visit $z = x[\leftI_{\ell} .. \rightP_{r}]$. 
Since $z \neq v$, the sub-path of $p$ from $v$ to $z$ induces a cost of at least $1$ to $p$.
Consider the sub-path $q$ of $p$ from $z$ to $u_1$.
According to \cref{clm:single1char}, every left hairpin deletion step in $q$ deletes at most a single '1' symbol.
Due to $x[\leftI_{\ell} .. \leftP_{\ell+1} - 1] = \IL(\alpha) \cdot \SL$, the sub-path $q$ consists of at least $\#_1(\IL(\alpha)) + 1 = \i_\alpha + 1$ additional left hairpin deletions.

\para{$\dist_{G_x}(v,u_1) \le \i_{\alpha} + 2$:}
We present a path $p$ with cost exactly $\i_\alpha+2$ from $v$ to $u_1$.
Initially, $p$ deletes a prefix of length $|\PL| + |\SL|$ from $v$ in one step. 
This is possible since $v$ has a suffix $\inv{10} \cdot \PR$.
Then, $p$ proceeds to delete $x[\leftI_{\ell} .. \leftP_{\ell+1} - 1] = \IL(\alpha) \cdot \SL$ a single '$1$' character at a time.
Note that this is possible regardless of the value of $\alpha$ due to $x[\rightP_r - 8 .. \rightP_r] = \inv{0^710}$.
The total cost of this path is $\i_\alpha + 2 $ as required.
\end{proof}

In the following lemma, we show that the cost of deleting two disagreeing mega-gadgets is the same as deleting each one of them separately.

\begin{lemma}[Synchronized Deletion of Disagreeing Mega Gadgets]\label{cor:syncdeldisagree}
Let $v = x[\leftP_{\ell} .. \rightP_{r}]$, $u =[\leftP_{\ell+1} .. \rightP_{r+1}]$  for $\ell \in [|S|]$ and $r \in [|T|]$ with $S[\ell]\ne T[r]$.
It holds that $\dist_{G_x}(v,u) = \i_\alpha +\i_\beta + 4$ with $S[\ell] = \alpha$ and $T[r] = \beta$.
\end{lemma}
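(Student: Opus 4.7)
The plan is to prove the equality by matching upper and lower bounds of $\i_\alpha+\i_\beta+4$. For the upper bound, I would build an explicit path by concatenating two non-synchronized deletions through the intermediate vertex $x[\leftP_{\ell+1}..\rightP_r]$. The first leg $v\to x[\leftP_{\ell+1}..\rightP_r]$ costs $\i_\alpha+2$ directly from \cref{lem:nonsyncdel}. The second leg $x[\leftP_{\ell+1}..\rightP_r]\to u$ parallels the upper-bound construction of \cref{lem:nonsyncdel}, applied to the right side: first delete the suffix $\SR\cdot\PR$ in a single right hairpin step (enabled by $\rev{\PL\cdot\SL\cdot 0}=\SR\cdot\PR$, where $\PL\cdot\SL\cdot 0$ is a prefix of $x[\leftP_{\ell+1}..]$), then strip $\SR\cdot\IR(\beta)$ one block at a time, each step removing a single $\inv{1}$-character. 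The long $0$-run at the start of $x[\leftP_{\ell+1}..]$ accommodates any block length, mirroring the role of $\inv{0^710}$ in \cref{lem:nonsyncdel}. This leg contributes $\i_\beta+2$, giving the upper bound.

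For the lower bound, let $p$ be any path from $v$ to $u$. I plan to lower bound the numbers of left and right hairpin deletions separately. To invoke \cref{cor:sync}, which is stated for $x\to y$ paths, I would embed $p$ into an $x\to y$ path $p'$ by prepending any path from $x$ to $v$ and appending any path from $u$ to $y$. Since left endpoints monotonically increase and right endpoints monotonically decrease along any path, the vertices with left endpoint $\leftI_\ell$ and right endpoint $\rightI_r$ guaranteed by \cref{cor:sync} for $p'$ must actually fall within $p$. Let $w_1$ be the first vertex of $p$ with left endpoint $\leftI_\ell$ and $w'$ the first with left endpoint $\leftP_{\ell+1}$. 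The sub-path $v\to w_1$ contributes at least one left deletion. In $w_1\to w'$, every left deletion has its left endpoint in $\IL(\alpha)\cup\SL$, while its right endpoint lies in a $\PR$, $\SR$, or $\IR(\beta)$ gadget; because $\alpha\ne\beta$ there is no $\IR(\alpha)$ on the right, and $\SL$ has no symmetric partner, so \cref{clm:single1char} forces each such deletion to remove at most one $1$-character. Since $\IL(\alpha)\cdot\SL$ contains $\i_\alpha+1$ $1$-characters, at least $\i_\alpha+1$ left deletions occur in $w_1\to w'$, yielding at least $\i_\alpha+2$ left deletions in $p$ in total. The symmetric argument on the right side gives at least $\i_\beta+2$ right deletions, so $\cost(p)\ge\i_\alpha+\i_\beta+4$.

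The main obstacle is ruling out ``bulk'' deletions that could eliminate many information-gadget characters via a single symmetric alignment. The hypothesis $\alpha\ne\beta$ is precisely what breaks the only such alignment available in the information region: $\IL(\alpha)$ is symmetric only to $\IR(\alpha)$, which is absent from the right portion here. This is the structural gap between the synchronized case (where $\IL(\alpha)=\rev{\IR(\alpha)}$ would enable an efficient joint deletion) and the disagreeing case treated in this lemma, and it is exactly why the disagreeing cost equals the sum of the two non-synchronized costs rather than something smaller.
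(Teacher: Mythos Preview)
Your proposal is correct and follows essentially the same approach as the paper: the upper bound goes through the intermediate vertex $x[\leftP_{\ell+1}..\rightP_r]$ (the paper simply invokes \cref{lem:nonsyncdel} for both legs rather than re-deriving the second), and the lower bound separately counts left and right deletions using \cref{clm:single1char} exactly as the paper does. Your embedding of $p$ into an $x$-to-$y$ path to justify the use of \cref{cor:sync} is in fact more careful than the paper, which applies that corollary directly to the $v$-to-$u$ path without comment.
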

\begin{proof}
We prove the claim by showing $\dist_{G_x}(v,u) \ge \i_\alpha +\i_\beta + 4$ and $\dist_{G_x}(v,u) \le \i_\alpha +\i_\beta + 4$.

\para{\underline{$\dist_{G_x}(v,u) \ge \i_\alpha +\i_\beta + 4$:}}
Let $p$ be a path from $v$ to $u$ in $G_x$.
According to \cref{cor:sync}, $p$ visits vertices $z_1 = x[\leftI_{\ell} .. j]$ and $z_2 = x[i .. \rightI_{r}]$ for some $i,j$. 
The last left hairpin deletion in $p$ before $z_1$ and the last right hairpin deletion in $p$ before $z_2$ induce a cost of $2$ to $p$.
Consider a left hairpin deletion that is applied to a vertex $x[i'..j']$ after $z_1$ in $p$.
Note that $i'$ is either within an $\IL(\alpha)$ gadget or within a $\SL$ gadget, and $j'$ is either within an $\IR(\beta)$ gadget, a $\SR$ gadget or a $\PR$ gadget.
In any of the above cases, \cref{clm:single1char} suggests that the deletion operation deletes at most a single '1' character. 
Therefore, there are at least $\i_\alpha + 1$ left deletions after $z_1$ in $p$.
Due to similar reasoning, there are at least $\i_\beta + 1$ right hairpin deletions after $z_2$ in $p$.
It follows that the total cost of $p$ is at least $\i_\alpha + \i_\beta + 4$.

\para{\underline{$\dist_{G_x}(v,u) \le \i_\alpha +\i_\beta + 4$:}}
Consider the path $p$ that is composed of two sub-paths, the prefix $p_1$ is a shortest path from $v$ to $w=x[\leftP_{\ell+1}..\rightP_r]$ and the suffix $p_2$ is a shortest path from $w$ to $u$.
By \cref{lem:nonsyncdel} we have $\cost(p_1)=\i_{\alpha}+2$ and $\cost(p_2)=\i_{\beta}+2$. 
Therefore $\cost(p)=\cost(p_1)+\cost(p_2)=\i_\alpha+2+\i_\beta+2$. \end{proof}

The last case we have to analyze is a synchronized deletion of agreeing mega gadgets.
We first present the concept of \textit{Fibonacci-regular numbers}.
\begin{definition}[Fibonacci-regular number]
    We say that $a\in\N$ is a \emph{Fibonacci-regular} number if for all $2\le k\le a$ it holds that  $\Fib^{-1}(a)\le\Fib^{-1}(a/k)+k-1$.
\end{definition}
\begin{fact}\label{fact:fibReg}
    $\i_2=53$, $\i_1=54$, $\i_0=55$ and $\p=144$ are Fibonacci-regular numbers.
\end{fact}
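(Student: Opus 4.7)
The plan is to verify Fact~\ref{fact:fibReg} by direct case analysis, leveraging a simple reduction that cuts the a priori range $2 \le k \le a$ down to a handful of small cases per value. The central observation is that $\Fib^{-1}$ is monotone non-decreasing and takes non-negative integer values, so whenever $k-1 \ge \Fib^{-1}(a)$, the inequality $\Fib^{-1}(a) \le \Fib^{-1}(a/k) + k - 1$ holds automatically (regardless of the value of $\Fib^{-1}(a/k)$). Therefore it suffices to verify the inequality for $k \in \{2, 3, \ldots, \Fib^{-1}(a)\}$.

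First I would tabulate the small Fibonacci numbers: $\Fib(0)=1, \Fib(1)=1, \Fib(2)=2, \Fib(3)=3, \Fib(4)=5, \Fib(5)=8, \Fib(6)=13, \Fib(7)=21, \Fib(8)=34, \Fib(9)=55, \Fib(10)=89, \Fib(11)=144$. From these and the definition $\Fib^{-1}(x) = \min\{y \in \N : \Fib(y) \ge x\}$, I read off $\Fib^{-1}(53) = \Fib^{-1}(54) = \Fib^{-1}(55) = 9$ and $\Fib^{-1}(144) = 11$. So for $a \in \{53,54,55\}$ I only need to check $k \in \{2,\ldots,9\}$, and for $a=144$ only $k \in \{2,\ldots,11\}$.

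Next I would carry out the (finite) verification for each pair $(a,k)$ by computing $a/k$ and locating it in the Fibonacci table. For example, for $a=55$ and $k=2$, $a/k = 27.5$ lies in $(\Fib(7),\Fib(8)] = (21,34]$, so $\Fib^{-1}(a/k)=8$ and the RHS equals $8 + 1 = 9 = \Fib^{-1}(55)$; for $a=55$ and $k=3$, $a/k \approx 18.33 \in (13,21]$, giving $\Fib^{-1}(a/k)=7$ and RHS $= 7+2=9$; and so on. The same routine lookup handles the remaining $(a,k)$ pairs for $a \in \{53, 54\}$ and for $a=144$ (where the largest checks, e.g.\ $k=2$ gives $144/2 = 72 \in (55,89]$ hence $\Fib^{-1}(72)=10$, and $10+1 = 11 = \Fib^{-1}(144)$, are the tightest).

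There is no genuine obstacle in this proof; the only subtlety worth stating explicitly is that $a/k$ need not be an integer, so one must use the real-valued formulation of $\Fib^{-1}$ from the preliminaries. The argument is therefore presented as a short lemma (the reduction to $k \le \Fib^{-1}(a)$) followed by a compact table of verifications, and can be written inline in a few lines.
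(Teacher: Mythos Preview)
Your proposal is correct. The paper states \cref{fact:fibReg} without proof, leaving it as a direct numerical verification; your approach---reducing to $k\le \Fib^{-1}(a)$ via the trivial bound $\Fib^{-1}(a/k)\ge 0$ and then checking the finitely many remaining cases against the Fibonacci table---is exactly the natural way to carry this out, and your sample computations are accurate.
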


The following lemma, which \cref{app:FibDel} is dedicated for the lemma's proof, provides the required machinery to analyze the cost of a synchronized deletion.
\begin{restatable}{lemma}{FibDistanceLemma}
\label{lem:FibDistance}
Let $\per=010^{\ext}$ and let $q\in010^{\int}01*11\inv{11}*\inv{100^{\int}10}$ with $\int \ne \ext$ and $\min\{\int,\ext \}\ge3$.
For every Fibonacci-regular number $a$, we have  
$\HDD(\per^a\cdot\SL\cdot  q\cdot \SR \cdot \rev{\per}^a,q)=\Fib^{-1}(a)+ \max(\ext - \int - 1,0 ) +3$
\end{restatable}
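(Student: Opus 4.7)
The plan is to prove matching upper and lower bounds on $\HDD(\per^a\cdot\SL\cdot q\cdot\SR\cdot\rev{\per}^a,q)$.

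For the upper bound, I would exhibit an explicit deletion sequence of cost $\Fib^{-1}(a)+\max(\ext-\int-1,0)+3$ in three phases. First, if $\ext>\int$, perform $\ext-\int-1$ single-character deletions that trim the excess $0$'s at the seams between the outer period $\per=010^{\ext}$ and the inner period $010^{\int}$ appearing at the boundaries of $q$. Second, a Fibonacci-doubling alternation: starting from a unit deletion on one side, alternate sides, at step $i$ removing roughly $\Fib(i)$ copies of $\per$ (resp.\ $\rev{\per}$) from the appropriate side. The hairpin-matching condition is satisfied because $\rev{\per^b}=\rev{\per}^b$, so a $\per^b$ prefix on the left is always the reverse complement of the $\rev{\per}^b$ suffix on the right, provided we do not delete more copies than remain on the opposite side. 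The recurrence for reaching $(0,0)$ from $(a,a)$ under this constraint terminates in exactly $\Fib^{-1}(a)$ steps. The third phase spends three constant-cost operations to handle the synchronizers $\SL,\SR$ and any residual boundary symbols.

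For the lower bound, the plan is to show every path from $\per^a\cdot\SL\cdot q\cdot\SR\cdot\rev{\per}^a$ to $q$ in the hairpin-deletion graph costs at least the claimed amount. By \cref{cor:sync} such a path must pass through vertices ``cut'' at each synchronizer, and \cref{obs:stableside01}, \cref{lem:ret0110} and \cref{lem:01and10} constrain the path to an alternating left/right structure along the periodic boundary. Letting $L_i,R_i$ denote the numbers of $\per$ and $\rev{\per}$ blocks remaining after step $i$, the hairpin condition forces $L_i-L_{i+1}\le R_i$ (symmetrically on the right), which is precisely the recurrence whose minimum-length strategy from $(a,a)$ to $(0,0)$ requires $\Fib^{-1}(a)$ steps. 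The $\max(\ext-\int-1,0)$ term is forced by \cref{obs:diffperiod} combined with \cref{clm:single1char}: once the remaining periodic border on one side abuts the first block of $q$ (which has a different period), each subsequent deletion removes at most one $1$-character, costing one operation per excess zero. The additive constant $3$ accounts for the unavoidable synchronizer-crossing steps guaranteed by \cref{lem:sync} together with the first productive deletion on each side.

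The main obstacle is to show that the trimming phase, the Fibonacci phase, and the synchronizer steps cannot be profitably overlapped, and that $q$ cannot be exploited as a shortcut. The mandatory $11\inv{11}$ factor inside $q$ together with \cref{clm:single1char} prevents any single deletion from efficiently traversing $q$ from an asymmetric position, thereby ruling out ``hybrid'' deletions that would simultaneously discharge trimming and doubling work. To formally pin down the Fibonacci rate I would argue by induction on the number of $\per$-blocks remaining: any $k$-way split of the deletion plan pays at least $\Fib^{-1}(a/k)+k-1$ operations, which by the Fibonacci-regular hypothesis on $a$ is bounded below by $\Fib^{-1}(a)$. This matches the doubling strategy of the upper bound and closes the argument.
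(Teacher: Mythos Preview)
Your overall architecture (explicit path for the upper bound, Fibonacci recurrence plus Fibonacci-regularity for the lower bound) is the same as the paper's, but two steps do not go through as you describe them.

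\textbf{Upper bound.} The phase order is inverted. In $\per^a\cdot\SL\cdot q\cdot\SR\cdot\rev{\per}^a$ the ``seam'' between the $010^{\ext}$-blocks and the $010^{\int}$-prefix of $q$ sits \emph{inside} the string, behind $a$ full copies of $\per$ and the synchronizer; a hairpin deletion only removes a prefix or suffix, so you cannot trim those excess zeros first. The paper's witness path runs the Fibonacci reduction down to $\per\cdot\SL\cdot q\cdot\SR\cdot\rev{\per}$, then deletes $\SR\cdot\rev{\per}$ in one step, then peels $010^{\min(\ext,\int+1)}$ from the left, then removes the remaining $\max(\ext-\int-1,0)$ zeros one by one, and finally deletes $\SL$. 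Reordering your phases this way fixes the construction.

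\textbf{Lower bound.} The recurrence $L_i-L_{i+1}\le R_i$ is only clean while both ends still carry whole $\per$-blocks; once one side drops below one full period, partial periods and the synchronizers interfere, and that is exactly where the $\max(\ext-\int-1,0)$ cost is incurred. The paper handles this by introducing a pivot vertex $\last(p)$, the last vertex still of the form $*\per\cdot\SL\cdot q\cdot\SR\cdot\rev{\per}*$, and proving (their \cref{lem:lastkk}) that without loss of generality $\last(p)=\per^{k}\cdot\SL\cdot q\cdot\SR\cdot\rev{\per}^{k}$ for some $k\in[a]$. This splits the cost as $\Fib^{-1}(a/k)$ for the prefix (their \cref{lem:hairpinfib}) plus $2+(1+\max(\ext-\int-1,0))\cdot k$ for the suffix: crucially, the trimming term scales \emph{linearly in $k$}, not as the single additive $\max(\ext-\int-1,0)$ your argument assumes. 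Fibonacci-regularity then shows $k=1$ minimizes this sum. Your sketch applies the Fibonacci-regular inequality in the right direction, but without the $\last(p)$ decomposition you have no handle on $k$, and without the $k$-fold trimming cost the inequality you invoke does not actually close the gap. Also, \cref{cor:sync} is a statement about paths from the global $x$ to the global $y$ and does not apply here; the paper instead relies on the local \cref{lem:01and10}, \cref{lem:ret0110}, and \cref{cor:0110deletion}.
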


Finally, we are ready to analyze the cost of synchronized deletion of agreeing mega gadgets.
\begin{lemma}[Synchronized Deletion of Agreeing Mega Gadgets]\label{lem:syncdelagree}
Let $v = x[\leftP_{\ell} .. \rightP_{r}]$, $u = [\leftP_{\ell+1} .. \rightP_{r+1}]$  for $\ell \in [|S|]$ and $r \in [|T|]$ with $S[\ell]= T[r]=\alpha$.
Then  $\dist_{G_x}(v,u)=\Fib^{-1}(\p)+\Fib^{-1}(\i_\alpha)+11-2\alpha$.
\end{lemma}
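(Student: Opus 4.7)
The plan is to identify the intermediate vertex $q := x[\leftI_\ell .. \rightI_r]$ and invoke \cref{lem:FibDistance} at two nested scales, using the structural identities $v = \PL \cdot \SL \cdot q \cdot \SR \cdot \PR$ and $q = \IL(\alpha) \cdot \SL \cdot u \cdot \SR \cdot \IR(\alpha)$.

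For the upper bound I would use the triangle inequality $\dist_{G_x}(v,u) \le \dist_{G_x}(v,q) + \dist_{G_x}(q,u)$ and compute each summand by a separate application of \cref{lem:FibDistance}. The outer application uses $\per = 010^9$ (so $\ext = 9$) and $a = \p$; the inner string $q$ fits the template $010^{\int}01 * 11\inv{11} * \inv{100^{\int}10}$ with $\int = 3+2\alpha$, because $\IL(\alpha)$ contributes the required prefix (two copies of its period), $\IR(\alpha)$ contributes the required suffix, and the factor $11\inv{11}$ appears inside the $y$ sitting in the middle of $q$. The conditions $\int \ne \ext$ and $\min\{\int,\ext\} \ge 3$ are immediate from $3+2\alpha \in \{3,5,7\}$, producing $\dist_{G_x}(v,q) = \Fib^{-1}(\p) + (5-2\alpha) + 3$. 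The inner application uses $\per = 010^{3+2\alpha}$ (so $\ext = 3+2\alpha$) and $a = \i_\alpha$; the template check for $u$ gives $\int = 9$ from its $\PL/\PR$ boundaries and central $y$, and the $\max$ term vanishes because $3+2\alpha < 9$, yielding $\dist_{G_x}(q,u) = \Fib^{-1}(\i_\alpha) + 3$. Summing these matches the claimed total.

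For the matching lower bound the goal is to show that every shortest $v \to u$ path may be assumed to visit $q$, so that the two preceding equalities combine via additivity of distances. Any $v \to u$ path extends to an $x \to y$ path, and \cref{lem:sync} applied to the pair of $\SL$ gadgets inside $\EL(S[\ell])$ and the pair of $\SR$ gadgets inside $\ER(T[r])$ forces visits to vertices whose leftmost index is $\leftI_\ell$ and whose rightmost index is $\rightI_r$; since leftmost indices are non-decreasing and rightmost indices are non-increasing along a hairpin-deletion sequence, these visits are confined to the $v \to u$ portion of the extended path. A surgery in the style of \cref{lem:maintech}, using \cref{lem:protDelay} to commute left deletions performed while the rightmost endpoint lies inside $\PR$ (resp.\ $\IR(\alpha)$) with right deletions performed while the leftmost endpoint lies inside $\PL$ (resp.\ $\IL(\alpha)$), then realigns these two visits onto a common vertex, which must be $q$, without changing the path length.

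The main obstacle is the surgery for the lower bound. A priori the leftmost index could reach $\leftI_\ell$ while the rightmost index is still strictly greater than $\rightI_r$ (or symmetrically), so the two synchronizer-mandated visits can live at different time steps. \cref{lem:protDelay} supplies the key commutation primitive—deletions on one side are insensitive to the exact position of the opposite endpoint as long as that endpoint lies within a wrapper—but this primitive has to be orchestrated across both the outer $\PL/\PR$ layer and the inner $\IL(\alpha)/\IR(\alpha)$ layer in order to force the full alignment at $q$, and the bookkeeping of which deletions are being slid past which must be done carefully to avoid a length increase.
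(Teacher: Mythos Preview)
Your plan matches the paper's proof almost exactly: the paper also routes through the intermediate vertex $w=x[\leftI_\ell..\rightI_r]$, applies \cref{lem:FibDistance} twice (with the same $\ext,\int,a$ you identify) for the upper bound, and for the lower bound proves a claim that some shortest $v\to u$ path visits $w$, using \cref{cor:sync} to obtain first-visit vertices $v_L$ (left endpoint $\leftI_\ell$) and $v_R$ (right endpoint $\rightI_r$) and then the \cref{lem:protDelay} surgery to align them.

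There is one concrete case you do not anticipate. After assuming w.l.o.g.\ that $v_R$ precedes $v_L$, the paper splits on the right endpoint $j_L$ of $v_L$. When $j_L>\rightP_{r+1}$ (so $j_L\in[\rightP_{r+1}+1..\rightI_r]$), the $\IR(\alpha)$ case of \cref{lem:protDelay} applies to every left edge between $v_R$ and $v_L$, and your surgery works. But when $j_L=\rightP_{r+1}$, the right endpoint has already dropped to the $\PR$ boundary of $u$, the hypothesis of the $\IR(\alpha)$ case of \cref{lem:protDelay} fails on the final left edge, and no rearrangement manufactures a visit to $w$. The paper disposes of this case not by surgery but by a direct cost argument: from $v_L=x[\leftI_\ell..\rightP_{r+1}]$ to $u$ every left deletion removes at most one `$1$' by \cref{clm:single1char}, forcing at least $\i_\alpha+1\ge 54$ further steps, which already exceeds the upper bound $31-2\alpha\le 31$ and contradicts optimality. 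You should add this boundary case to your plan.

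A minor simplification: the ``two-layer orchestration'' you flag as the main obstacle is not needed. The surgery lives entirely at the inner $\IR(\alpha)$ (or symmetrically $\IL(\alpha)$) layer; the outer $\PL/\PR$ layer plays no role once you have fixed the intermediate target $w$.
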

\begin{proof}

Let $w=x[\leftI_{\ell} .. \rightI_{r}]$.
Consider the following path $p'$ from $u$ to $v$ in $G_x$.
The path $p'$ consists of a prefix $p'_1$ which is a shortest path from $u$ to $w$ and a suffix $p'_2$ which is a shortest path from $w$ to $v$.
Since $\i_0,\i_1,\i_2$ and $\p$ are Fibonacci-regular numbers (\cref{fact:fibReg}), according to \cref{lem:FibDistance} (with $\ext=9$ and $\int=3+2\alpha$) we have $\cost(p'_1)=\Fib^{-1}(\p)+\max(9-(3+2\alpha)-1,0)+3=\Fib^{-1}(\p)+8-2\alpha$.
Similarly, according to \cref{lem:FibDistance} (with $\ext=3+2\alpha$ and $\int=9$) we have  $\cost(p'_2)=\Fib^{-1}(\i_\alpha)+\max(3+2\alpha-9-1,0)+3=\Fib^{-1}(\i_\alpha)+3$.
In total we have $\cost(p')=\cost(p'_1)+\cost(p'_2)=\Fib^{-1}(\p)+\Fib^{-1}(\i_\alpha)+11-2\alpha$.
Therefore $\dist_{G_x}(v,u)\le\Fib^{-1}(\p)+\Fib^{-1}(\i_\alpha)+11-2\alpha=31 - 2\alpha$.

We prove the following claim.
\begin{claim*}
    There is a shortest path $p$ from $v$ to $u$ that visits $w$.
\end{claim*}
\begin{claimproof}
Let $v_L= x[i_L .. j_L]$ and $v_R = x[i_R ..j_R]$ be the first vertices visited by $p$ with $i_L = \leftI_{\ell}$ and $j_R = \rightI_r$.
Assume without loss of generality that $v_R$ occurs before $v_L$ in $p$.
We consider two cases regarding $j_L$.
\para{Case 1: $j_L = \rightP_{r+1}$}
Consider the suffix $p_s$ of $p$ from $v_L$ to $u$.
Let $v' = x[i'..j']$ be a vertex in $p_s$ that is immediately followed by a left hairpin deletion operation in $p_s$.
Since $i' \in [\leftI_{\ell}.. \leftP_{\ell+1} -1]$ is either within an $\IL(\alpha)$ gadget or within a $\SL$ gadget, and $j' = \rightP_{r+1}$ is in a $\PR$ gadget, \cref{clm:single1char} suggests that the left hairpin deletion applied to $v'$ deletes at most a single '1' character.
It follows from the above analysis that the number of left hairpin deletions in $p_s$ is at least $\#_1(\IL(\alpha)) +1 \ge \i_2 + 1 = 54$.
Therefore, the cost of $p$ is at least $55 >31\ge \cost(p')$, which contradicts the minimality of $p$.

\para{Case 2: $j_L> \rightP_{r+1}$}
Let $q$ be the sub-path of $p$ from $v_R$ to $v_L$.
Let $\cost_L$ be the number of left edges in $q$ and $\cost_R$ be the number of right edges in $q$. 
We first show a path from $v_R$ to $x[\leftI_{\ell}..\rightI_r]$ of length $\cost_L$.
Let $e=x[i_1..j]\to x[i_2..j]$ be a left edge in $q$.
It must be that $j\in[j_L..\rightI_r]\subseteq [\rightP_{r+1} + 1..\rightI_r]$.
Hence, by \cref{lem:protDelay}, there exists an edge $e'=x[i_1..\rightI_r]\to x[i_2..\rightI_r]$.
Let $e_1,e_2,\ldots,e_{\cost_L}$ be the subsequence of all left edges in $q$.
The path $q^*_1=e'_1,e'_2,\ldots,e'_{\cost_L}$ is a valid path of length $\cost_L$ from $v_R$ to $x[\leftI_{\ell}..\rightI_r]$.

If $j_L=\rightI_r$ then $q^*=q^*_1$ is a path that satisfies all the requirements. 
Otherwise, $\cost_R\ge 1$.
We claim that there is an edge $e_R$ from $x[\leftI_{\ell}..\rightI_r]$ to $v_L=x[\leftI_\ell..j_L]$.
This is true since  $x[\leftI_{\ell}..\leftP_{\ell+1}-1]=\IL(S[\ell])\cdot\SL=\IL(T[r])\cdot\SL=\rev{\SR[2..|\SR|]\cdot\IR(T[r])}=\rev{x[\rightP_{r+1}+2\,\,..\,\,\rightI_r]}$ and $j_L\in[\rightP_{r+1}+1..\rightI_{r}]$.
We conclude $q^*$ by appending $e_R$ to the end of $q^*_1$. 
Finally, $\cost(q^*)=\cost_L+1\le \cost_L+\cost_R=\cost(q)$, and $q^*$ visits $x[\leftI_{\ell}..\rightI_r]$.
\end{claimproof}

Let $p$ be a shortest path from $u$ to $v$ in $G_x$.
According to the claim, we can indeed assume that $p$ consists of a shortest path $p_1$ from $v$ to $w$ and a shortest path $p_2$ from $w$ to $v$.
Therefore we have $\cost(p)= \cost(p') =\Fib^{-1}(\p)+\Fib^{-1}(\i_\alpha)+11-2\alpha$ as required.
\end{proof}

\section{Correctness}\label{sec:correctness}
Let $D(0)=57$, $D(1)=56$, $D(2)= 55$, $D_\sync(0)=31$,  $D_\sync(1)=29$, $D_\sync(2) = 27$, and $B = 83$.
The following lemma summarize \cref{lem:nonsyncdel,cor:syncdeldisagree,lem:syncdelagree}.
\begin{lemma}\label{lem:constants}
Let  $\ell\in[|S|]$ and let $r\in[|T|]$ be two integers.
Denote $S[\ell] = \alpha$ and $T[r] = \beta$.
The following is satisfied. 
\begin{enumerate}
        \item $\dist_{G_x}( x[\leftP_\ell ..\rightP_r], x[\leftP_{\ell+1}, \rightP_r]) = D(\alpha)$
        \item $\dist_{G_x}( x[\leftP_\ell ..\rightP_r], x[\leftP_\ell, \rightP_{r+1}]) = D(\beta)$
        \item $\dist_{G_x}( x[\leftP_\ell ..\rightP_r], x[\leftP_{\ell+1}, \rightP_{r+ 1}]) =D(\alpha) + D(\beta)$ if $\alpha \neq \beta$
        \item $\dist_{G_x}( x[\leftP_\ell ..\rightP_r], x[\leftP_{\ell+1}, \rightP_{r+ 1}]) =D_{\sync}(\alpha)$ if $\alpha = \beta$
        \item $2D(0) - D_{\sync}(0) = 2D(1) - D_{\sync}(1) = 2D(2) - D_{\sync}(2) = B$
    \end{enumerate}
\end{lemma}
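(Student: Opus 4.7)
The lemma is essentially a bookkeeping statement: I will plug the formulas from \cref{lem:nonsyncdel,cor:syncdeldisagree,lem:syncdelagree} into the concrete numerical values of $\i_0,\i_1,\i_2,\p$ and verify the five identities.

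For parts (1) and (2), \cref{lem:nonsyncdel} gives exactly $\dist_{G_x}(x[\leftP_\ell..\rightP_r], x[\leftP_{\ell+1}..\rightP_r]) = \i_\alpha + 2$ and symmetrically $\i_\beta + 2$ for the right side. It therefore suffices to confirm $D(\alpha) = \i_\alpha + 2$, i.e. $D(0)=57$, $D(1)=56$, $D(2)=55$, which matches our definitions. Part (3) follows from \cref{cor:syncdeldisagree}, which yields $\i_\alpha + \i_\beta + 4 = (\i_\alpha + 2) + (\i_\beta + 2) = D(\alpha) + D(\beta)$ whenever $\alpha\neq\beta$, so no further computation is required.

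For part (4), I invoke \cref{lem:syncdelagree} to obtain $\dist_{G_x}(x[\leftP_\ell..\rightP_r], x[\leftP_{\ell+1}..\rightP_{r+1}]) = \Fib^{-1}(\p) + \Fib^{-1}(\i_\alpha) + 11 - 2\alpha$ when $\alpha=\beta$, and then I compute the two Fibonacci-inverse values. Since $\Fib(11) = 144 = \p$, we have $\Fib^{-1}(\p) = 11$. Since $\Fib(9) = 55$ and $\Fib(8) = 34$, every $\i_\alpha\in\{53,54,55\}$ satisfies $\Fib^{-1}(\i_\alpha) = 9$. Substituting yields $D_\sync(\alpha) = 11 + 9 + 11 - 2\alpha = 31 - 2\alpha$, matching $D_\sync(0)=31$, $D_\sync(1)=29$, $D_\sync(2)=27$.

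Part (5) is then a pure arithmetic check using the closed forms just derived:
\[
2D(\alpha) - D_\sync(\alpha) = 2(\i_\alpha + 2) - (31 - 2\alpha) = 2\i_\alpha + 4 + 2\alpha - 31 = 2(55 - \alpha) + 2\alpha - 27 = 83,
\]
where we used $\i_\alpha = 55 - \alpha$ for $\alpha\in\{0,1,2\}$. Hence $2D(\alpha) - D_\sync(\alpha) = 83 = B$ uniformly in $\alpha$, as claimed. There is no genuine obstacle here: the only non-trivial ingredient is knowing $\Fib^{-1}$ at the specific values $53,54,55,144$, which is why the parameters $\i_0,\i_1,\i_2,\p$ were chosen between consecutive Fibonacci numbers in a way that makes $\Fib^{-1}(\i_\alpha)$ constant across $\alpha$ and makes the differences $2D(\alpha) - D_\sync(\alpha)$ independent of $\alpha$.
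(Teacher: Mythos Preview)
Your proof is correct and follows essentially the same approach as the paper's own proof: both plug the formulas from \cref{lem:nonsyncdel,cor:syncdeldisagree,lem:syncdelagree} into the concrete values of $\i_\alpha$ and $\p$, compute $\Fib^{-1}(\p)=11$ and $\Fib^{-1}(\i_\alpha)=9$, and verify the arithmetic. Your version is in fact slightly cleaner, giving a uniform algebraic verification of part~(5) via $\i_\alpha = 55-\alpha$ rather than the paper's case-by-case check.
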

\begin{proof}
    According to \cref{lem:nonsyncdel}, we have $D(\gamma)=\i_\gamma+2$ for every $\gamma \in \{ 0,1,2\}$. 
    It follows from \cref{cor:syncdeldisagree} that if $\alpha \ne \beta$ we have $\dist_{G_x}( x[\leftP_\ell ..\rightP_r], x[\leftP_{\ell+1}, \rightP_{r+ 1}]) =\i_\alpha + \i_\beta + 4 =\i_\alpha + 2 + \i_\beta+2  = D(\alpha) + D(\beta)$.
    It follows from \cref{lem:syncdelagree} that $D_{\sync}(\gamma) = \Fib^{-1}(\p) + \Fib^{-1}(\i_\alpha) + 11 -2\gamma=11+9+11-2\gamma =31-2\gamma $ for every $\gamma \in \{ 0,1,2 \}$.

    Indeed, we have $2\cdot D(0)  - D_{\sync}(0) = 2 \cdot 57 - 31 = 83$, $2\cdot D(1)- D_{\sync}(1) = 56 \cdot 2 - 29 = 83$  and $2\cdot D(2)- D_{\sync}(2) = 55 \cdot 2 - 27 = 83$ as required.
\end{proof}

We are now ready to prove \cref{lem:reduction} which concludes the correctness of the reduction.

 \lemreduction*
\begin{proof}

We prove the equality claimed, by showing two sides of inequality.
\para{$\HDD(x,y) \le\sum_{\alpha\in\{0,1,2\}} D(\alpha)(\#_\alpha(S)+\#_\alpha(T)) - \LCS(S,T) \cdot B$:}
    Denote $c = \LCS(S,T)$.
    Let $\I=i_1 < i_2 < i_3 .., \ldots , ..< i_c \subseteq[|S|]$ and $\J=  j_1 < j_2 < j_3 \ldots < j_c \subseteq [|T|]$ be two sequences of indices such that $S[i_k] = T[j_k]$ for every $k \in [c]$. 
    Thus, $\I$ and $\J$ represent a maximal common subsequence of $S$ and $T$.

    We present a path $p$ in $G_x$ from $x$ to $y$.
    The path $p$ starts in $x= x[\leftP_1 .. \rightP_1]$, and consists of 3 types of subpaths.
    \begin{enumerate}
        \item Left deletion subpath: a shortest path from $x[\leftP_\ell .. \rightP_r]$ to $x[\leftP_{\ell+1} .. \rightP_{r}]$  for some $\ell\in[|S|]$ and $r\in[|T|+1]$.
          \item Right deletion subpath: a shortest path from $x[\leftP_\ell .. \rightP_r]$ to $x[\leftP_{\ell} .. \rightP_{r+1}]$ for some $\ell\in[|S|+1]$ and $r\in[|T|]$.
          \item Match subpath: a shortest path from $x[\leftP_\ell .. \rightP_{r}]$ to $x[\leftP_{\ell+1} .. \rightP_{r+1}]$ for some $\ell\in[|S|]$ and $r\in[|T|]$.
    \end{enumerate}
    Specifically, if $p$ visits $x[\leftP_\ell .. \rightP_r]$, then $p$ proceeds in a left deletion subpath if $\ell \in [|S|]\setminus\I$.
    Otherwise, $p$ proceeds in a right deletion subpath if $r \in [|T|]\setminus\J$.
    If both $\ell \in \I$ and $r\in \J$, the path $p$ proceeds in a match subpath. 
    Note that it is guaranteed that as long as $\ell \neq |S|+1$ or $ r \neq |T|+1$, $p$ continues to make progress until finally reaching $x[\leftP_{|S|+1} .. \rightP_{|T|+1}]=y$. 
    
We proceed to analyze the cost of $p$.
For $\alpha\in\{0,1,2\}$ we introduce the following notation regarding $\I$ and $\J$.
Let $u_L(\alpha)=|\{i \mid S[i]=\alpha \text{ and } i\notin\I\}|$, $u_R(\alpha)=|\{j\mid T[j]=\alpha\text{ and } j\notin\J\}|$.
In addition, let $c(\alpha)=|\{k\mid k\in[c] \text{ and } S[i_k]=\alpha\}|$.

Clearly, by \cref{lem:constants} every $k\in[c]$ induces a cost of $D_\sync(S[i_k])$ to $p$.
Moreover, every $i\in[|S|]\setminus \I$, induces a cost of $D(S[i])$ to $p$, and every $j\in [|T|]\setminus\J$ induces a cost of $D(T[j])$ to $p$.
Thus, we have
\begin{align*}
\cost(p)&=\sum_{\alpha\in\{0,1,2\}} D(\alpha)(u_L(\alpha)+u_R(\alpha)) + \sum_{\alpha\in\{0,1,2\}} D_\sync(\alpha)\cdot c(\alpha)\\
&=\sum_{\alpha\in\{0,1,2\}} D(\alpha)(u_L(\alpha)+u_R(\alpha)) + \sum_{\alpha\in\{0,1,2\}} (2D(\alpha)-B)\cdot c(\alpha)\\
&=\sum_{\alpha\in\{0,1,2\}} D(\alpha)(u_L(\alpha)+u_R(\alpha)+2c(\alpha)) -B\cdot \sum_{\alpha\in\{0,1,2\}} c(\alpha)\\
&=\sum_{\alpha\in\{0,1,2\}} D(\alpha)(\#_\alpha(S)+\#_\alpha(T)) - c \cdot B.
\end{align*}
Where the first equality follows from $B =  2\cdot D(\alpha) -D_{\sync}$ for every $\alpha\in \{ 0,1,2\}$, and the last equality is since for every  $\alpha\in\{0,1,2\}$ we have $\#_\alpha(S)=u_L(\alpha)+c_\alpha$,$\#_\alpha(T)=u_R(\alpha)+c_\alpha$ and $c=c(0)+c(1)+c(2)$.

\para{$\HDD(x,y) \ge\sum_{\alpha\in\{0,1,2\}} D(\alpha)(\#_\alpha(S)+\#_\alpha(T)) - \LCS(S,T) \cdot B$:}
Let $p$ be a well-behaved shortest path from $x$ to $y$ in $G_x$.
According to \cref{lem:maintech}, such a path $p$ exists.

Let $\X=\{ v = x[\leftP_\ell .. \rightP_r] \mid p \textit{ visits } v \}$.
Notice that the vertices of $\X$ are naturally ordered by the order of their occurrences in $p$, so we denote the $i$th vertex in $\X$ by $x_i$.
For $i\in [|\mathcal{\X}| - 1]$, we classify the vertex $x_i = x[\leftP_\ell .. \rightP_r]$ for some $\ell \in [|S|+1]$ and $r \in [|T| + 1]$ into one of the following four disjoint types.
\begin{enumerate}
    \item \textbf{Match vertex} : if $x_{i+1} = [\leftP_{\ell+1} .. \rightP_{r+1}]$ and $S[\ell]  = T[r] $.
    \item \textbf{Mismatch vertex} : if $x_{i+1} = [\leftP_{\ell+1} .. \rightP_{r+1}]$ and $S[\ell]  \neq T[r] $.
    \item \textbf{Left deletion vertex }: if $x_{i+1} = [\leftP_{\ell+1} .. \rightP_{r}]$.
    \item \textbf{Right deletion vertex} : if $x_{i+1} = [\leftP_{\ell} .. \rightP_{r+1}]$.
\end{enumerate}
Notice that since $p$ is well-behaved, $x_i$ is classified into one of the four types.

We proceed to analyze the cost of the subpath of $p$ from $x_i = x[\leftP_\ell .. \rightP_r]$ to $x_{i+1}$ using \cref{lem:constants}.
If $x_i$ is a match vertex, it induces a cost of $D_{\sync}(S[\ell])$.
If $x_i$ is a mismatch vertex, it induces a cost of $D(S[\ell]) + D(T[r])$.
If $x_i$ is a right (resp. left) deletion vertex, it induces a cost of $D(S[\ell])$ (resp. $D(T[r])$. 
For $\alpha,\beta \in \{ 0,1,2\}$ we present the following notations.
\begin{itemize}
    \item $c_{\match}(\alpha) = |\{ x\in \X \mid x \textit{ is a match vertex with } S[\ell]=\alpha\}|$
    \item $c_{\mis}(\{\alpha,\beta\})  = |\{ x\in \X \mid x \textit{ is a mismatch vertex with } \{S[\ell],T[r]\}=\{\alpha,\beta\} \}|$
    \item $c_{\mis}(\alpha)  = |\{ x\in \X \mid x \textit{ is a mismatch vertex with } S[\ell]=\alpha \textit{ or }T[r]=\alpha\}|$
    \item $c_{\LEFT}(\alpha)  = |\{ x\in \X \mid x \textit{ is a left deletion vertex with } S[\ell]=\alpha\}|$
    \item $c_{\RIGHT}(\alpha)  = |\{ x\in \X \mid x \textit{ is a right deletion vertex with } T[r]=\alpha\}|$
\end{itemize}
Note that since every super-gadget is deleted exactly once as a part of an $x_i$ to $x_{i+1}$ subpath.
It follows that for every $\alpha \in \{0,1,2\}$ we have $\#_\alpha(S) + \#_\alpha(T) = c_{\LEFT}(\alpha) + c_{\RIGHT}(\alpha) + c_{\mis}(\alpha) + 2c_{\match}(\alpha)$.
Note that for $\alpha\in\{0,1,2\}$  we have $c_\mis(\alpha)=\sum_{\beta\ne\alpha}c_\mis(\{\alpha,\beta\})$.
We denote $c_{\match} = c_\match(0) + c_\match(1)+c_\match(2)$.
We make the following claim:
\begin{claim*}
    $c_{\match} \le \LCS(S,T)$.
\end{claim*}
\begin{claimproof}
    We show that there is a common subsequence of $S$ and $T$ with length $c_{\match}$.
    Let $\Pairs= \{ ( \ell,r ) \mid x[\leftP_\ell .. \rightP_r] \textit{ is a match vertex} \}$.
    Note that $\Pairs$ is naturally ordered by the order of occurrences of the corresponding vertices in $p$.
    We denote by $(\ell_i,r_i)$ the $i$th pair in $\Pairs$ according to this order.
    Note that for every $i \in [|\Pairs| -1]$, we have $\ell_i < \ell_{i+1}$ and $r_i < r_{i+1}$ due to the definition of a match vertex.
    Furthermore, we have $S[\ell_i] = T[r_i]$ for every $i\in [|\Pairs|]$.
    It follows that the subsequence $S[\ell_1],S[\ell_2] \ldots ,S[\ell_{|\Pairs|}]$ equals to the subsequence $T[r_1],T[r_2] , \ldots ,T[r_{|\Pairs|}]$.
    Therefore, $S$ and $T$  have a common subsequence of length $|\Pairs|=c_\match$.
\end{claimproof}

It follows from the above analysis that 
\begin{align*}
\cost(p) &= 
\sum_{\alpha\in\{0,1,2\}} D_\sync(\alpha) c_{\match}(\alpha)+
\sum_{\alpha\ne\beta\in\{0,1,2\}}(D(\alpha)+D(\beta))\cdot c_{\mis}(\{\alpha,\beta\})
\\&\hspace{125px}+ \sum_{\alpha\in\{0,1,2\}}D(\alpha)(c_{\LEFT}(\alpha) + c_{\RIGHT}(\alpha))\\
&= 
\sum_{\alpha\in\{0,1,2\}} D_\sync(\alpha) c_{\match}(\alpha)+
\sum_{\alpha\in\{0,1,2\}}D(\alpha)\sum_{\beta\ne\alpha}c_{\mis}(\{\alpha,\beta\})
\\&\hspace{125px}+ \sum_{\alpha\in\{0,1,2\}}D(\alpha)(c_{\LEFT}(\alpha) + c_{\RIGHT}(\alpha))\\
&= 
\sum_{\alpha\in\{0,1,2\}} D_\sync(\alpha) c_{\match}(\alpha)+
\sum_{\alpha\in\{0,1,2\}}D(\alpha)\cdot c_{\mis}(\alpha)
\\&\hspace{125px}+ \sum_{\alpha\in\{0,1,2\}}D(\alpha)(c_{\LEFT}(\alpha) + c_{\RIGHT}(\alpha))\\&
= \sum_{\alpha\in\{0,1,2\}}(2D(\alpha) - B)c_{\match}(\alpha)  +
  \sum_{\alpha\in\{0,1,2\}}  D(\alpha)(c_\LEFT(\alpha) +c_\RIGHT(\alpha) + c_\mis(\alpha)) 
  \\&=
  \sum_{\alpha\in\{0,1,2\}}
  D(\alpha) (c_\LEFT(\alpha) +c_\RIGHT(\alpha) + c_\mis(\alpha) + 2c_\match(\alpha))
  -B\sum_{\alpha\in\{0,1,2\}}c_{\match}(\alpha) 
  \\ &=
    \sum_{\alpha\in\{0,1,2\}} D(\alpha) \cdot (\#_\alpha(S) + \#_\alpha(T)) - B \cdot c_\match 
  \\&\ge \sum_{\alpha\in\{0,1,2\}} D(\alpha) \cdot (\#_\alpha(S) + \#_\alpha(T)) - B \cdot \LCS(S,T).
\end{align*}
Where the last inequality follows from the claim.
\end{proof}

\bibliography{bib}

\appendix 
\section{Reduction to Hairpin Deletion}\label{app:equivalence} 
In this section, we close the gap between \cref{def:hairpin} and \cref{def:mod_hairpin} in order to prove \cref{thm:reduction}.
Let $\HDD'(x,y)$ be the hairpin deletion distance from $x$ to $y$ due to the original definition of hairpin operations (\cref{def:hairpin}).

We show a simple construction of strings $x'$ and $y'$ such that $\HDD(x,y) = \HDD'(x',y')$.
Thus, combining with \cref{lem:reduction}, finally proving \cref{thm:reduction}.

Let $\xmid\in [|x|]$ (resp. $\ymid \in [|y|]$) be the unique index such that $x[\xmid-1..\xmid+2] = 11\inv{11}$. (resp. $y[\ymid-1..\ymid+2] = 11\inv{11}$)
We define $x'$ of length $2|x|$ as follows.
For $i\le \xmid$ we set $x'[2i-1..2i]=2\cdot x[i]$ and for $i>\xmid$ we set $x'[2i-1..2i]=x[i]\cdot \inv{2}$.
Similarly, we define $y'$ of length $2|y|$ as follows.
For $i\le \ymid$ we set $y'[2i-1..2i]=2\cdot y[i]$ and for $i>\ymid$ we set $y'[2i-1..2i]=y[i]\cdot \inv{2}$.

For the purpose of showing that $\HDD(x,y)=\HDD'(x',y')$, we further define the \emph{Original Hairpin Deletion Graph $G'_{x'}=(V',E')$}, where $V'$ is the set of all substrings of $x'$ and there is an edge from $u$ to $v$ if $v$ can be obtained from $u$ by hairpin deletion operation, under \cref{def:hairpin}.

\begin{definition}[Original Hairpin Deletion Graph $G'_{x'}$]\label{def:original_graph}
    For a string $x'$ the \emph{Hairpin Deletion Graph} $G'_{x'}=(V,E)$ is defined as follows.
   $V$ is the set of all substrings of $x'$, and $(u,v)\in E$ if $v$ can be obtained from $u$ in a single hairpin deletion operation (as defined in \cref{def:hairpin}).  
\end{definition}
One can easily observe that $\HDD'(x',y')= \dist_{G'_{x'}}(x',y')$.
Therefore it remains to prove that $\dist_{G_{x}}(x,y)=\dist_{G'_{x'}}(x',y')$.

We present the following claims and observations, which have parallel versions in \cref{sec:well}. 

\begin{observation}[Parallel of \cref{obs:nodely}]\label{obs:nodelyprime}
    There is a unique occurrence of $y'$ in $x'$. 
    Let $y'=x'[i_y..j_y]$ be this unique occurrence.
    Let $p$ be a path from $x'$ to $y'$ in $G'_{x'}$. 
    For every vertex $x'[i..j]$ in $p$, $[i_y..j_y] \subseteq [i..j]$. 
\end{observation}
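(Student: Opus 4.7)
The plan is to prove the two assertions in turn, deriving the second from the first by a monotonicity argument essentially identical to that of \cref{obs:nodely}, so the substantive work lies in establishing that $y'$ has a unique occurrence in $x'$. The key property I would exploit is a parity/side-of-middle structure: in $x'$ the symbol $2$ appears only at odd positions in $[1..2\xmid-1]$, the symbol $\inv 2$ appears only at even positions in $[2\xmid+2..2|x|]$, and the remaining positions carry the characters of $x$, with $x[i]$ at position $2i$ for $i\le\xmid$ and at position $2i-1$ for $i>\xmid$. The identical dichotomy holds for $y'$ with $\ymid$ in place of $\xmid$.

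Given this structure, suppose $y'$ occurs at some position $k$ in $x'$. Since $y'[1]=2$ appears only at odd positions of $x'$, $k$ must be odd; write $k=2m-1$. Now consider the ``transition pair'' of $y'$ at positions $2\ymid$ and $2\ymid+1$, which carry $y[\ymid]=1$ and $y[\ymid+1]=\inv 1$ respectively. The alignment at $2\ymid$ lands at an even position of $x'$, which is either an $x$-character or an $\inv 2$; avoiding $\inv 2$ forces $m+\ymid-1\le\xmid$. The alignment at $2\ymid+1$ lands at an odd position of $x'$, which is either an $x$-character or a $2$; avoiding $2$ forces $m+\ymid>\xmid$. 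Together these inequalities pin down $m=\xmid-\ymid+1$, so at most one position $k=2\xmid-2\ymid+1$ can host $y'$. To confirm $y'$ indeed occurs at this position, I would translate the unique occurrence of $y$ in $x$ at index $\xmid-\ymid+1$ (which exists by \cref{obs:nodely} and the fact that $\xmid,\ymid$ are characterized by the unique substring $11\inv{11}$ inside $x,y$) into primed coordinates, where the parity structure yields a symbol-by-symbol match with $y'$.

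For the containment claim, the argument is the same as in \cref{obs:nodely}: hairpin deletions only remove a prefix or a suffix, so along any path from $x'$ to $y'$ the interval $[i..j]$ evolves with $i$ non-decreasing and $j$ non-increasing. The path terminates at a substring equal to $y'$, which by uniqueness must be $x'[i_y..j_y]$, so monotonicity forces $i\le i_y$ and $j\ge j_y$ at every earlier vertex, i.e., $[i_y..j_y]\subseteq[i..j]$. The main obstacle is purely bookkeeping around the parity/side-of-middle casework; no new ideas are required beyond \cref{obs:nodely} and the uniqueness of the pattern $11\inv{11}$ in $x$ and in $y$.
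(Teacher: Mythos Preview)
Your argument is correct. The paper does not supply a proof of this observation at all---it simply labels it a ``parallel'' of \cref{obs:nodely} and moves on---so there is nothing to compare against beyond the spirit of that earlier proof. Your parity/side-of-middle analysis is sound: the heart of the matter is that the only pair of adjacent positions in $x'$ both carrying symbols from $\{0,1\}$ is $(2\xmid,\,2\xmid+1)$, and likewise for $y'$ at $(2\ymid,\,2\ymid+1)$, which forces any occurrence of $y'$ in $x'$ to align these transition points. You extract this by separately deriving $m+\ymid-1\le\xmid$ and $m+\ymid>\xmid$; one could shortcut slightly by observing directly that every other length-$2$ window of $x'$ contains a $2$ or a $\inv 2$, so already the substring $y'[2\ymid..2\ymid+1]=1\inv 1$ has a unique occurrence in $x'$. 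That is only a cosmetic simplification of what you wrote. Your containment argument via monotonicity of $[i..j]$ along the path is exactly the intended one.
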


\begin{lemma}[Parallel of \cref{lem:01and10}]\label{lem:2and2}
Let $p$ be a path from $s$ to $t$ in $G'_{x'}$ such that $s,t \in 2 * \inv{2}$. 
For every vertex $x'[i..j]$ visited by $p$, we have $x[i]=2$ and $x[j] = \inv2$.
\end{lemma}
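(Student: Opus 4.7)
The plan is to argue by contradiction, using only the simplest consequence of the hairpin-deletion definition: the first and last characters of the current substring must be complementary for any hairpin deletion (right or left) to be applicable.

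First I would suppose that $p$ visits some vertex outside $2 * \inv{2}$ and let $v = x'[i..j]$ be the \emph{first} such vertex on $p$. Since $s \in 2 * \inv{2}$ we have $v \ne s$, so $v$ has a predecessor $v' \in 2 * \inv{2}$ on $p$. Since $t \in 2 * \inv{2}$ and $v \notin 2 * \inv{2}$ we also have $v \ne t$, so $v$ must admit at least one outgoing edge on $p$. The edge $v' \to v$ is either a right hairpin deletion — in which case the left endpoint of the substring is unchanged, giving $x'[i] = 2$ — or a left hairpin deletion — in which case the right endpoint is unchanged, giving $x'[j] = \inv{2}$. Combined with $v \notin 2 * \inv{2}$, exactly one of the two conditions $x'[i] = 2$ and $x'[j] = \inv{2}$ holds at $v$.

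The key step is then to show $v$ has no outgoing hairpin-deletion edge. By \cref{def:hairpin}, a right deletion of length $\ell \ge 1$ from $x'[i..j]$ requires $x'[j-\ell+1..j] = \rev{x'[i..i+\ell-1]}$; comparing the last character on each side yields $x'[j] = \inv{x'[i]}$, and the symmetric argument gives the same equation for left deletions. However, in both ``bad'' subcases identified above $x'[i]$ and $x'[j]$ fail to be complementary: if $x'[i] = 2$ and $x'[j] \ne \inv{2}$ then $\inv{x'[i]} = \inv{2} \ne x'[j]$, and if $x'[j] = \inv{2}$ and $x'[i] \ne 2$ then $\inv{x'[j]} = 2 \ne x'[i]$. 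Hence no outgoing deletion is valid at $v$, contradicting $v \ne t$. I do not anticipate any serious obstacle here: the argument uses only the base ($k = 0$) case of the matching equality imposed by a hairpin deletion, and it relies neither on the validity constraint distinguishing \cref{def:hairpin} from \cref{def:mod_hairpin} nor on any structural feature of $x'$ beyond $\inv{2} \ne 2$ — which is also why the parallel statement here is simpler than \cref{lem:01and10} and needs no additional ``furthermore'' clause.
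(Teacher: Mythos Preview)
Your proposal is correct and follows essentially the same approach as the paper: the paper defers to the proof of \cref{lem:01and10}, whose first statement is proved by taking the first violating vertex, observing that by minimality one endpoint is still the ``good'' character, and concluding that the endpoints are not complementary so no further deletion is possible. Your write-up makes the role of the predecessor edge more explicit, but the underlying argument is identical.
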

\cref{lem:2and2} can be proven similarly to \cref{lem:01and10}.

The following directly follows from \cref{lem:2and2}.
\begin{corollary}\label{cor:even_len}
Let $p'$ be a path from $x'$ to $y'$ in $G'_{x'}$, and let $x'[i..j]$ be a vertex visited by $p$.
Then, $i$ is odd and $j$ is even.
\end{corollary}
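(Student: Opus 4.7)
The plan is to invoke Lemma \ref{lem:2and2} and then read off the parity constraints directly from the construction of $x'$. First I would verify that the hypothesis $s, t \in 2 * \inv{2}$ applies to the endpoints $x'$ and $y'$ of the path $p'$. Since $1 \le \xmid$, the first block of $x'$ is $x'[1..2] = 2 \cdot x[1]$, so $x'[1] = 2$. Since $|x| > \xmid$, the last block of $x'$ is $x'[2|x|-1..2|x|] = x[|x|] \cdot \inv{2}$, so $x'[2|x|] = \inv{2}$. An identical argument using $\ymid$ shows $y'[1] = 2$ and $y'[2|y|] = \inv{2}$. Hence both $x'$ and $y'$ lie in $2 * \inv{2}$.

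Next I would apply Lemma \ref{lem:2and2} to the path $p'$ from $x'$ to $y'$: every vertex $x'[i..j]$ visited by $p'$ satisfies $x'[i] = 2$ and $x'[j] = \inv{2}$.

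Finally, I would inspect the explicit definition of $x'$ to pin down where the symbols $2$ and $\inv{2}$ can occur. For $k \le \xmid$ we have $x'[2k-1..2k] = 2 \cdot x[k]$, so the odd position $2k-1$ carries the symbol $2$, while the even position $2k$ carries $x[k] \in \{0,1\}$. For $k > \xmid$ we have $x'[2k-1..2k] = x[k] \cdot \inv{2}$, so the odd position $2k-1$ carries $x[k] \in \{0,1\}$, while the even position $2k$ carries $\inv{2}$. In particular, every occurrence of the symbol $2$ in $x'$ sits at an odd index and every occurrence of $\inv{2}$ sits at an even index. Combining this with the conclusion of the previous paragraph, $x'[i] = 2$ forces $i$ to be odd and $x'[j] = \inv{2}$ forces $j$ to be even, which is exactly the statement.

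There is no serious obstacle here; the corollary is essentially a bookkeeping consequence of Lemma \ref{lem:2and2} together with the parity discipline baked into the construction $x \mapsto x'$. The only thing to be careful about is verifying that the hypothesis of Lemma \ref{lem:2and2} is indeed available, which amounts to the straightforward boundary computation at positions $1$ and $2|x|$ (and analogously for $y'$).
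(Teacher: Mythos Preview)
Your proposal is correct and matches the paper's approach: the paper simply asserts that the corollary ``directly follows from \cref{lem:2and2},'' and your argument spells out exactly that implication by checking the hypothesis of \cref{lem:2and2} and then reading off the parity of the positions of $2$ and $\inv{2}$ from the construction of $x'$.
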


We are ready to prove the main lemma of this section.

\begin{lemma}\label{lem:equivlance}
    $\dist_{G_{x}}(x,y)=\dist_{G'_{x'}}(x',y')$
\end{lemma}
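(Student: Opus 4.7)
The plan is to set up a length-preserving bijection between $x$-to-$y$ paths in $G_x$ and $x'$-to-$y'$ paths in $G'_{x'}$, via the vertex map $\phi(x[a..b]) = x'[2a-1..2b]$. By \cref{cor:even_len} and \cref{obs:nodelyprime}, every vertex visited by a path from $x'$ to $y'$ in $G'_{x'}$ has odd left endpoint, even right endpoint, and contains the unique occurrence of $y'$; hence $\phi$ is a bijection between the two families of path-vertices, and every such vertex $x[a..b]$ satisfies $a \le \xmid$ and $b > \xmid$ because the occurrence of $y'$ in $x'$ straddles the padding breakpoint $2\xmid$.

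First I would prove the easier direction $\dist_{G_x}(x,y) \le \dist_{G'_{x'}}(x',y')$. Any hairpin deletion edge in $G'_{x'}$ on $x'[2a-1..2b]$ has length $2\ell$ (by parity) and imposes $x'[2a-1+k] = \inv{x'[2b-k]}$ for $k = 0,\dots,2\ell-1$, plus the completion-validity equation $x'[2(a+\ell)-1] = \inv{x'[2(b-\ell)]}$. Splitting by parity of $k$ and case-analyzing whether $a+m$ and $b-m$ are $\le \xmid$ or $>\xmid$, the two mixed-padding cases force an $x$-character (from $\{0,1\}$) to equal the padding character $2$ or $\inv 2$ and are therefore impossible; the remaining two cases each yield $x[a+m] = \inv{x[b-m]}$. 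So the corresponding length-$\ell$ modified $G_x$ deletion satisfies its hairpin condition---and since the modified definition has no extra validity requirement, the $G_x$ edge exists. This maps $G'_{x'}$ paths to $G_x$ paths of the same length.

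For the reverse direction $\dist_{G'_{x'}}(x',y') \le \dist_{G_x}(x,y)$, I would take a well-behaved shortest $G_x$-path $P$ (guaranteed by \cref{lem:maintech}) and lift every length-$\ell$ edge to a length-$2\ell$ edge in $G'_{x'}$. The $G'_{x'}$ hairpin condition then follows from the $G_x$ condition plus the padding structure, and the completion-validity equation reduces to $2 = \inv{\inv 2}$ provided $a + \ell \le \xmid$ and $b - \ell > \xmid$. The main obstacle is a uniform per-edge bound on $\ell$: revisiting the phase constructions in the proofs of \cref{lem:nonsyncdel,cor:syncdeldisagree,lem:syncdelagree} (together with \cref{lem:FibDistance} for the synchronized phase), every single deletion step in $P$ deletes at most $|\PL|+|\SL|$ characters, which by direct numerical computation is strictly less than both $\ymid$ (lower-bounding $\xmid - a$) and $|y|-\ymid$ (lower-bounding $b - \xmid$). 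Since $a \le \xmid - \ymid + 1$ and $b \ge \xmid + |y| - \ymid$ at every vertex of $P$, the lift is valid at every edge of $P$.
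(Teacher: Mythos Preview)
Your first direction ($\dist_{G_x}(x,y) \le \dist_{G'_{x'}}(x',y')$) is correct and is essentially the paper's argument: the parity case-split on the padding characters is exactly how one sees that a $G'_{x'}$ edge projects to a $G_x$ edge, and indeed you spell out the four cases more carefully than the paper does.

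The second direction has a real gap. You take ``a well-behaved shortest $G_x$-path $P$ guaranteed by \cref{lem:maintech}'' and then claim every step of $P$ has length at most $|\PL|+|\SL|$, citing the constructions inside the proofs of \cref{lem:nonsyncdel,cor:syncdeldisagree,lem:syncdelagree,lem:FibDistance}. But \cref{lem:maintech} only guarantees that $P$ visits the checkpoints $x[\leftP_\ell..\rightP_r]$; between checkpoints the sub-paths are arbitrary shortest paths, not the specific ones built in those lemmas, so your per-edge bound is unjustified for $P$ as defined. You could repair this by instead taking the explicit path from the upper-bound half of \cref{lem:reduction} (whose phases \emph{are} those constructions, and which is shortest by the matching lower bound), but then you are invoking the entire machinery of \cref{sec:costSteps,sec:correctness} to prove what the paper does in a few lines.

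The paper's route for this direction is far more direct and works for \emph{any} shortest path in $G_x$: by \cref{obs:nodely} every vertex on the path still contains the unique copy of $y$, so after a left deletion of length $\ell$ the new left endpoint satisfies $i_a+\ell\le\leftP_{|S|+1}<\xmid$. The hairpin equality $x[i_a..i_a+\ell-1]=\rev{x[j_a-\ell+1..j_a]}$, together with the fact that $11$ does not occur to the left of the pivot in $x$, then forces $j_a-\ell>\xmid$ as well (otherwise the right interval meets $11\inv{11}$ and reflects a $11$ into the left interval). With both inequalities in hand the $2$'s and $\inv 2$'s line up, giving both the $G'_{x'}$ hairpin equality and the extra \cref{def:hairpin} validity condition automatically. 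No per-step length bound, no well-behavedness, and no appeal to the Fibonacci analysis are needed.
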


\begin{proof}
    We prove the lemma by showing $\dist_{G_{x}}(x,y)\ge\dist_{G'_{x'}}(x',y')$ and $\dist_{G_{x}}(x,y)\le\dist_{G'_{x'}}(x',y')$.

    \para{\underline{$\dist_{G_{x}}(x,y)\ge\dist_{G'_{x'}}(x',y')$}}
    Let $p=(x=u_1,\ldots ,u_k=y)$ be a shortest path from $x$ to $y$ in $G_{x}$.
    For $a\in[k]$ let $u_a=x[i_a..j_a]$ and we define $u'_a=x'[2i_a-1..2j_a]$.
    We next show that for $a\in[k-1]$ the edge $(u'_a,u'_{a+1})$ is in $G'_{x'}$ and therefore $p'=(u'_1,u'_2,\ldots,u'_k)$ is an $x'$ to $y'$ path of length $k=\dist_{G_x}(x,y)$ in $G'_{x'}$.

    Assume without loss of generality that $(u_a,u_{a+1})$ is a left edge in $G_x$. 
    Therefore, we have $x[i_a..i_a + \ell - 1] = \rev{x[j_a -\ell+1 .. j_a]}$ for $\ell = i_{a+1} - i_a$.
    By our construction, $x'[2i_a-1..2(i_a + \ell - 1)] = \rev{x'[2(j_a -\ell+1)-1 .. 2j_a]}$ (the reduction simply added $2$ symbols in the odd positions of the left substring and $\inv{2}$ in the even positions of the right substring).
    Since $x'[2(i_a + \ell - 1)+1]=2$ and $x'[2(j_a -\ell+1)-2]=\inv{2}$ it follows from \cref{def:hairpin} that there exists a valid left hairpin deletion operation of length $2\ell$ as required.
    
    \para{\underline{$\dist_{G_{x}}(x,y)\le\dist_{G'_{x'}}(x',y')$}}
    Let $p'=(x'=u'_1,u'_2,\ldots,u'_k=y')$ be a shortest path from $x'$ to $y'$ in $G'_{x'}$.
    By \cref{cor:even_len} for every $a\in[k]$ we have $u'_a=x'[2i_a-1..2j_a]$ for some integers $i_a,j_a$.
    For every $a\in [k]$ we define $u_a = x[i_a..j_a]$.

    We prove that for every $a \in [k-1]$, the edge $(u_a,u_{a+1})$ exists in $G_x$, and therefore the path $p=(x=u_1,u_2, \ldots ,u_k)$ is an $x$ to $y$ path in $G_x$ with length $k = \dist_{G'_{x'}}(x',y')$.

   Assume without loss of generality that $(u'_a,u'_{a+1})$ is a left edge (corresponding to a left hairpin deletion) in $G'_{x'}$. 
    Therefore, we have $x'[2i_a-1..2i_a - 1 + \ell' - 1] = \rev{x'[2j_a -\ell'+1 .. 2j_a]}$ for $\ell' = 2i_{a+1} - 2i_a$.
    Note that $\ell' = 2\ell$ for some integer $\ell$.
    By our construction, $x[i_a..i_a + \ell - 1] = \rev{x[j_a -\ell+1 .. j_a]}$.
    It follows from \cref{def:mod_hairpin} that there exists a valid left hairpin deletion operation of length $\ell$ as required.
    
\end{proof}

\section{Missing Proofs from \texorpdfstring{\cref{sec:well}}{Section 4}}\label{app:missingwell}

\subsection{Missing proofs of \texorpdfstring{\cref{sec:local}}{Section 4.1}}\label{app:missinglocl}
\nodelyObs*
\begin{proof}
    First, notice that the occurrence of $y$ is indeed unique since $y$ contains the only occurrence of $11\inv{11}$ in $x$ (to the left of $y$ there is no occurrence of $11$ and to the right of $y$ there is no occurrence of $\inv{11}$).
    It immediately follows that if a sequence of hairpin deletion operations deletes a segment of $y$ (which corresponds to visiting a vertex $x[i..j]$ with $[\leftP_{|S|+1} .. \rightP_{|T|+1}] \not\subseteq [i..j]$), it can no longer reach $y$. 
\end{proof}

\lemzerooneandonezero*
\begin{proof}
    We start by proving the first statement.
    Assume by contradiction that the lemma is false and consider the first vertex  $u = x[i..j]$ visited by $p$ with $x[i]=1$ or $x[j]=\inv1$, without loss of generality, assume that $x[i]=1$.
    By the minimality of $u$, we have $x[j]=\inv0 \ne \inv{x[i]}$. 
    It follows that no hairpin deletion operation can be applied on $u$ and there are no outgoing edges in $G_x$ from $u$.
    Note that $u\neq t$, which contradicts the assumption that $p$ is a path to $y$.
    
    We proceed to prove the second statement.
    By contradiction, consider the first vertex $v = x[i..j]$ visited by $p$  with $x[i+1] =0$ and $x[j-1] = \inv0$.
    Assume without loss of generality that the hairpin deletion leading to $v$ is a left hairpin operation, so we have that the vertex precedes $v$ in $p$ is $u = x[i'..j]$ for some $i' < i$ that satisfies $x[i'+1] = 1$ (note that in the initial vertex $s$ satisfies the condition of the lemma). 
    Recall that $x[j-1..j] = \inv{00}$ and $x[i'..i'+1] =01$.
    Therefore, the only valid left hairpin deletion operation on $v$ is of length $1$, resulting in $i = i'+1$ and $x[i] = 1$, contradicting the first claim.
\end{proof}

\lemretzerooneandonezero*
\begin{proof}
    We prove the first statement; the second statement can be proved symmetrically.
    If $k=1$, we are done. 
    Notice that for a string of the form $01 * \inv{10^k}$ with $k \ge 2$, the only legal hairpin deletion operations are of length $1$ (either from the left or from the right).
    A deletion from the left of length one would result in $x[i+1..j] \in 1*$, which contradicts \cref{lem:01and10}.
    Therefore, $p$ visits $x[i..j-1]$.
    By induction on $k'$, the path $p$ eventually visits $x[i..j-k']$.
\end{proof}

\lemsync*
\begin{proof}
We prove the first statement; the second statement can be proven similarly.
Note that every $\SL$ outside of $y$ is followed by a left gadget starting with $01$, therefore $x[i_s.. j_s + 2] = \SL \cdot 01 =0101$.
Let $u= x[i..j]$ be the first vertex in $p$ with $i > i_s + 1$.
Notice that $u$ is preceded in $p$ by $v=x[i'..j]$ for some $i' < i$  (in particular, $u$ is obtainable from $v$ via a single left hairpin deletion operation).
We claim that $i \le  j_s + 2$. 
Assume to the contrary that $i \ge j_s + 3$.
Due to $u$ being obtainable from $v$, we have $x[i'..j_s+2] = \rev{x[j'..j]}$ for some index $j'\in [|x|]$.
In particular, a single hairpin deletion operation was applied by $p$ to delete a superstring of $x[i_s+1..j_s+2] = 101$.
Due to \cref{obs:nodely}, $j$ is to the right of $y$.
By the construction of $x$, there is no occurrence of $\rev{101} = \overline{101}$ to the right of the central $11\inv{11}$ in $y$.
It follows that $j'$ is an index to the left of the occurrence of $11\overline{11}$ in $y$.
In particular, $x[j'..j]$ contains the substring $11\overline{11}$.
Recall that $x[i'..j_s+2]$ is to the left of $y$ in $x$, and that there is no occurrence of $\rev{11\overline{11}} =11\overline{11}$ in $x$.
This contradicts the equality $x[i'..j_s+2] = \rev{x[j'..j]}$.

We have shown that $i \le j_s+2$.
Due to \cref{lem:01and10} and $x[j_s+2] = 1$ we have $i\ne j_s+2$, thus $i<j_s+2$.
It follows that $x[i..j_s+2] = 0^z1$ for $z=j_s + 2 -i\ge 1$.
It follows from \cref{lem:ret0110} that $p$ visits $x[j_s +1 ..j]$ as required.
\end{proof}

\subsection{Missing proofs of \texorpdfstring{\cref{sec:trans}}{Section 4.2}}\label{app:missingtrans}
\lemidnotseer*
\begin{proof}
We prove the first statement, the proof for the second statement follows symmetrically.
Assume to the contrary that there is $r\in[|T|]$ such that the $r$th  $\PR$ gadget is contained in $x[j_2 .. j_1]$

Due to \cref{cor:sync}, the path $p$ visits a vertex $z = x[i' .. \rightP_{r}]$ for some $i'$.
In particular, since $\rightP_{r} \in [j_2 .. j_1]$ we must have $i' \in [\leftI_\ell .. \leftP_{\ell+1}]$.
Due to \cref{lem:ret0110}, we can assume that $x[i'..i'+1] = 01$.
Notice that $z$ occurs between $v$ and $u$ in $p$.
Consider the sub-path $q$ between $z$ and $u$ in $p$. 
We show an alternative path $q^*$ from $z$ to $u$ that is strictly shorter than $q$, which contradicts the minimality of $p$.

Let $x[a..b]$ be an internal vertex in $q$ such that $b$ is an index in some $\PR$ gadget and the following vertex in $q$ is $x[a..b']$ with $b' < b$.
Note that every symbol deletion from a $\PR$ gadget in $q$ corresponds to such vertex.  
Since $a \in [\leftI_\ell .. \leftP_{\ell+1} - 1]$, \cref{clm:single1char} suggests that at most a single $\inv1$ character of the $\PR$ gadget is deleted in this operation.
It follows that every $\PR$ gadget contained in $[j_2.. j_1]$, induces a cost of at least $\#_{\inv1}(\PR) = \p$ for $q$. 
Let $k_\prot$ be the number of such $\PR$ gadgets.
The above reasoning shows that $\cost(q) \ge k_\prot \cdot \p$.

We proceed to present $q^*$.
Initially, $q^*$ deletes the prefix $x[i' .. \leftP_{\ell+1}-1]$ consisting of a suffix of $\IL$ and $\SL$, a single $'1'$ at a time.
Notice that this is possible (regardless of $\ell$th left information gadget being $\IL(0)$, $\IL(1)$ or $\IL(2)$) due to $x[\rightP_{r} - 8 .. \rightP_{r}] = \inv{0^710}$.
This induces a cost of at least $\#_{1}(\IL(\alpha)) + 1 = \i_\alpha + 1 \le \i_0 + 1$ (for $\alpha=S[\ell]$).
The rest of the path from $z'=x[\leftP_{\ell+1} .. \rightP_{r}]$ to $u=x[\leftP_{\ell+1} .. j_2]$  removes $x[j_2+1..\rightP_{r}]$, we denote $I_\remove=[j_2+1..\rightP_{r}]$.
We describe the rest of $q^*$ from $z'$ as a concatenation of sub-paths from $x[\leftP_{\ell+1} .. \rightP_{a}]$ to $x[\leftP_{\ell+1} .. \max\{\rightP_{a+1},j_2\}]$ such that $\rightP_{a}\in I_\remove$.

If the $a$th $\PR$ gadget and the following $\SR$ gadgets are fully contained in $I_\remove$, then $q^*$ deletes the $a$th $\PR$ gadget and the following $\SR$ gadget in one step, reaching $\rightI_a$.
This is possible since $x[\leftP_{\ell+1}..\leftI_{\ell+1}]=\rev{\SR\cdot\PR}$.
Otherwise, $q^*$ removes $x[j_2+1..\rightP_{a}]$ in one step.
$q^*$ proceeds to deal with the $a$th right information gadget and the following $\SR$ gadget.
If the $a$th right information gadget and the following $\SR$ gadgets are fully contained in $I_\remove$, then $q^*$ deletes the $a$th $\IR$ gadget, a single $\inv 1$ at a time until reaching the next $\SR$ gadget, $q^*$ then deletes the $\SR$ gadget using an additional hairpin deletion operation.
Notice that this is possible 
(regardless of $a$th right information gadget being $\IR(0)$, $\IR(1)$ or $\IR(2)$) due to $x[\leftP_{\ell+1} ..  \leftP_{\ell+1}  + 8] = 010^7$.
These deletions induce a cost of $\#_{\inv1}(\IR(T[a])) + 1$ to $p^*$.
Otherwise, if the $a$th right information gadget or the following $\SR$ gadget contains $j_2$, then $q^*$ removes $x[j_2+1..\rightI_{a}]$ in the same fashion, in at most $\#_{\inv1}(\IR(T[a])) + 1 \le \i_0 + 1$ steps.
This concludes the construction of $q^*$.

We proceed to bound the cost of $q^*$. 
Denote the number of $\IR$ gadgets with indices in $I_{\remove}$ as $k_\info$.
Note that since $x[j_2+1 .. \rightP_{r}]$ (the substring of $x$ corresponding to $I_\remove$) ends with a $\PR$ gadget, we have $k_\info \le k_\prot$.
It follows from the above analysis that the cost of $q^*$ is at most 
\[\cost(q^*)\le  ( k_\prot+1) + (\i_0 + 1) \cdot k_\info + \i_0 +1
\le (\i_0 + 2)\cdot k_\prot + \i_0 +2 .\]
Recall that $\cost(q) \ge  k_\prot \cdot \p$. 
Since $144=\p > 2\cdot \i_0 + 5=115$ and $k_\prot \ge 1$, we have that $\cost(q) > \cost(q^*)$, this concludes the proof.
\end{proof}

\lemprotDelay*
\begin{proof}
We prove the first statement, and the rest of the statements can be proven similarly.
Since $(v,u)$ is an edge in $G_x$ we have $x[i..i+k - 1]=\rev{x[j-k+1..j]}$.
We distinguish between two cases.
First, consider the case when $x[i..i+1]\ne 01$.
From \cref{lem:01and10}  and our assumption that $p$ visits $x[i.. j]$, we have that $x[i..i+1] = 00$ and $x[j-1..j] = \inv{10}$.
Therefore, we must have $k = 1$.
Since $x[\leftP_\ell]= 0$ and $x[j] = \inv0$, the edge $(v',u')$ is in $G_x$.

Next, consider the case where $x[i..i+1]=01$. For this case, we make the following claim regarding $k$.
\begin{claim*}
$k\le \leftI_\ell-i$.
\end{claim*}
\begin{claimproof}
    Assume by contradiction that $k>\leftI_\ell-i$. 
    Since $x[i..i+1]=01$, it must be that $i\le \leftI_\ell-|\SL|$ (the location of the last $01$ in $x[\leftP_\ell..\leftI_\ell-1])$. 
    Thus, $x[i..i+k]$ contains $x[\leftI_\ell-|\SL|+1..\leftI_\ell+1]=101$ (for the cases regarding $\PR$ and $\IR(\alpha)$ we have $x[j-k..j]$ contains $\rev{1001}$).
    Notice that $\rev{101}$ does not appear to the right of $y$.
    Contradicts the assumption  $x[i..i+k - 1]=\rev{x[j-k + 1..j]}$.
\end{claimproof}

Thus, it remains to prove that $x[\leftP_\ell..\leftP_\ell+k-1]=x[i..i+k-1]$.
Recall that we are considering the case in which $x[i..i+1] = 01$. 
Recall that $\PL = (010^9)^{\p}$ is periodic with period of length $11$ (for the case regarding $\IL(\alpha)$ we have $\IL(0)=010^3$, $\IL(1)=010^5$ and $\IL(2)=010^7$; The cases regarding $\PR,\IR(\alpha)$ are symmetric).
Thus, in this case, we must have $i = \leftP_\ell + q\cdot 11$ for some non-negative integer $q$.
Furthermore, since $x[\leftI_\ell-|\SL|..\leftI_\ell]=\SL \cdot 0 = 010$ is a prefix of the period of $\PL$, we have that $x[\leftP_\ell .. \leftI_\ell]$ is periodic with period of length $11$.
Due to $i+ k \le \leftI_\ell$ (according to the claim), we have $x[\leftP_\ell .. \leftP_\ell + k-1] = x[i .. i+k -1]$.
It follows that $x[\leftP_\ell .. \leftP_\ell + k - 1] = \rev{x[j-k + 1 .. j]}$ and $(v',u')$ is an edge an $G_x$, as required.
\end{proof}

\section{Analysis of Deletion of Synchronized Gadgets}\label{app:FibDel}
This section is dedicated to proving \cref{lem:FibDistance}.
We start by providing some intuition.
As an exercise, consider the hairpin distance from $01\inv 0$ to $0^a1\inv{0}^a$.
It turns out that this distance is closely related to the Fibonacci sequence.
It is easier to observe this relationship when considering the problem in terms of hairpin completion distance.
In this setting, a Fibonacci sequence represented a 'greedy' sequence of completions - taking the larger number of $0$ or $\inv{0}$ symbols and using them to 'complete' the other side.
The following lemma is more technical, fitting our reduction, that in a sense claims that after $x$ hairpin completion operations starting with $01\inv 0$ and ending with $0^a1\inv0^b$ we have $\min\{a,b\}\le\Fib(x)$ and $\max\{a,b\}\le\Fib(x+1)$.

\begin{lemma}\label{lem:fib}
Let $\per=\boxed{010^\ext}$ for $\ext\ge 3$ and $q\in\boxed{0101*11\inv{11}*\inv{10010}}$ be two strings.
Let $k$ be a positive integer and let $a,b \ge k$ be two positive integers. 
Starting with $\per^k \cdot q\cdot \rev{\per}^k$,  after $x$ hairpin completion operations ending with $\per^a\cdot q\cdot \rev{\per}^b$, we have $\min\{a,b\}\le k \cdot \Fib(x)$ and $\max\{a,b\}\le k \cdot \Fib(x+1)$.
\end{lemma}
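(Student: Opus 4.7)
I would prove the lemma by induction on $x$. The base case $x=0$ is immediate: the final state coincides with the initial state, so $a=b=k$, and indeed $\min\{a,b\}=\max\{a,b\}=k=k\cdot\Fib(0)=k\cdot\Fib(1)$.

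The core of the plan is to show that, without loss of generality, every intermediate state $s_i$ obtained after $i$ operations is of the canonical form $\per^{A_i}\cdot q\cdot\rev{\per}^{B_i}$ for integers $A_i,B_i\ge k$, with $A_0=B_0=k$ and $A_x=a,\,B_x=b$. Granting this, consider the last operation and suppose it is a right hairpin completion of length $\ell$ (the left case is symmetric). Then the appended string $\rev{s_{x-1}[1..\ell]}$ must coincide with the suffix of length $\ell$ of $\per^{a}\cdot q\cdot\rev{\per}^{b}$. Because that suffix lies entirely within the $|\per|$-periodic block $\rev{\per}^{b}$, and because the reverse-complement of a prefix of $s_{x-1}$ that crosses into $q$ would break the period, $\ell$ must be a multiple of $|\per|$, say $\ell=m\cdot|\per|$, and the prefix $s_{x-1}[1..\ell]$ must be exactly $\per^{m}$; hence $m\le A_{x-1}$. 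This yields the recurrence $A_x=A_{x-1}$ and $B_x=B_{x-1}+m\le B_{x-1}+A_{x-1}$ (and symmetrically for a left completion).

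The inductive step then follows routinely. Assuming the bounds for $x-1$, split on whether $A_{x-1}\le B_{x-1}$ or $B_{x-1}<A_{x-1}$, apply the induction hypothesis, and use $\Fib(x+1)=\Fib(x)+\Fib(x-1)$ to conclude $\min\{A_x,B_x\}\le k\cdot\Fib(x)$ and $\max\{A_x,B_x\}\le k\cdot\Fib(x+1)$, as required.

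The main obstacle is justifying the reduction to canonical intermediate states. A priori, a hairpin completion whose length is not a multiple of $|\per|$ could produce a state $s_i$ outside the canonical form. I plan to handle this by a forward induction on $i$: starting from the canonical $s_0=\per^k\cdot q\cdot\rev{\per}^k$, I show that each operation must preserve canonicality. The key leverage is the uniqueness of $11\inv{11}$ in the final state, combined with the identity $\rev{11\inv{11}}=11\inv{11}$: an appended or prepended block containing $11\inv{11}$ would survive in the final string as a duplicate occurrence, a contradiction. Coupled with the fact that $q$ begins with $0101$, whose reverse-complement $\inv{1010}$ cannot appear inside $\rev{\per}^{\ast}$ (since $\rev{\per}$ contains a single $\inv{1}$ per period of length $\ext+2\ge 5$) and symmetrically that $q$ ends with $\inv{10010}$, this forces each operation length to land on a $\per$-boundary and to stay within the current $\per^{A_i}$ or $\rev{\per}^{B_i}$ blocks, preserving the canonical form and validating the Fibonacci recurrence above.
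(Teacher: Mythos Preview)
Your overall induction scheme and your treatment of the ``canonical'' step (the last operation lands on a full $\per$-boundary, giving $B_x\le A_{x-1}+B_{x-1}$) match the paper's Case~1 and are correct. The gap is in your forward-induction claim that \emph{every} intermediate state $s_i$ must be of the form $\per^{A_i}\cdot q\cdot\rev{\per}^{B_i}$. This is simply false, and none of the ingredients you invoke rule it out. Your $11\inv{11}$-uniqueness and $0101$/$\inv{10010}$ arguments only prevent a completion from crossing into $q$; they say nothing about \emph{short} completions. Concretely, from a canonical state a right hairpin completion of length $1$ appends a single $\inv{0}$ and yields $\per^{A}\cdot q\cdot\rev{\per}^{B}\cdot\inv{0}$, which is not canonical yet is a perfectly legal prefix of the target's right block and can be continued to $\per^a\cdot q\cdot\rev{\per}^b$. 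So ``each operation length must land on a $\per$-boundary'' is not forced.

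The paper does \emph{not} assume canonical intermediates. Instead it explicitly handles the non-canonical penultimate state as a second case: if $S_x=\per^{a}\cdot q\cdot\rev{\per}^{b'}\cdot\per'$ with $\per'$ a proper prefix of $\rev{\per}$, then one traces back to the last index $y$ with $S_y\in*\inv{10}$ and argues, via the $01*\cup*\inv{10}$ invariant (\cref{cor:0110deletion}), that every operation in between was a right completion of length~$1$, so $y=x-|\per'|$ and $S_y=\per^{a}\cdot q\cdot\rev{\per}^{b'}$ is canonical. Applying the induction hypothesis at the earlier index $x-|\per'|$ and absorbing the $|\per'|$ ``wasted'' steps then yields the desired bound. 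To repair your proof you need either this Case~2 analysis, or a genuine replacement argument showing that any length-$x$ sequence can be converted to a no-longer sequence with only canonical intermediates; the latter is essentially equivalent work and is not supplied by your current sketch.
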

\begin{proof}
For a sequence of $x$ hairpin completion operations, let $S_i$ be the string after applying the first $i$ operations.
First, observe that $q$ appears only once in $\per^a\cdot q\cdot \rev{\per}^b$, since $q$ contains the string $11\inv{11}$, and $\per^a,\rev{\per}^b$ do not.
Thus, at every $S_i$, $q$ appears exactly once and $S_i$ has a unique parsing $S_i = X\cdot q\cdot Y$ for some strings $X$ and $Y$.

We prove the claim by induction on $x$.
As a base, for $x=0$ the claim follows.
Assume the claim holds for all natural numbers up to $x$ and we prove for $x+1$. 
Consider a sequence of $x+1$ hairpin completion operations starting with $\per^k \cdot q \cdot \rev{\per}^k$, ending with $\per^a \cdot q \cdot \rev{\per}^b$.
Without loss of generality, we assume that the $(x+1)$th operation was a right hairpin completion.
We consider two cases regarding $S_x$.
\para{Case 1 $S_x=\per^a \cdot q \cdot \rev{\per}^{b'}$ for some $b'<b$:}
Notice that the $(x+1)$th right hairpin completion copies exactly $|\per|\cdot(b-b')$ characters. 
We claim that $b-b' \le a$.
Assume to the contrary that $b-b'>a$, in particular, the last operation copies the prefix of length $|\per|$ of $q$ (since $(b-b')\in\N$).
This prefix contains the string $101$ such that $\rev{101} = \inv{101}$ is not included in $\rev{\per}^b$ (since $\inv{\per}=\inv{010^\ext}$ and $\ext\ge 3$) contrary to our case (in the symmetric case of left hairpin completion operation, a suffix of $q$ is copied which contains $\rev{1001}$ as a substring).
To conclude, we just proved that $b\le a+b'$.
By the induction assumption,  $\min\{a,b'\}\le k\cdot \Fib(x)$ and $\max\{a,b'\}\le k\cdot  \Fib(x+1)$.
Thus, $\min \{a,b\}\le\min\{a,a+b'\}= a \le \max\{a,b'\}\le k\cdot \Fib(x+1)$, and $\max\{a,b\}\le\max\{a,a+b'\}=a+b'\le k\cdot (\Fib(x)+\Fib(x+1))= k\cdot \Fib(x+2)$.

\para{Case 2 $S_x=\per^a \cdot q \cdot \rev{\per}^{b'}\cdot \per'$ where $b'<b$ and $\per'$ is a proper prefix of $\rev \per$:} 

Let $y$ be the maximal index in $[x]$ such that $S_y \in * 10$.
We claim that $S_y = \per^a \cdot q \cdot \rev{\per}^{b'}$ and $y = x-|\per'|$.
We prove the claim by showing that for every $z\in [y+1..x]$, the $z$th operation is a right hairpin completion of length $1$.
Note that for $z\in [y+1 .. x]$ we have $S_z = * \inv{00}$, and therefore it must also satisfy $S_z = 01 * \inv{00}$ according to \cref{cor:0110deletion}.
Clearly, the $z$th operation is not a left hairpin completion operation, as this would result in $S_{z} \in 00 * \inv{00}$ which contradicts \cref{cor:0110deletion}.

Since $\inv{S_z[2]} \neq S_z[|S_z| - 1]$, the only valid right hairpin completion on $S_{z-1}$ is of length $1$ as required.
Therefore, we have $S_{x-|\per'|} = \per^a \cdot q \cdot \rev{\per}^{b'}$ and by the induction assumption we have $\min(a,b') \le k \cdot \Fib(x-|\per'|)$ and $\max(c,d') \le k \cdot \Fib(x-|\per'|+1)$.

Notice that the $(x+1)$th operation copies $|\per|(b-b') - |\per'|$ characters. 
We claim that $b-b' \le a+1$.
Assume to the contrary that $b-b' > a+1$.
Recall that $|\per'| \le |\per|-1$.
It follows that, the $(x+1)$th operation copies a prefix of length $|\per|$ of $q$, which is a contradiction due to the same reasoning as in case 1.

To conclude, we just proved that $b\le a+b' + 1$.
By induction assumption,  $\min\{a,b'\}\le k\cdot \Fib(x-|\per'|)$ and $\max\{a,b'\}\le k \cdot \Fib(x- |\per'| + 1)$.
Thus, $\min \{a,b\}\le\min\{a,a+b' + 1\}= a \le \max\{a,b'\}\le k\cdot \Fib(x-|\per'|+1)\le k \cdot \Fib(x+1)$, and $\max\{a,b\}\le\max\{a,a+b'+1\}=a+b'+1\le k \cdot (\Fib(x-|\per'|)+ \Fib(x-|\per'|+1)) + 1 = k\cdot \Fib(x-|\per'|+2) + 1 \overset{(*)}{\le} k\cdot\Fib(x+1) + 1 \le k\cdot \Fib(x+2)$ where $(*)$ is due to $|\per'| \ge 1$.
This concludes the proof.
\end{proof}

The following lemma uses \cref{lem:fib} to compute the hairpin completion distance from $\per^k\cdot  q \cdot \rev{\per}^k$ to $\per^a\cdot  q \cdot \rev{\per}^a$ for every two integers $a\ge k$.

\begin{lemma}\label{lem:hairpinfib}
Let $\per=010^\ext$ with $\ext\ge 3$ and let $q\in0101*11\inv{11}*\inv{10010}$.
For every two positive integers $a \ge k$, we have  
$\HC(\per^k\cdot  q \cdot \rev{\per}^k,\per^a\cdot  q \cdot \rev{\per}^a)=\Fib^{-1}(\frac{a}{k})$.
\end{lemma}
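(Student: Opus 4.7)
The plan is to prove the equality by establishing both directions of the inequality, leveraging \cref{lem:fib} for the lower bound and giving an explicit Fibonacci-like construction for the upper bound.

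For the lower bound $\HC(\per^k\cdot q\cdot \rev{\per}^k,\per^a\cdot q\cdot \rev{\per}^a)\ge \Fib^{-1}(a/k)$, I would apply \cref{lem:fib} directly. Suppose a sequence of $x$ hairpin completion operations transforms $\per^k\cdot q\cdot\rev{\per}^k$ into $\per^a\cdot q\cdot \rev{\per}^a$. Taking both the prefix and suffix counts equal to $a$ in the statement of \cref{lem:fib} gives $a=\min\{a,a\}\le k\cdot \Fib(x)$, hence $\Fib(x)\ge a/k$, which by definition of $\Fib^{-1}$ yields $x\ge \Fib^{-1}(a/k)$.

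For the upper bound, I would exhibit a sequence of exactly $x:=\Fib^{-1}(a/k)$ operations. I would maintain the invariant that after $i$ operations the current string has the form $\per^{A_i}\cdot q\cdot \rev{\per}^{B_i}$, so the state can be tracked by the pair $(A_i,B_i)$, starting at $(k,k)$. A right hairpin completion of length $m\cdot|\per|$ with $m\le A_i$ advances to $(A_i,B_i+m)$; symmetrically a left completion of length $m\cdot|\per|$ with $m\le B_i$ advances to $(A_i+m,B_i)$, because the relevant prefix (resp. suffix) of the current string is exactly $\per^m$ (resp. $\rev{\per^m}$), whose reverse complement fits cleanly as additional copies of $\rev{\per}$ (resp.\ $\per$). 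The case $x=0$ (i.e.\ $a=k$) is trivial, and $x=1$ cannot occur since $\Fib(0)=\Fib(1)=1$, so I may assume $x\ge 2$, equivalently $k\Fib(x-1)<a\le k\Fib(x)$.

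The construction proceeds in two stages. For the first $x-2$ operations, I would apply alternating maximal steps (right, left, right, \ldots), each time copying all $|\per|$-blocks from the opposite side; an immediate induction shows this reaches the pair $(k\Fib(x-1),k\Fib(x-2))$ (up to swapping the two coordinates, depending on the parity of $x$). For the final two operations I would choose the lengths carefully: set $c_1:=a-k\Fib(x-2)$ for the first completion and $c_2:=a-k\Fib(x-1)$ for the second. The bounds $k\Fib(x-1)<a\le k\Fib(x)=k\Fib(x-1)+k\Fib(x-2)$ together with the identity $\Fib(x)=\Fib(x-1)+\Fib(x-2)$ give $1\le c_1\le k\Fib(x-1)$ and $1\le c_2\le a$, placing both lengths in the valid ranges. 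After these two steps the state becomes $(a,a)$, as required. The edge case $x=2$ is handled by the same formula with no preliminary maximal steps: $c_1=c_2=a-k\in[1,k]$ directly.

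The main obstacle will be verifying the range constraints for $c_1,c_2$ in the final two steps, together with justifying the parity-dependent orientation of the pair after the maximal-growth phase; these are straightforward Fibonacci computations but need careful bookkeeping to make sure the last two operations act on the intended side. Once these numeric checks are in place, the two inequalities match and yield the claimed equality.
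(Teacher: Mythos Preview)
Your proposal is correct and follows essentially the same approach as the paper: the lower bound via \cref{lem:fib} and the upper bound via an explicit Fibonacci-growth construction. Your lower bound is in fact more direct than the paper's (you apply the $\min$-bound of \cref{lem:fib} to the final state $\per^a\cdot q\cdot\rev{\per}^a$, whereas the paper argues via the first time one side reaches $a$ and invokes the $\max$-bound together with an operation-type argument), and your upper bound replaces the paper's capped-greedy sequence by $x-2$ maximal alternating steps followed by two adjustment steps of sizes $a-k\Fib(x-2)$ and $a-k\Fib(x-1)$; both variants work and yield the same count.
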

\begin{proof}
Assume that a sequence of $x$ hairpin completion operations transforms $\per^k\cdot q \cdot \rev{\per}^k$ into $\per^a \cdot q \cdot \rev{\per}^a$.
For $i\in [x]$, let $S_i$ be the string after applying the first $i$ operations.
Assume without loss of generality that the last operation in the sequence is a right hairpin completion.
Let $t$ be the minimal index in $[x]$ with $S_t = \per^a \cdot q \cdot Y$ for some prefix $Y$ of $\rev{\per}^a$.

\begin{claim*}
$Y=\rev{p}^b$ for some $b \le a$.
\end{claim*}
\begin{claimproof}
    
Clearly, the $t$th operation is a left hairpin completion due to the minimality of $t$.
Since $S_t[1..2] = 01$, and $S_{t-1}[1] \neq 1$ (due to \cref{cor:0110deletion}) it must be the case that a prefix starting with $01$ was prepended to $S_t$ in the $t$th operation.
Therefore, $S_t$ ends with $\inv{10}$.
The only prefixes of $\rev{\per}^a$ that end with $\inv{10}$ are of length $b \cdot |\per|$ for some integer $b$, the claim follows directly.
\end{claimproof}

We have shown that $S_t = \per^a \cdot q \cdot \rev{\per}^b$ for some integer $b \le a$. 
By \cref{lem:fib}, $a \le k \cdot \Fib(t+1)$.
Therefore, $t \ge \Fib^{-1}(\frac{a}{k}) - 1$.
Since the $t$th operation is a left hairpin completion, and the sequence ends with a right hairpin completion we have $x \ge t+1 =  \Fib^{-1}(\frac{a}{k})$ and therefore $\HC(\per^k\cdot  q \cdot \rev{\per}^k,\per^a\cdot  q \cdot \rev{\per}^a) \ge\Fib^{-1}(\frac{a}{k})$.

We proceed to show a sequence of $\Fib^{-1}(\frac{a}{k})$ hairpin completion operations that transform $\per^k\cdot  q \cdot \rev{\per}^k$ into $\per^a\cdot  q \cdot \rev{\per}^a$.

Our sequence is simply a greedy sequence.
For $S_i = \per^c \cdot q \cdot \rev{\per}^d$ such that $c \ge d$, we proceed the sequence with $S_{i+1} = \per^{c} \cdot q \cdot \rev{\per}^{\min(c+d,a)}$.
Symmetrically, if $c < d$ we proceed with $S_{i+1} = \per^{\min(c+d,a)} \cdot q \cdot \rev{\per}^d$.
The sequence terminates upon reaching $S_t = \per^a \cdot q \cdot \rev{\per}^a$.
Note that our sequence satisfies $S_i = \per^c \cdot q \cdot \rev{\per}^d$ such that $\min(c,d) = \min( k \cdot \Fib(i),a)$ and $\max(c,d) = 
\min( k\cdot \Fib(i+1), a)$.
It follows that our sequence terminates when $k\cdot \Fib(i) \ge a$.
By definition, this is satisfied when $i = \Fib^{-1}(\frac{a}{k})$.
\end{proof}

Let  $\per=010^{\ext}$ and $q\in010^{\int}01*11\inv{11}*\inv{100^{\int}10}$ be two substrings of $x$ for some $\int \neq \ext\in\N$.
Let $p$ be a shortest path from $v = \per^a\cdot\SL\cdot  q\cdot \SR \cdot \rev{\per}^a$ to $u = q$ in $G_x$.
We denote by $\last_{\per,q}(p)$ the last vertex in $p$ with $v_1 \in * \per\cdot\SL\cdot  q\cdot \SR \cdot \rev{\per} *$.
In words, $\last_{\per,q}(p)$ is the last vertex in $p$ that contains at least a complete occurrence of $\per$ to the left of $q$ and a complete occurrence of $\rev\per$ to the right of $q$.
When $\per$ and $q$ are clear from the context, we simply write $\last(p)$.

We prove the following lemma regarding the suffix of a path $p$ from $\last(p)$ to $u=q$.
\begin{lemma}\label{clm:shortdeletions}
    Let $\per=010^\ext$ and let $q\in010^\int01*11\inv{11}*\inv{100^\int10}$ with $\int\ne \ext \ge 3$.
    Let $p$ be a shortest path from $v$ to $u$ and let $v_1=\last(p)$.
    If the operation applied by $p$ on $v$ is a right (resp. left) hairpin deletion then every left (resp. right) hairpin deletion in the subpath from $v_1$ to $u$ is of length at most $|\per| = \ext+2$. 
\end{lemma}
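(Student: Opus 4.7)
The plan is to prove the lemma by contradiction. Assume the first operation on $v$ is a right hairpin deletion, and suppose some left deletion of length $\ell > |\per|$ is applied at some vertex $v_t$ in the subpath from $v_1$ to $u$. Because $11\inv{11}$ occurs uniquely in $v$, each intermediate vertex of $p$ contains $q$ as a substring; therefore $v_t = A_t \cdot q \cdot B_t$ with $A_t$ a suffix of $\per^a \cdot \SL$ and $B_t$ a prefix of $\SR \cdot \rev\per^a$, and the constraint $\ell \le |A_t|$ is forced (otherwise deleting into $q$ would make $u = q$ unreachable). By \cref{cor:0110deletion}, every such $v_t$ satisfies $v_t[1]=0$ and $v_t[|v_t|]=\inv 0$.

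Next, I would exploit the $\last$ property: the transition $v_1 \to v_2$ must eliminate either all complete $\per$ copies on the left or all complete $\rev\per$ copies on the right (otherwise $v_2$ would also satisfy the defining property of $v_1$, contradicting maximality). This gives Case A (a left deletion of length $a_1\cdot |\per|$, where $a_1$ is the number of complete $\per$'s at $v_1$) and Case B (a symmetric right deletion). In Case A with $a_1 = 1$, the breaking deletion has length exactly $|\per|$, which is permitted, and afterwards $|A_t| \le |\SL| = 2$ bounds all further left deletions trivially. In Case B, $|B_t| \le |\SR| + |\per| - 1 = |\per| + 2$ past $v_2$, so a putative long left deletion's matching suffix must extend into $q$; the prefix $v_t[1..\ell]$ is a substring of the $(\ext+2)$-periodic string $\per^a \cdot \SL$, while the reverse-complement of $q$'s tail $\inv{100^\int 10}$ is $(\int+2)$-periodic. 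Because $\int \ne \ext$ these periods cannot align, and combined with the $01 * \inv{10}$ invariant from \cref{cor:0110deletion}, explicit enumeration of the admissible forms of $B_t$ rules out every long match.

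The main obstacle is Case A with $a_1 \ge 2$, where the breaking deletion itself is a long left deletion of length $\ge 2|\per|$. Here I would use the hypothesis that $p$'s first operation is a right deletion, which forces at least one right operation in the subpath from $v$ to $v_1$. The long left deletion of length $a_1|\per|$ is also valid directly at $v$, since $v[1..a_1|\per|] = \per^{a_1}$ and $\rev{v[|v|-a_1|\per|+1..|v|]} = \rev{\rev\per^{a_1}} = \per^{a_1}$. Reordering $p$ by performing this deletion first at $v$ and then the right operations that originally preceded the breaking deletion (a validity-preservation argument in the spirit of \cref{lem:protDelay} shows these right operations remain valid after the prefix has been shortened) yields a path that reaches $v_2$ using strictly fewer operations than $p$, contradicting minimality. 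The delicate bookkeeping required to establish this strict gap in operation counts, together with the propagation argument for right-operation validity under prefix modification, is what I expect to be the main technical challenge.
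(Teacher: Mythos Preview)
Your reading of the hypothesis is off, and this is what creates the ``main obstacle'' you identify. The statement contains a typo: ``the operation applied by $p$ on $v$'' should read ``on $v_1$'' (this is clear both from the paper's proof, which opens with ``We prove the case where $p$ applies a right hairpin deletion to $v_1$'', and from the two places the lemma is invoked, where the assumption is always about the operation following $\last(p)$). With the correct hypothesis, your Case~A never arises: if the operation at $v_1$ is a right deletion, then by the definition of $\last$ that single right step already strips all complete copies of $\rev\per$ to the right of $q$, so every vertex after $v_1$ lies in $*q$ or $*q\cdot\SR$ (once you intersect with $*\inv{10}$). Your entire reordering argument for Case~A is therefore unnecessary.

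The paper's proof is essentially your Case~B, done directly and without any appeal to minimality of $p$. For a vertex $v'=x[i'..j']$ after $v_1$ at which a left deletion of length $d$ is applied, one first notes $v'\in 01*\inv{10}$ forces $v'\in\per^z\cdot\SL\cdot q*$ for some $z$; then $z\ge 1$ (otherwise $|\SL|=2$ bounds $d$), and $d\neq\ext+3$ (would leave a leading $1$), so assume $d\ge\ext+4$. Since $v'\in *q$ or $v'\in *q\cdot\SR$, the window $v'[j'-\ext-3..j']$ must contain one of $\inv{10^{\int+1}1}$ (if $\int<\ext$), $\inv{0^{\ext+2}}$ (if $\int>\ext$), or $\inv{1001}$ (if $v'$ ends in $\SR$); in each case the reverse complement does not occur inside $\per^z\cdot\SL$, contradicting validity of the deletion. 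This is concrete where your description (``explicit enumeration of the admissible forms of $B_t$'') remains schematic.

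Separately, even as written your Case~A argument has gaps: the step $v_1\to v_2$ need not have length exactly $a_1|\per|$ (it can overshoot into $\SL$), and the claim that reordering yields \emph{strictly} fewer operations is asserted without the bookkeeping that would establish it. But the right fix is simply to use the intended hypothesis, after which the lemma requires no optimality argument at all.
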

\begin{proof}
    We prove the case where $p$ applies a right hairpin deletion to $v_1$, the proof of the other case is symmetric.
    Consider a vertex $v'=x[i'.. j']$ in $p$ such that the following vertex in $p$ is $w' = [i'+d .. j']$.
    If $d=1$, we are done.
    Otherwise,
    \cref{obs:stableside01} suggests that  $x[i' .. i'+1 ] = 01$ and $x[j'-1..j']= \inv{10}$.
    Assume to the contrary that $d > \ext +2$.
    Note that every substring of $\per^a \cdot \SL \cdot q \cdot \SR \cdot \rev{\per}^a$ that is in $01*q*$, is also in $\per^z \cdot \SL \cdot q *$ for some $z \in [0..a]$.
    Note that if $z = 0$, a left deletion operation of length at least $\ext +2$ is impossible since $|\SL| = 2 < \ext +2$, and therefore the left deletion would delete a prefix of $q$.
    Therefore, $z \ge 1$.
    Note that it can not be the case that $d=\ext +3$, since that would result in $w'[1] = 1$, contradicting \cref{lem:01and10}.
    Therefore we have $d \ge \ext+4$.
    Since $v'$ is strictly after $v_1$ in $p$, we have $v' \notin  * q \cdot \SL \cdot \rev{\per}$.
    It follows that $v'$ satisfies either $v' \in *q$ or $v' \in * q \cdot \SR$.
    In both cases one of the following holds:
    \begin{enumerate}
        \item $v'[j' - \ext - 3 ..j']$ contains a substring $\inv{10^{\int+1} 1}$ (if $\int < \ext$ and $v'$ ends with $q$).
        \item $v'[j - \ext - 3 ..j']$ contains a substring $\inv{0^{\ext+2}}$ (if $\int > \ext $ and $v'$ ends with $q$).
        \item $v'[j' - \ext - 3 ..j']$ contains a substring $\inv{1001}$
        (if $v'$ ends with $\SR$)
    \end{enumerate}
     In all cases, $v'[j' - \ext - 3 ..j']$ contains a substring such that the reverse (of the inverse) of this substring does not occur to the left of $q$ in $v$.
     This contradicts the assumption that the hairpin deletion from $x[i'..j']$ to $x[i' +d .. j']$ is valid.
\end{proof}

The following lemma allows us to assume that $\last(p)$ has a particular structure.

\begin{lemma}\label{lem:lastkk}
Let $\per=010^\ext$ and let $q\in010^\int01*11\inv{11}*\inv{100^\int10}$ with $\int\ne \ext \ge 3$.
There exists a  shortest path $p^*$ in $G_x$ from $\per^a\cdot\SL\cdot  q\cdot \SR \cdot \rev{\per}^a$ to $q$ such that  $\last(p^*)=\per^k\cdot\SL\cdot  q\cdot \SR \cdot \rev{\per}^k$ for some $k\in[a]$.
\end{lemma}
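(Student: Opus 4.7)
Let $p$ be a shortest path from $v = \per^a\cdot\SL\cdot q\cdot\SR\cdot\rev\per^a$ to $q$, and let $v_1 = \last(p)$. By definition of $\last$, $v_1$ is a substring of $v$ containing $\per\cdot\SL\cdot q\cdot\SR\cdot\rev\per$, so $v_1 = w_L\cdot\SL q\SR\cdot w_R$ where $w_L$ is a suffix of $\per^a$ containing at least one complete $\per$ and $w_R$ is a prefix of $\rev\per^a$ containing at least one complete $\rev\per$. First, I enumerate the possible suffixes of $\per^a = (010^\ext)^a$ and prefixes of $\rev\per^a$, and apply \cref{cor:0110deletion} (which forces $v_1$ to begin with $0$, end with $\inv 0$, and satisfy $v_1[2]=1$ or $v_1[|v_1|-1]=\inv 1$) to rule out the ill-formed shapes. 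This gives $v_1 = 0^{s_1}\cdot\per^{k_1}\cdot\SL q\SR\cdot\rev\per^{k_2}\cdot\inv 0^{s_2}$ with $k_1,k_2\ge 1$ and $s_1 s_2 = 0$. The goal is to exhibit a shortest path $p^*$ with $\last(p^*) = \per^k\cdot\SL q\SR\cdot\rev\per^k$ for some $k\in[a]$.

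The proof proceeds by an exchange argument in two stages. \emph{Stage (a): eliminate the partial-period tails $0^{s_1}$ and $\inv 0^{s_2}$.} Assume WLOG $s_1\ge 1$ (so $s_2 = 0$). I consider the predecessor $v_0$ of $v_1$ in $p$; since $v_1$ begins with $0^{s_1+1}1\ldots$, which is a shifted prefix of $\per=010^\ext$, the periodicity of $\per$ implies that the hairpin deletion $v_0\to v_1$ can be extended by $s_1$ characters on the same side to arrive at $v_1' = \per^{k_1}\SL q\SR\rev\per^{k_2}$ instead. The suffix of $p$ (from $v_1$ to $q$) is then replayed from $v_1'$, producing a path of length at most $|p|$ whose $\last$ vertex has no partial-period tail on the left. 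The case $s_2\ge 1$ is handled symmetrically.

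\emph{Stage (b): balance $k_1$ and $k_2$.} Assume now $s_1 = s_2 = 0$ and WLOG $k_1 > k_2$. I would first show that the operation immediately after $v_1 = \per^{k_1}\SL q\SR\rev\per^{k_2}$ in $p$ must be a right hairpin deletion that removes exactly $\rev\per^{k_2}$ (plus at most the $\SR$ tail): any left deletion erasing the entire $\per^{k_1}$ prefix would require the suffix of $v_1$ of length $k_1|\per|$ to equal $\rev\per^{k_1}$, which fails by a direct character comparison using $\int\ne\ext$, $\ext\ge 3$, and the ending $\inv{100^\int10}$ of $q$. Hence the remainder of $p$ runs from the asymmetric substring $\per^{k_1}\SL q\SR$ to $q$, and by \cref{clm:single1char} this remainder contains at least $k_1+1$ left hairpin deletions. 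I would then construct an alternative $p^*$ that descends from $v$ to the symmetric vertex $v_1^{**} = \per^{k_2}\SL q\SR\rev\per^{k_2}$ along a shortest prefix of length $\Fib^{-1}(a/k_2)$ (using \cref{lem:hairpinfib}), and then follows a tail starting with a balanced deletion; the new tail contains only $k_2+1$ forced left deletions, saving $k_1-k_2$ operations relative to $p$.

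The main obstacle is bounding the prefix of $p$ from below by $\Fib^{-1}(a/k_2) - (k_1-k_2)$, so that the prefix trade-off is exactly compensated by the tail savings. I plan to derive this as a Fibonacci-style lower bound for the asymmetric hairpin-completion distance $\HC(v_1, v)$, extending the induction of \cref{lem:fib} to track both side counts simultaneously: after $t$ completions starting from counts $(k_1, k_2)$, the maximum side count is bounded by $k_1\Fib(t+1) + k_2\Fib(t)$, which is the two-term Fibonacci growth that yields the claimed lower bound. Combining the adjusted prefix with the rerouted tail (whose short-deletion structure is guaranteed by \cref{clm:shortdeletions}) completes the exchange, producing the required shortest path $p^*$ with symmetric $\last(p^*)$.
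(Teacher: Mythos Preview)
Your Stage~(b) contains a genuine error. You assert that when $k_1>k_2$ the step after $v_1=\per^{k_1}\SL q\SR\rev\per^{k_2}$ must be a right deletion, arguing that a left deletion of length $k_1|\per|$ is infeasible. But exiting the $\last$ region does not require erasing all of $\per^{k_1}$; it suffices to break the final copy of $\per$. Concretely, take $k_1=k_2+1$: a left deletion of length $k_2|\per|+2$ is valid (the prefix $\per^{k_2}01$ matches $\rev{\inv{10}\cdot\rev\per^{k_2}}$, the last two characters of $\SR$ being $\inv{10}$), and it leaves $0^{\ext}\SL q\SR\rev\per^{k_2}$, which no longer contains $\per\SL q\SR\rev\per$. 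So a left deletion can follow $v_1$ even when $k_1>k_2$, and your case split collapses. In effect you have only addressed the analogue of the paper's Case~1 and have omitted the configuration the paper handles as Case~2.

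The paper avoids this trap by taking the WLOG on the \emph{direction of the step after $v_1$}, not on the sign of $k_1-k_2$. Under ``next step is a right deletion'' it first argues (this also subsumes your Stage~(a)) that the deletion has length $\ge 2$, hence $v_1\in 01*\inv{10}$ and $v_1=\per^{k_1}\SL q\SR\rev\per^{k_2}$ with no partial tails. Then \cref{clm:shortdeletions} guarantees all subsequent \emph{left} deletions have length $\le|\per|$, so $p$ visits a vertex $w_{k_2}\in\per^{k_2}\SL q*$. The exchange is purely local: keep $p$ up to $v_1$, replace the $v_1\to w_{k_2}$ segment by $k_1-k_2$ left deletions of length $|\per|$ (landing at $\per^{k_2}\SL q\SR\rev\per^{k_2}$) followed by one right deletion to $w_{k_2}$, and keep the rest of $p$. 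In the remaining case $k_1<k_2$ the first right deletion already overshoots to a vertex $v_2$ ending in a partial $\rev\per$; one replaces the two-edge segment $v_1\to v_2\to v_3$ by $v_1\to \per^{k_1}\SL q\SR\rev\per^{k_1}\to v_3$ at equal cost.

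Your global plan of rebuilding the prefix from $v$ to a new symmetric $v_1^{**}$ and invoking an asymmetric Fibonacci bound is therefore unnecessary; even if that bound can be proved, you would still need to reconnect to $p$'s tail, and the reconnection is exactly what the local argument gives for free.
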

\begin{proof}
    Let $p$ be a shortest path from $v= \per^a\cdot\SL\cdot  q\cdot \SR \cdot \rev{\per}^a$ to $u = q$ in $G_x$.
    We assume that the operation that follows $v_1 = \last(p)$ in $p$ is a right hairpin deletion.
    The case in which the operation following $v_1$ is a left hairpin deletion is symmetrical.
    We present a path $p^*$ from $v$ to $u$ that is not longer than $p$ such that $\last(p^*)= \per^k\cdot\SL\cdot  q\cdot \SR \cdot \rev{\per}^k$ for some positive integer $k \in [a]$.
    As a first step, we prove the following claim.

    \begin{claim}
    $v_1=\per^{k_1}\cdot\SL\cdot  q\cdot \SR \cdot \rev{\per}^{k_2}$ for some $k_1,k_2\in\N$.
    \end{claim}
    \begin{claimproof}
        We start by showing that the right hairpin deletion applied to $v_1$ in $p$ is not of length $1$.
        If $v_1[|v_1| - 1] = \inv{1}$, a right hairpin deletion with length $1$ would result in a string ending in $\inv{1}$, contradicting \cref{lem:01and10}.
        Otherwise, by \cref{lem:01and10} we have $v_1[|v_1| - 1..|v_1| ] = \inv{00}$. 
        Therefore, a right deletion of a single $\inv{0}$ could not result in a string not in $* \per\cdot\SL\cdot  q\cdot \SR \cdot \rev{\per} *$,  contradicting the maximality of $v_1$ as $\last(p)$.
        Therefore, the right hairpin deletion applied to $v_1$ has length at least $2$. 
        
        Since a hairpin deletion operation of length at least $2$ can be applied to $v_1
        $, \cref{lem:ret0110}, implies that $v_1 \in 01 * \inv{10}$.
        Recall that  $v_1$ is a substring of $ \per^a\cdot\SL\cdot  q\cdot \SR \cdot \rev{\per}^a$ and $v_1 \in * \per\cdot\SL\cdot  q\cdot \SR \cdot \rev{\per}*$ with $\per=010^\ext $.
        Every string that satisfies both constraints and is in $01 * \inv{10}$ is of the form $\per^{k_1} \cdot \SL \cdot q \cdot \SR \cdot \rev{\per}^{k_2}$ for some integers $k_1$ and $k_2$, as required.
    \end{claimproof}

    If $k_1=k_2$ we are finished by setting $p^* = p$.
    Otherwise, we distinguish between two cases:

\para{Case 1 $k_1>k_2$:}
By \cref{clm:shortdeletions,lem:ret0110}, for every $k<k_1$ the path $p$ must visit a vertex $w_k\in\per^k\cdot\SL\cdot q *$ after visiting $v_1$.
In particular $p$ visits $w_{k_2}=\per^{k_2}\cdot\SL\cdot q *$.
The path $p^*$ is composed of three parts. 
The prefix of $p^*$ is identical to the prefix of $p$ from $v$ to $v_1$.
The suffix of $p^*$ is identical to the suffix of $p$ from $w_{k_2}$ to $u$.
Now, we describe the subpath $p^*_{\middle}$  of $p^*$ from $v_1$ to $w_{k_2}$ that replaces the subpath $p_{\middle}$ of $p$.
The path $p^*_{\middle}$ begins with $k_1-k_2$ left hairpin deletion operations, each one of them is of length $|\per|=\ext+2$.
We point out that at this point $p^*$ visits the node $v'=\per^{k_2} \cdot \SL \cdot q \cdot \SR \cdot \rev{\per}^{k_2}$.
Then, $p^*$ move from $v'$ to $w_{k_2}$ in a single step.
Notice that such a right hairpin deletion operation is allowed since $\rev{\per^{k_2}\cdot\SL\cdot 0}=\SR\cdot\rev{\per}^{k_2}$.
Note that $\cost(p^*_{\middle})\le\cost(p_{\middle})$ since by \cref{clm:shortdeletions}, the number of left hairpin deletions in $p_{\middle}$ is at least $k_1-k_2$, and there is at least one right hairpin deletion in $p_{\middle}$ (following $v_1$).
Thus, $\cost(p_{\middle})\ge k_1-k_2+1=\cost(p^*_{\middle})$.
Finally, $\cost(p^*)\le\cost(p)$ and $\last(p^*) = \per^{k_2} \cdot \SL \cdot q \cdot \SR \cdot \rev{\per}^{k_2}$, as required.

\para{Case 2 $k_1< k_2$:}
Denote by $\len$ the length of the right hairpin deletion applied to $v_1$ in $p$.
Since $v_1 = \last(p)$, we have $\len > (k_2-1)\cdot |\per| \ge k_1 \cdot |\per| $.
Since $k_1 < k_2$,  the string $v_1$ has a prefix $\per^{k_1} \cdot 0101$ and a suffix $\inv{0010} \cdot \rev{\per}^{k_1}$.
Therefore, $\len \le k_1 \cdot |\per| + 3$.
Let $v_2$ be the vertex following $v_1$ in $p$.
It follows from the above discussion that $v_2 =  \per^{k_1} \cdot \SL \cdot q \cdot \SR  \cdot \rev{\per}'$ for some proper prefix $\rev{\per}'$ of length at least $|\per| - 3$ of $\rev{\per}$. 
Due to \cref{lem:01and10}, $v_2$ can not end with $1$, so $|\rev{\per}'| \le |\per| - 2$.
In particular, $v_2$ has a suffix $\inv{0^{\ext}}$ , more specifically
\footnote{In the symmetric case where the operation following $\last(p)$ is a \emph{left} hairpin deletion, in the this case ($k_1>k_2$) we get that $\last(p)$ has a prefix $\per^{k_2}\cdot01000$ and a suffix $\inv{10010}\cdot\rev{\per}^{k_2}$. Therefore $\len\le k_2\cdot|\per|+4$. $\per'$ is a suffix of $\per$ of length at least $|\per|-4$. Thus $v_2$ has a prefix $0^{\ext-1}$ and in particular a prefix $0^2$.}
, $v_2$ has a suffix $\inv{0^3}$.
Due to \cref{lem:ret0110}, we have that the vertex following $v_2$ in $p$ is $v_3 =  \per^{k_1} \cdot \SL \cdot q \cdot \SR  \cdot \rev{\per}'[1..|\rev{\per'}| -1]$.

We are now ready to describe $p^*$.
The path $p^*$ is almost identical to $p$, except replacing the subpath of length $2$ from $v_1$ to $v_3$ with the following subpath:
$v_1$ is followed by $v'_2 = \per^{k_1} \cdot \SL \cdot q \cdot \SR \cdot \rev{\per}^{k_1}$.
The vertex $v'_2$ is then followed by $v_3$.
Note that there is an edge from $v_1$ to $v'_2$ since it requires a right hairpin deletion operation with length at most $\len$, and a right hairpin deletion from $v_1$ of length $\len$ is possible. 
Additionally, we claim that there is an edge from $v'_2$ to $v_3$ in $G_x$.
This is true since $v'_2$ has a prefix $\per^{k_1} \cdot \SL \cdot 0$, and a suffix $\SR \cdot \rev{\per}^{k_1}$, so a right hairpin deletion with length $d$ is possible from $v'_2$ for every $d \le k_1 \cdot |\per| +3$.
Note that in order to reach $v_3$ from $v'_2$, a right hairpin deletion with a length strictly less than $k_1 \cdot |\per|$ is required.
Clearly, we have $\cost(p^*) = \cost(p)$ and $\last(p^*) = v'_2 = \per^{k_1} \cdot \SL \cdot q \cdot \SR \cdot \rev{\per}^{k_1} $ as required.
\end{proof}

With that, we are finally ready to prove \cref{lem:FibDistance}.
\FibDistanceLemma*
\begin{proof}
Let $p$ be a shortest path from $v = \per^a\cdot\SL\cdot  q\cdot \SR \cdot \rev{\per}^a$ to $u = q$ in $G_x$.
By \cref{lem:lastkk}, we can assume that $\last(p)= v_1 = \per^k\cdot\SL\cdot  q\cdot \SR \cdot \rev{\per}^k$ for some $k\in[a]$.
Let $p_1$ be the prefix of $p$ from $v$ to $\last(p)$ and let $p_2$ be the suffix of $p$ from $\last(p)$ to $u$.
We analyze the cost of $p_2$ via the following claim.
\begin{claim*}
    $\cost(p_2) = 2 + (1 + \max(\ext-\int - 1,0)) \cdot k$
\end{claim*}  

\begin{claimproof}
We assume that the operation following $v_1$ in 
$p$ is a right hairpin deletion.
The case in which the operation following $v_1$ is a left hairpin deletion is symmetrical.

Let $v_2$ be the vertex that immediately follows $v_1$ in $p$.
From $v_1= \last(p)$ we have that $v_2 = \per^{k} \cdot \SL \cdot q \cdot Y$ for some proper prefix $Y$ of $\SR \cdot \rev{\per}$.
From \cref{clm:shortdeletions,lem:ret0110}, we have that for every $j \in [0.. k]$, a vertex $w_j \in \per^j \cdot \SL \cdot q * $ is visited by $p_2$.
Consider the subpath from $w_{j+1}$ to $w_{j}$ for some $j \in [k]$ (here, we denote as $w_{j}$ and $w_{j+1}$ as the first vertices in $p$ with this property).
Clearly, the length of this subpath is at least $1$.
Recall that $w_{j+1}$ appears after $\last(p)$ in $p$, so the suffix of $w_{j+1}$ following $q$ does not contain a full occurrence of $\rev{\per}$. 
It follows that if $\ext > \int $, the maximal prefix of $\per$ that can be a reverse (of an inverse) of a suffix of $w_{j+1}$ is $010^{\int+1}$.
Therefore the first left hairpin deletion applied after $w_{j}$ in $p$ is of length at most $\int+3$, which results in a string with a prefix $0^{(\ext+1)-(\int+1)} = 0^{\ext-\int}$ if $\ext > \int$.
It follows from \cref{lem:ret0110} that the next $\ext- \int - 1$ operations would be left hairpin deletions, each of a single $0$ character until $w_{j+1}$ is reached.
Finally, at least one more left hairpin deletion operation is required from $w_{0}$ to delete $\SL$. 
It follows from the above that $\cost(p_2) \ge 2 + (1 + \max(\ext-\int - 1,0) )\cdot k$.

Note that a path $p^*_2$ from $v_1$ to $u$ with this cost is obtainable as follows.
First, $p^*_2$ deletes a suffix of length exactly $k \cdot |\per| + |\SR|$ from $v_1$ (note that the suffix of $v_1$ of this length is exactly the reverse (of the inverse) of the prefix of $v_1$ of the same length).
Then, $p^*_2$ deletes every occurrence of $\per$ to the left of $\SL$ separately by deleting $010^{\int+1}$, and then deletes the remaining $\min(\ext-\int - 1,0)$ zeros until the next occurrence of $\per$ one by one.
Finally,  $p^*_2$ deletes $\SR$ in a single operation, reaching $u=q$.
\end{claimproof}

It follows from the claim that $\cost(p_2) = 2+(1+\max(\ext-\int-1),0)\cdot k$ and it follows from \cref{lem:hairpinfib} that $\cost(p_1) = \Fib^{-1}(\frac{a}{k})$.
So the overall cost of $p$ is $\cost(p) = 2 + (1+\max(\ext-\int-1,0))\cdot k +\Fib^{-1}(\frac{a}{k})$.
Consider an alternative path $p^*$ defined as follows.
\begin{enumerate}
    \item \label{pstartstep:1} Step 1: The prefix of $p^*$ is a shortest path from $v$ to $\per \cdot  \SL \cdot q \cdot \SR \cdot \rev{\per}$ in $G_x$.
    
    \item \label{pstartstep:2} Step 2:  $p^*$ applies one right\footnote{There exists another path of the same length starting with left hairpin deletion operation.} hairpin deletion operation of length $|\per| + |\SR|$ to delete $\SR \cdot \rev{\per}$.

    \item \label{pstartstep:3} Step 3: $p^*$ applies a left hairpin deletion to delete $010^{\min(\ext,\int+1)}$. 
    Resulting in the string $0^{\max\{\ext-\int-1,0\}}\cdot\SL\cdot q$.
    
    \item \label{pstartstep:4} Step 4:  $p^*$ applies $\max(\ext-\int-1,0)$ left hairpin deletion operations, each removing a single copy of $0$ from the left, reaching the string $\SL\cdot q$.
    
    \item \label{pstartstep:5}Step 5: $p^*$ applies a length $|\SL|$ length hairpin deletion to delete $\SL$ resulting in $q$. 
\end{enumerate}

According to \cref{lem:fib} the cost of \cref{pstartstep:1} is $\Fib^{-1}(a)$.
The cost of \cref{pstartstep:2,pstartstep:3,pstartstep:4,pstartstep:5} is $\max(\ext-\int-1,0) +3$.

In particular, we have 
\begin{align*}
\cost(p^*) &= 3 + \max(\ext-\int-1,0) + \Fib^{-1}(a) \\&\overset{(*)}{\le} 3 + \max(\ext-\int-1,0) + k - 1 + \Fib^{-1}(\frac{a}{k}) \\&\le 
2 + (\max(\ext-\int-1,0) + 1) \cdot k + \Fib^{-1}(\frac{a}{k}) = \cost(p)
\end{align*}
where the first inequality $(*)$ holds due to $a$ being a Fibonacci-regular number.

In conclusion, we have shown that $p^*$ is an optimal path in $G_x$ from $\per^a \cdot \SL \cdot q \cdot \SR \cdot \rev{\per}^a$ to $q$ with cost $3 + \max(\ext-\int-1,0) + \Fib^{-1}(a)$, as required.
\end{proof}

\end{document}